\definecolor{color1}{RGB}{0,  113.9850,  188.9550}
\definecolor{color2}{RGB}{216.7500,   82.8750,   24.9900}
\definecolor{color3}{RGB}{236.8950,  176.9700,   31.8750}
\definecolor{color4}{RGB}{125.9700,   46.9200,  141.7800}
\definecolor{color5}{RGB}{161.9250,   19.8900,   46.9200}
\newlength{\plotWidth}
\newlength{\plotHeight}
\newlength{\vSpace}
\newlength{\hSpace}
\newcommand{\total}[2]{{\textstyle\sum\limits_{#1}^{#2}}}
\newcommand{\ttotal}[2]{{\textstyle\sum\hspace{-0.2em}{}_{#1}^{#2}}}
\newcommand{\tproduct}[2]{{\textstyle\prod\hspace{-0.2em}{}_{#1}^{#2}}}
\newcommand{\union}[2]{{\textstyle\bigcup\limits_{#1}^{#2}}}
\newcommand{\process}[4]{(#1_{#2})_{#2=#3}^{#4}} 
\newcommand{\processdef}[4]{#1=\process{#1}{#2}{#3}{#4}} 
\setlist[enumerate]{leftmargin=.5in}
\setlist[itemize]{leftmargin=.5in}
\theoremstyle{definition}
\newtheorem{theorem}{Theorem}
\numberwithin{theorem}{section}
\newtheorem{proposition}{Proposition}
\numberwithin{proposition}{section}
\newtheorem{construction}{Construction}
\numberwithin{construction}{section}
\newtheorem{example}{Example}
\numberwithin{example}{section}
\theoremstyle{remark}
\newtheorem{remark}{Remark}
\numberwithin{remark}{section}
\numberwithin{equation}{section}
\title{\textbf{Optimal investment and contingent claim valuation with exponential disutility under proportional transaction costs}}
\author{Alet Roux\thanks{Department of Mathematics, University of York, Heslington, YO10 5DD, United Kingdom. Email address: alet.roux@york.ac.uk}
\and Zhikang Xu\thanks{Willis Towers Watson, 51 Lime St, London, EC3M 7DQ, United Kingdom. Email: zk.xu@outlook.com. Most of the research presented in this paper was conducted while this author was a PhD student in the Department of Mathematics, University of York.}}
\DeclareMathOperator{\conv}{conv}
\DeclareMathOperator{\relint}{ri}
\DeclareMathOperator{\dom}{dom}
\DeclareMathOperator{\epigraph}{epi}
\DeclareMathOperator{\cone}{cone}
\DeclareMathOperator{\cl}{cl}
\DeclareMathOperator{\interior}{int}
\newcolumntype{d}[1]{D{.}{.}{#1}}
\begin{document}

\maketitle

\begin{abstract}
  We consider indifference pricing of contingent claims consisting of payment flows in a discrete time model with proportional transaction costs and under exponential disutility. This setting covers utility maximisation as a special case. A dual representation is obtained for the associated disutility minimisation problem, together with a dynamic procedure for solving it. This leads to efficient and convergent numerical procedures for indifference pricing, optimal trading strategies and shadow prices that apply to a wide range of payoffs, a large range of time steps and all magnitudes of transaction costs.

\emph{Keywords}: transaction costs, option pricing, utility maximisation, entropy, indifference pricing, generalised convex hull, dynamic programming
\end{abstract}


\section{Introduction}

The price of a contingent claim in a complete market is uniquely determined by the principle of replication: it is the discounted expectation of the claim price under the (unique) martingale measure. However, the presence of transaction costs can lead to the curious contradiction that superreplicating a claim may involve less trading (and lower transaction costs) than exact replication, and therefore be less expensive, so that the replication price can in fact lead to arbitrage. Furthermore, financial markets with transaction costs generally admit many different martingale measures, leading to intervals of no-arbitrage claim prices. This means that subjective factors, such as an investor's risk appetite, come into play when determining the price of a claim. The indifference principle offers a compelling alternative to replication and arbitrage pricing: it states that the seller of a claim will charge (at least) a price that will allow him to sell the claim without increasing the risk of his existing financial position. This is called the \emph{indifference price}. As a special case, the \emph{reservation price} is a price that would have allowed the seller to cover a claim at an acceptable level of risk, had their existing position been zero (in other words, not taking it into account). This is often associated with the terms ``economic capital'' in banking, and ``technical provisions'' or ``reserving'' in insurance. 

Indifference pricing based on utility maximisation has been well studied in the literature on proportional transaction costs. Work in continuous time has mostly focused on adapting stochastic optimal control and other techniques from friction-free models (such as the Black-Scholes model), and in recent years have led to numerical approximation and asymptotics for small transaction costs; see the works by \citet{bichuch2014pricing,Pricing1997Davis,davis1993european,hodges1989optimal,kallsen2015option,monoyios2003efficient,monoyios2004option,whalley1997asymptotic}, for example. Results obtained in continuous time models typically assume continuous trading, which limits their applicability in realistic settings \citep{Dorfleitner_Gerer2016}, hence motivating the need for continued theoretical and numerical work in the discrete time setting. 

The present paper is motivated by the work of \citet{pennanen2014optimal}, who studied indifference pricing in a very general discrete time setting, including proportional transaction costs. In view of the fact that financial liabilities in banking and insurance often consist of sequences of payment streams, such as swaps, coupon paying bonds, insurance premia, etc, \cite{pennanen2014optimal} extended the classical utility maximisation framework, which focuses on the expected disutility of hedging shortfall at the expiration date of the liability faced by an investor (and insists on self-financing trading at other times), to a more flexible framework which allows hedging to fall short at intermediate steps, takes into account the expected total disutility of hedging shortfall at all steps, and presents theoretical results for contingent claims consisting of cash payment streams and a very general class of disutility functions.

Allowing hedging to fall short at intermediate time steps means that there is also a connection between the current work and another important strand in the transaction cost literature, namely maximising utility from consumption. An important notion in the study of these problems is the \emph{shadow price}, which is a price process taking values in the bid-ask spread of the model with proportional transaction costs, with the property that maximising expected utility from consumption in the friction-free model with this price process, leads to the same maximal utility as in the original market with transaction costs. \citet{Kallsen_Muhle-Karbe2011} and \citet{Rogala_Stettner2015} showed that shadow prices exist in discrete time in a similar (though incompatible) technical setting to the current paper. Working in general discrete time models, \citet{Czichowsky_Muhle-Karbe_Schachermayer2014} demonstrated that there is a link between the solution to the dual problem, and the existence of a shadow price. The existence of shadow processes in more general models is by no means guaranteed. Additionally, shadow prices may not be tractable, leading to the use of asymptotic expansions and/or restrictions in the magnitude of transaction costs. In the context of continuous-time models, see the earlier paper of \cite{cvitanic1996hedging}, as well as more recent contributions by \cite{kallsen2010using}, \cite{Gerhold_Muhle-Karbe_Schachermayer2013}, \cite{Gerhold_Guasoni_Muhle-Karbe_Schachermayer2014}, \cite{Herczegh_Prokaj2015}, \cite{Czichowsky_Schachermayer_Yang2017}, \cite{Czichowsky_Schachermayer2016,Czichowsky_Schachermayer2017}, \cite{Lin_Yang2017} and \cite{Gu_Lin_Yang2017}.

The present paper specialises the model of \citet{pennanen2014optimal} to exponential utility and proportional transaction costs, which allows the use of powerful dual methods, and finite state space, motivated by the need for numerical results. Our results apply to contingent claims with physical delivery (in other words, streams of portfolios rather than just cash). We propose a backward recursive procedure that can be used to solve the utility maximisation problem and compute indifference prices, together with an efficient and convergent numerical approximation method (with error bounds). This procedure has polynomial running time in recombinant models and for path-independent claims, and does not require the construction of a shadow price process, which is in general path-dependent (a known difficulty in models with proportional transaction costs). Nevertheless, the outputs from this procedure can be used to construct a shadow price process and accompanying martingale measure, together with an optimal hedging strategy. This latter construction is performed by (forward) induction, which makes it practical for studying individual scenarios, despite the path-dependence of the objects that are being studied. Our results apply to all magnitudes of transaction costs, and our numerical methods work for a large range of time steps; \cite{Xu2018Pricng} reported a number of more demanding numerical results that have not been included in this paper for lack of space.

The results reveal interesting features of disutility minimisation problems and indifference prices. In particular, because asset holdings in the model can be carried over between different time periods, the value of the disutility minimisation problem of an investor faced with delivering a portfolio stream depends only on the total payment involved in the stream (suitably discounted), which implies that indifference prices also depend only on the total payment due. Nevertheless, the additional flexibility offered by allowing hedging to fall short at time periods other than the final time leads to smaller spreads in indifference prices, when compared to utility indifference pricing spreads. Our numerical results further suggest that there is a complex relationship between disutility indifference prices and the real-world measure.

The numerical methods and examples work reported in this paper extend and complement the limited work in the literature for discrete time models with proportional transaction costs. The results on disutility minimisation generalise the results of \citet{castaneda2011utility} in a one-step binomial model with proportional transaction costs. To put the power of the numerical methods into context, previously reported numerical results are limited to European put options in a $3$-step Cox-Ross-Rubinstein binomial model with convex transaction costs and exponential utility \citep{cetin2007modeling}, utility indifference prices of a European call option under exponential utility in a binomial tree model with $6$ steps and proportional transaction costs \citep{quek2012portfolio}, and numerical solution of utility maximisation problems under power utility with multiple assets and proportional transaction costs \citep{cai2013numerical}.

Whilst we restrict our attention to indifference prices (payable at time $0$ in cash) rather than indifference swap rates \citep[used by][]{pennanen2014optimal} for brevity, we believe that the extension is straightforward \citep[preliminary work reported by][]{Xu2018Pricng}. We believe that our work can be generalised to include measuring hedging shortfall in terms of portfolios rather than just cash; this is the subject of ongoing research, as is application of these methods to other classes of utility functions and multi-asset models.

The paper is arranged as follows. Background information on arbitrage and superhedging in discrete time models with proportional transaction costs is collected in Section~\ref{sec:intro}. The disutility minimisation problem that forms the basis of the indifference pricing framework is introduced in Section~\ref{sec:Disutility-minimisation}; this includes utility maximisation as a special case. In Section~\ref{sec:The-dual-problem} we derive a Lagrangian dual formulation for the disutility minimisation problem. Indifference prices are introduced in Section~\ref{sec:Indifference-pricing}, together with arbitrage pricing bounds. A dynamic procedure for solving the disutility minimisation problem and computing indifference prices is presented in Section~\ref{sec:Solution-construction}, together with a procedure for constructing the shadow price. A procedure for constructing optimal hedging strategies is presented in Section \ref{sec:optimal-injection-investment}. Section~\ref{sec:numerical-examples} contains a number of illustrative numerical examples. \ref{sec:conv-hull} summarises a number of properties of a generalisation of the convex hull of convex functions that appears in the dynamic procedure of Section~\ref{sec:Solution-construction}; this includes a numerical approximation by piecewise linear functions, complete with error bound. Proofs of all results in the main part of the paper appear in Appendix \ref{app:proofs}.

\section{Preliminaries}
\label{sec:intro}

\subsection{Discrete-time model with proportional transaction\\costs}
\label{sec:intro:model}

In this paper we consider a discrete-time financial market model with a finite time horizon $T\in\mathbb{N}$ and trading dates $t=0,\ldots,T$ on a finite probability space $(\Omega,\mathcal{F},\mathbb{P})$ equipped with a filtration $\process{\mathcal{F}}{t}{0}{T}$. We assume without loss of generality that $\mathcal{F}_0=\{ \Omega,\emptyset\} $, $\mathcal{F}_T=\mathcal{F}=2^{\Omega}$ and $\mathbb{P}(\omega)>0$ for all $\omega\in\Omega$. For each $t$, the collection of atoms of $\mathcal{F}_t$ is denoted by $\Omega_t$. The elements of $\Omega_t$ are called the \emph{nodes} of the model at time~$t$, and they form a partition of $\Omega$. For each $\omega$ and $t$, denote by $\omega_t$ the unique node $\nu\in\Omega_t$ such that $\omega\in\nu$. A node $\nu\in\Omega_{t+1}$ is said to be a \emph{successor} of a node $\mu\in\Omega_t$ if $\nu\subseteq\mu$. Denote the collection of successors of any given node $\mu\in\Omega_t$ by $\mu^+$, and define the transition probability from $\mu$ to any successor node $\nu\in\mu^+$ by $p_{t+1}^\nu\coloneqq\tfrac{\mathbb{P}(\nu)}{\mathbb{P}(\mu)}$.

For every $t$ and $d=1,2$, let $\mathcal{L}_t^{d}$ be the space of $\mathbb{R}^{d}$-valued $\mathcal{F}_t$-measurable random variables. Every random variable $x\in\mathcal{L}_t^{d}$ satisfies $x(\omega)=x(\omega^{\prime})$ for all $\omega,\omega^{\prime}\in\nu$ on every node $\nu\in\Omega_t$, and this common value is denoted $x^\nu$. A similar convention applies to $\mathcal{F}_t$-measurable random functions $f:\Omega\times\mathbb{R}^d\rightarrow\mathbb{R}$. Let $\mathcal{N}^{d}$ be the space of adapted $\mathbb{R}^{d}$-valued processes. We write $\mathcal{L}_t=\mathcal{L}_t^1$ and $\mathcal{N}=\mathcal{N}^1$ for convenience. For $d=2$ we will adopt the convention that the first and second components of any random variable $c\in\mathcal{L}_t^2$ or process $c\in\mathcal{N}^2$ are denoted $c^b$ and~$c^s$, respectively.

The financial market model consists of a risky and risk-free asset. The price of the risk-free asset, \emph{cash}, is constant and equal to $1$ at all times. This is equivalent to assuming that interest rates are zero, or that asset prices are discounted. Trading in the
risky asset, the stock, is subject to proportional transaction
costs. At any time step $t$, a share of the stock can be bought for the ask price $S_t^a$ and sold for the bid price $S_t^b$, where $S_t^a\geq S_t^b>0$. We assume that $\processdef{S^a}{t}{0}{T}\in\mathcal{N}$ and $\processdef{S^b}{t}{0}{T}\in\mathcal{N}$.

The cost of creating a portfolio $x=(x^b,x^{s})\in\mathcal{L}^2_t$ at any time $t$ is
\begin{equation} \label{def:phi}
\phi_t(x)\coloneqq x^b+x_{+}^{s}S_t^a-x_{-}^{s}S_t^b,
\end{equation}
where $z_{+}\coloneqq\max\{z,0\} $ and $z_{-}\coloneqq-\min\{z,0\}$ for all $z\in\mathbb{R}$. The liquidation value of the portfolio $x$ is
$x^b+x_{+}^{s}S_t^b-x_{-}^{s}S_t^a=-\phi_t(-x).$
Define the \emph{solvency cone} $\mathcal{K}_t$ at any time $t$ as the collection of portfolios that can be liquidated into a nonnegative cash amount, in other words,
\[
 \mathcal{K}_t \coloneqq \big\{x\in\mathcal{L}^2_t:-\phi_t(-x)\ge0\big\} = \big\{(x^b,x^s)\in\mathcal{L}^2_t:x^b+x^{s}S_t^b\ge0, x^b+x^{s}S_t^a\ge0\big\}.
\]

A trading strategy $\processdef{y}{t}{-1}{T}$ is an adapted sequence of portfolios, where $y_{-1}\in\mathcal{L}^2_0$ denotes the initial endowment at time $0$, the portfolio $y_t\in\mathcal{L}^2_t$ is held between time steps $t$ and $t+1$ for every $t=0,\ldots,T-1$, and $y_T\in\mathcal{L}^2_T$ is the terminal portfolio created at time $T$. Denote the collection of trading strategies by~$\mathcal{N}^{2\prime}$, and define
\[
 \Delta y_t \coloneqq y_t - y_{t-1} \text{ for all }t\ge0.
\]
A trading strategy $y\in\mathcal{N}^{2\prime}$ is called \emph{self-financing} if $-\Delta y_t\in\mathcal{K}_t$ for all $t$. The collection of self-financing trading strategies is defined as
\[
 \Phi \coloneqq \left\{y\in\mathcal{N}^{2\prime}:-\Delta y_t\in\mathcal{K}_t\ \forall t\right\}.
\]
We will also frequently consider the class of trading strategies that start and end with zero holdings (and are not necessarily self-financing). This class of trading strategies is denoted by
 \[
 \Psi \coloneqq \left\{y\in\mathcal{N}^{2\prime}:y_{-1}=0,y_T=0\right\}.
\]

\subsection{Arbitrage and duality}

There is a connection between the absence of arbitrage and the existence of classes of objects that appear in the study of disutility minimisation problems. To this end, define
\begin{align}
\bar{\mathcal{P}} & \coloneqq\big\{ (\mathbb{Q},S):\mathbb{Q}\ll\mathbb{P},\thinspace S\text{ a }\mathbb{Q}\text{-martingale},\thinspace S_t^b\leq S_t\leq S_t^a\thinspace\forall t\big\}, \label{eq:def_Pbar} \\
\mathcal{P} & \coloneqq\big\{ (\mathbb{Q},S):\mathbb{Q}\sim\mathbb{P},\thinspace S\text{ a }\mathbb{Q}\text{-martingale},\thinspace S_t^b\leq S_t\leq S_t^a\thinspace\forall t\big\}. \nonumber
\end{align}
We shall refer to the elements of $\bar{\mathcal{P}}$ ($\mathcal{P}$)
as (\emph{equivalent}) \emph{martingale pairs}. Observe that $\mathcal{P}\subseteq\bar{\mathcal{P}}$.

The \emph{no-arbitrage condition} is equivalent to the existence of a martingale pair. The definition~\eqref{eq:NA} is consistent with that of \citet[Def.~1.6]{schachermayer2004fundamental} and equivalent, though formally different, to the notion of weak no-arbitrage introduced by \cite{kabanov2001harrison}. 

\begin{proposition}[{\citet[Theorem~1]{kabanov2001harrison}}]
 The no-arbitrage condition 
 \begin{equation} \label{eq:NA}
  \left\{y_T:y\in\Phi,y_{-1}=0\right\} \cap \left\{z\in\mathcal{L}^2_T:z\ge0\right\} = \{0\}
 \end{equation}
 holds if and only if~$\mathcal{P}\neq\emptyset$.
\end{proposition}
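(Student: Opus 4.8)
The plan is to reduce everything to a finite-dimensional separation argument, using that $\mathcal{F}_T=2^\Omega$, so $\mathcal{L}^2_T\cong\mathbb{R}^{2|\Omega|}$. I would write a generic element of $\mathcal{L}^2_T$ as a pair $(a,c)$ of functions on $\Omega$ (the $b$- and $s$-components, in the paper's convention) and set $A\coloneqq\{y_T:y\in\Phi,\,y_{-1}=0\}$. The first observation is that $A$ is a polyhedral convex cone — it is the image under the linear map $y\mapsto y_T$ of the polyhedral cone $\{y\in\mathcal{N}^{2\prime}:y_{-1}=0,\ -\Delta y_t\in\mathcal{K}_t\ \forall t\}$ — and hence closed. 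Condition~\eqref{eq:NA} then reads $A\cap\{z\in\mathcal{L}^2_T:z\ge0\}=\{0\}$. One implication is soft, the other is the separation step; although the equivalence is classical, I would reprove it here to pin down concrete elements of $\mathcal{P}$.

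For $\mathcal{P}\neq\emptyset\Rightarrow$~\eqref{eq:NA}, I would fix $(\mathbb{Q},S)\in\mathcal{P}$ and $y\in\Phi$ with $y_{-1}=0$, and track $V_t\coloneqq y_t^b+y_t^sS_t$. The key step is that $-\Delta y_t\in\mathcal{K}_t$ — that is, $-\Delta y_t^b-\Delta y_t^sS_t^b\ge0$ and $-\Delta y_t^b-\Delta y_t^sS_t^a\ge0$ — together with $S_t^b\le S_t\le S_t^a$ (so that $S_t$ is a convex combination of $S_t^b$ and $S_t^a$) forces $\Delta y_t^b+\Delta y_t^sS_t\le0$. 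Writing $V_t=y_{t-1}^b+y_{t-1}^sS_t+(\Delta y_t^b+\Delta y_t^sS_t)$ and using the $\mathbb{Q}$-martingale property of $S$ then shows that $V$ is a $\mathbb{Q}$-supermartingale with $V_0\le0$, so $\mathbb{E}_{\mathbb{Q}}[V_T]\le0$. If in addition $y_T\ge0$, then $V_T=y_T^b+y_T^sS_T\ge0$ because $S_T\ge S_T^b>0$, hence $\mathbb{E}_{\mathbb{Q}}[V_T]=0$ and, as $\mathbb{Q}\sim\mathbb{P}$, $V_T=0$ everywhere; together with $y_T\ge0$ and $S_T>0$ this forces $y_T=0$, which is~\eqref{eq:NA}.

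For the converse, I would assume~\eqref{eq:NA} and pass to $B\coloneqq A-\{z\in\mathcal{L}^2_T:z\ge0\}$, still a closed polyhedral convex cone containing $A$; a one-line check shows~\eqref{eq:NA} forces $B\cap\{z\in\mathcal{L}^2_T:z\ge0\}=\{0\}$. The simplex $\Sigma\coloneqq\{z\in\mathcal{L}^2_T:z\ge0,\ \mathbb{E}_{\mathbb{P}}[z^b+z^s]=1\}$ is compact and disjoint from $B$, so strict separation produces a nonzero $(a,c)\in\mathcal{L}^2_T$ with $\mathbb{E}_{\mathbb{P}}[az^b+cz^s]\le0$ for all $z\in B$ (the separating constant may be taken to be $0$, since $B$ is a cone containing $0$) and $\mathbb{E}_{\mathbb{P}}[az^b+cz^s]>0$ for all $z\in\Sigma$; testing the latter on point masses gives $a>0$ and $c>0$ everywhere. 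I would then set $\tfrac{d\mathbb{Q}}{d\mathbb{P}}\coloneqq a/\mathbb{E}_{\mathbb{P}}[a]$ (so $\mathbb{Q}\sim\mathbb{P}$), $S_T\coloneqq c/a$, and $S_t\coloneqq\mathbb{E}_{\mathbb{Q}}[S_T\mid\mathcal{F}_t]$, which is automatically a $\mathbb{Q}$-martingale. It remains to verify $S_t^b\le S_t\le S_t^a$ for every $t$: for a node $\mu\in\Omega_t$ and $\varepsilon>0$ I would feed into the relation ``$\mathbb{E}_{\mathbb{P}}[az^b+cz^s]\le0$ for all $z\in A$'' the strategy that holds nothing up to time $t$, then at $\mu$ acquires the portfolio $(-\varepsilon S_t^a,\varepsilon)\mathbbm{1}_\mu$ and keeps it until time $T$; a direct check gives $-\Delta y_s\in\mathcal{K}_s$ for all $s$ and $y_{-1}=0$, so its terminal value lies in $A$, and dividing the resulting inequality by $\varepsilon$ and using $\mathbb{E}_{\mathbb{Q}}[S_T\mathbbm{1}_\mu]/\mathbb{Q}(\mu)=S_t^\mu$ gives $S_t\le S_t^a$ on $\mu$; the mirror ``sell $\varepsilon$ shares at $\mu$ and carry'' strategy gives $S_t\ge S_t^b$ on $\mu$. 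Ranging over nodes and times yields $S_t^b\le S_t\le S_t^a$ for all $t$, so $(\mathbb{Q},S)\in\mathcal{P}$.

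I expect the main obstacle to be this last direction — producing a price process that is at once a $\mathbb{Q}$-martingale and admissible for the bid--ask spread at every time and node. The separation only hands over the terminal data $(a,c)$, and the real work is in selecting the ``buy/sell at a node and carry'' test strategies that translate it into the running bounds $S_t^b\le S_t\le S_t^a$. By contrast, the usual technical nuisance in such arguments, closedness of $A$ and of $B$, is immediate here since $\Omega$ is finite and every set in sight is polyhedral.
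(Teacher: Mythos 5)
Your proof is correct. Note first that the paper does not prove this proposition at all: it is quoted as Theorem~1 of \citet{kabanov2001harrison} and used as a black box, so there is no in-paper argument to compare against. What you have written is a correct, self-contained reconstruction of the standard Harrison--Pliska/Kabanov--Stricker argument, specialised to the finite-$\Omega$ setting where it becomes elementary. The two load-bearing points are both handled properly: (i) closedness of $A=\{y_T:y\in\Phi,\,y_{-1}=0\}$ is immediate because it is the image of a polyhedral cone under a linear map, which is exactly the step that requires the heavy closedness machinery of Schachermayer and Kabanov--Stricker when $\Omega$ is infinite; (ii) the passage from the terminal separating data $(a,c)$ to the running constraint $S_t^b\le S_t\le S_t^a$ via the buy-at-$\mu$-and-carry and sell-at-$\mu$-and-carry test strategies is the right mechanism, and your verification that these strategies are self-financing (i.e.\ $-\Delta y_s\in\mathcal{K}_s$ for all $s$, using $S_t^a\ge S_t^b$) is sound; ranging over all $t$, including $t=T$, does give admissibility of $S$ at every date. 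The easy direction via the $\mathbb{Q}$-supermartingale $V_t=y_t^b+y_t^sS_t$ (using that $S_t$ is a convex combination of $S_t^b$ and $S_t^a$, and that $\mathbb{Q}\sim\mathbb{P}$ with $S_T>0$ forces $y_T=0$ from $\mathbb{E}_{\mathbb{Q}}[V_T]=0$) is also correct. Two cosmetic remarks: the set $B=A-\mathcal{L}^{2}_{T,+}$ is not strictly needed, since $A\cap\Sigma=\emptyset$ already follows from \eqref{eq:NA} and the point-mass test in $\Sigma$ alone yields $a>0$, $c>0$; and the $\varepsilon$ in the test strategies is redundant because the resulting inequalities are homogeneous. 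Neither affects correctness.
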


We will assume a stronger condition in this paper, namely \emph{robust no-arbitrage} \citep[Def.~1.9]{schachermayer2004fundamental}, which ensures existence of a solution to the disutility minimisation problem. It is characterised as follows.

\begin{proposition}[{\citet[Theorem~1.7]{schachermayer2004fundamental}}] \label{prop:RNA}
 The robust no-arbitrage condition holds if and only if there exists an equivalent martingale pair~$(\mathbb{Q},S)\in\mathcal{P}$ such that \begin{equation} \label{eq:RNA} S_t\in\relint[S^b_t,S^a_t]\text{ for all }t.\end{equation}
\end{proposition}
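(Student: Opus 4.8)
The plan is to reduce both implications to the no-arbitrage characterisation of the first Proposition, after unwinding the robust no-arbitrage condition in terms of bid-ask processes. Recall that this condition requires a family of solvency cones $\tilde{\mathcal{K}}_t$ that \emph{strictly enlarge} the original ones, $\mathcal{K}_t\setminus\{0\}\subseteq\relint\tilde{\mathcal{K}}_t$, for which no-arbitrage holds. In the two-asset model the solvency cone at a node $\mu\in\Omega_t$ is the wedge $\mathcal{K}_t^\mu=\cone\{(-S_t^{b,\mu},1),(S_t^{a,\mu},-1)\}$, and enlarging it corresponds to \emph{narrowing} the bid-ask spread: $\mathcal{K}_t\setminus\{0\}\subseteq\relint\tilde{\mathcal{K}}_t$ holds if and only if the enlarged cones are again solvency cones of a bid-ask model with $S_t^b<\tilde{S}_t^b\le\tilde{S}_t^a<S_t^a$ at every node (in particular this already forces $S_t^b<S_t^a$ everywhere: a half-space cannot sit in the relative interior of a solvency cone, and an arbitrage-free two-asset cone process necessarily consists of genuine bid-ask wedges with finite prices, since the rays $(-S_t^b,1)$ and $(S_t^a,-1)$ of $\mathcal{K}_t$ must lie strictly inside $\tilde{\mathcal{K}}_t$). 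So robust no-arbitrage says exactly that every bid-ask interval can be shrunk slightly, on both sides, without creating arbitrage.

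With this reading both directions are short. For ``strictly consistent pair $\Rightarrow$ robust no-arbitrage'': given $(\mathbb{Q},S)\in\mathcal{P}$ with $S_t\in\relint[S_t^b,S_t^a]=(S_t^b,S_t^a)$ for all $t$, finiteness of the node set yields a single $\varepsilon>0$ with $S_t^b+\varepsilon<S_t<S_t^a-\varepsilon$ everywhere; then $(\mathbb{Q},S)$ is still an equivalent martingale pair for the bid-ask model $(S_t^b+\varepsilon,S_t^a-\varepsilon)$, so that model is arbitrage-free by the first Proposition, and its solvency cones strictly enlarge the originals, which is robust no-arbitrage. For the converse, robust no-arbitrage supplies an arbitrage-free bid-ask model $(\tilde{S}^b,\tilde{S}^a)$ with $S_t^b<\tilde{S}_t^b\le\tilde{S}_t^a<S_t^a$ at every node; the first Proposition gives it an equivalent martingale pair $(\mathbb{Q},S)$, and since $S_t^b<\tilde{S}_t^b\le S_t\le\tilde{S}_t^a<S_t^a$ everywhere, $(\mathbb{Q},S)\in\mathcal{P}$ and $S_t\in\relint[S_t^b,S_t^a]$, as required.

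Granting the first Proposition, the only real work is the geometric translation in the first paragraph: establishing that ``strictly enlarging the solvency cones'' is the same as ``strictly narrowing the bid-ask spread on both sides'', checking that the witnessing cones are of (finite) bid-ask form in a two-asset model, and disposing of the degenerate case $S_t^b=S_t^a$, which I expect to fall out (it is excluded by robust no-arbitrage) rather than to need separate handling. This step, together with a careful reading of the robust no-arbitrage condition of \citet[Def.~1.9]{schachermayer2004fundamental}, is where I would be most careful; the remaining details --- the uniform $\varepsilon$, transferring equivalent martingale pairs between the original and narrowed models, and the trivial time-$0$ node --- are routine.
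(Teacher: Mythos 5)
The paper does not prove this proposition at all --- it is quoted verbatim from \citet[Theorem~1.7]{schachermayer2004fundamental} --- so the only comparison available is with the cited source. Your overall strategy (grant the first Proposition, then pass back and forth between the original model and a model with strictly narrower spreads) is a legitimate and essentially complete route in this finite-$\Omega$, two-asset setting, and both implications in your second paragraph are structurally sound.

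There is, however, one genuine error, and it sits exactly where you flagged your own uncertainty: the treatment of frictionless nodes. Your cone-level reading $\mathcal{K}_t\setminus\{0\}\subseteq\relint\tilde{\mathcal{K}}_t$, and the conclusion you draw from it that robust no-arbitrage ``already forces $S^b_t<S^a_t$ everywhere,'' is not Schachermayer's Definition~1.9. That definition asks for a bid-ask process with \emph{smaller bid-ask spreads} in the sense $[\tilde{S}^b_t,\tilde{S}^a_t]\subseteq\relint[S^b_t,S^a_t]$, and the relative interior of the \emph{interval} is used there precisely so that a frictionless pair ($S^b_t=S^a_t$, where $\relint[S^b_t,S^a_t]=\{S^b_t\}$) is permitted and must simply remain frictionless in the narrowed model. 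A completely frictionless arbitrage-free binomial model satisfies robust no-arbitrage, yet fails your version of the condition; and the present paper relies on this, e.g.\ in Example~5.1, where $S^b_0=S^a_0$ and robust no-arbitrage is asserted via this very proposition. Concretely, your forward direction breaks at a frictionless node: there is no $\varepsilon>0$ with $S^b_t+\varepsilon<S_t<S^a_t-\varepsilon$ there. The repair is routine --- at nodes with $S^b_t=S^a_t$ set $\tilde{S}^b_t=\tilde{S}^a_t=S_t$, and only shrink by $\varepsilon$ at nodes with $S^b_t<S^a_t$; correspondingly, in the converse direction replace the chain of strict inequalities by ``$S_t\in[\tilde{S}^b_t,\tilde{S}^a_t]\subseteq\relint[S^b_t,S^a_t]$'' --- but as written the first paragraph proves an equivalence for a strictly stronger condition than the one in the statement.
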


We assume throughout the rest of this paper that the model satisfies the robust no-arbitrage condition~\eqref{eq:RNA}. Here $\relint$ denotes \emph{relative interior}, so that
\[
 \relint[S^{b\omega}_t,S^{a\omega}_t] =
 \begin{cases}
  \big\{S^{b\omega}_t\big\} & \text{if } S^{b\omega}_t=S^{a\omega}_t, \\
  \big(S^{b\omega}_t,S^{a\omega}_t\big) & \text{if } S^{b\omega}_t<S^{a\omega}_t
 \end{cases}
\]
for all $t$ and $\omega\in\Omega$. 

The following notation will be useful when working with martingale pairs. For every $\mathbb{Q}\ll\mathbb{P}$, we write
\begin{equation}
\Lambda_t^{\mathbb{Q}}\coloneqq\mathbb{E}\left[\tfrac{d\mathbb{Q}}{d\mathbb{P}}\middle|\mathcal{F}_t\right]\text{ for all }t=0,\ldots,T,\label{eq:Lambda^Q}
\end{equation}
where $\frac{d\mathbb{Q}}{d\mathbb{P}}$ is the Radon-Nikodym density
of $\mathbb{Q}$ with respect to $\mathbb{P}$. As $\Omega$ is finite it follows that
 \begin{equation}\label{eq:Lambda^Q-as-fraction}
  \Lambda^{\mathbb{Q}\nu}_t = \tfrac{\mathbb{Q}(\nu)}{\mathbb{P}(\nu)} \text{ for all }t\text{ and }\nu\in\Omega_t.
 \end{equation}
Define also for all $t$
\[\Omega_t^{\mathbb{Q}}\coloneqq\{\nu\in\Omega_t:\mathbb{Q}(\nu)>0\}\]
as the collection of nodes in $\Omega_t$ with positive probability under $\mathbb{Q}$. Moreover, for every $t<T$ and $\mu\in\Omega_t^{\mathbb{Q}}$, denote the transition probability from $\mu$ to any successor node $\nu\in\mu^+$ by $q_{t+1}^\nu\coloneqq\frac{\mathbb{Q}(\nu)}{\mathbb{Q}(\mu)}$. Simple rearrangement of~\eqref{eq:Lambda^Q-as-fraction} then gives
 \begin{equation}\label{eq:Lambda^Q-and-transition-probabilities}
  \Lambda^{\mathbb{Q}\nu}_{t+1} = \tfrac{\mathbb{Q}(\mu)q^\nu_{t+1}}{\mathbb{P}(\mu)p^\nu_{t+1}} = \Lambda^{\mathbb{Q}\mu}_t\tfrac{q^\nu_{t+1}}{p^\nu_{t+1}} \text{ for all }t<T,\ \mu\in\Omega_t\text{ and }\nu\in\mu^+.
 \end{equation}

\subsection{Superhedging}
\label{sec:superhedging}

If the seller of a claim is completely risk-averse, then he would charge (at least) the \emph{superhedging price}, which is the lowest amount that the seller of a claim can charge that will allow him to sell the claim without taking any risk. Such prices are usually lower than the cost of replication \citep[see, for example][]{bensaid1992derivative}, and have been well studied for European options offering a payoff at a single expiration date; for a selection of contributions at a similar technical level to the current paper, see work by \citet{delbaen2002hedging,dempster2006asset,edirisinghe1993optimal,jouini_kallal1995a,kabanov2001harrison,lohne2014algorithm,perrakis1997derivative,roux2008options,roux2016american}.

In this subsection we generalise the theory slightly to the case of payment streams of the form $c\in\mathcal{N}^{2}$, consisting of sequences of (portfolio) payments $c_t=(c^b_t,c^s_t)$ to be made at all trading dates $t$. A trading strategy $y\in\mathcal{N}^{2\prime}$ is said to \emph{superhedge} such a payment stream $c$ if it allows a trader to deliver $c$ without risk, in other words, $y_T = 0$ and $-\Delta y_t - c_t \in\mathcal{K}_t$ for all $t$.

The \emph{seller's superhedging price} of the payment stream $c$ is defined as the smallest cash endowment that is sufficient to superhedge $c$, in other words,
\begin{align}
 \pi^a(c) &\coloneqq \inf \left\{ x\in\mathbb{R}:\exists y\in\mathcal{N}^{2\prime}\text{ superhedging }c\text{ with }y_0=(x,0)\right\}. \nonumber
\intertext{The \emph{buyer's superhedging price} of $c$ is defined as}
\pi^b(c) &\coloneqq \sup \left\{ x\in\mathbb{R}:\exists y\in\mathcal{N}^{2\prime}\text{ superhedging } {-c} \text{ with }y_0=(-x,0)\right\} \nonumber\\
&=-\pi^a(-c).\label{eq:buyer's-superhedging-price}
\end{align}
It is the largest cash amount that can be raised without risk by using the payoff of $c$ as collateral. The superhedging prices admit the following dual representation.

\begin{proposition} \label{prop:ask-price-rep}
 For every $c\in\mathcal{N}^{2}$ we have
 \begin{align}
  \pi^a(c) &= \sup_{(\mathbb{Q},S)\in\mathcal{P}} \total{t=0}{T}\mathbb{E}_{\mathbb{Q}} \big[c^b_t + c^s_tS_T\big] = \max_{(\mathbb{Q},S)\in\bar{\mathcal{P}}} \total{t=0}{T}\mathbb{E}_{\mathbb{Q}} \big[c^b_t + c^s_tS_T\big], \label{eq:pi-a-dual}\\
  \pi^b(c) &= \inf_{(\mathbb{Q},S)\in\mathcal{P}} \total{t=0}{T}\mathbb{E}_{\mathbb{Q}} \big[c^b_t + c^s_tS_T\big] = \min_{(\mathbb{Q},S)\in\bar{\mathcal{P}}} \total{t=0}{T}\mathbb{E}_{\mathbb{Q}} \big[c^b_t + c^s_tS_T\big].
 \end{align}
\end{proposition}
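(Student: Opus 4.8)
The plan is to establish the representation for $\pi^a(c)$; the formula for $\pi^b(c)$ then follows immediately from the relation $\pi^b(c)=-\pi^a(-c)$ in~\eqref{eq:buyer's-superhedging-price}, the sign reversal turning the suprema/maxima into infima/minima. For the ask price I would proceed in two directions, establishing a weak duality inequality and a strong duality (attainment) result separately.

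\emph{Weak duality.} First I would fix an arbitrary cash amount $x\in\mathbb{R}$ and a trading strategy $y\in\mathcal{N}^{2\prime}$ that superhedges $c$ with $y_0=(x,0)$, so that $y_T=0$, $-\Delta y_t-c_t\in\mathcal{K}_t$ for all $t$, and (accounting for the initial injection) $-y_{-1}=(-x,0)$ in the sense that $y_0-y_{-1}$ absorbs the endowment. Taking any $(\mathbb{Q},S)\in\bar{\mathcal{P}}$, the key observation is that for $w\in\mathcal{K}_t$ one has $w^b+w^sS_t\ge w^b+w^s_+S^b_t-w^s_-S^a_t\ge0$ using $S^b_t\le S_t\le S^a_t$ together with the definition of $\mathcal{K}_t$; hence $\mathbb{E}_{\mathbb{Q}}[(-\Delta y_t-c_t)^b+(-\Delta y_t-c_t)^sS_t]\ge0$ for every $t$. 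Summing over $t=0,\ldots,T$, using $y_T=0$ and $y_{-1}$ carrying only the cash $x$, and applying the $\mathbb{Q}$-martingale property of $S$ together with the tower property to replace each $\mathbb{E}_{\mathbb{Q}}[c^s_tS_t]$ by $\mathbb{E}_{\mathbb{Q}}[c^s_tS_T]$ and to telescope the $\Delta y_t$ terms, yields $x\ge\sum_{t=0}^T\mathbb{E}_{\mathbb{Q}}[c^b_t+c^s_tS_T]$. Taking the infimum over admissible $x$ on the left and the supremum over $\bar{\mathcal{P}}$ on the right gives $\pi^a(c)\ge\sup_{(\mathbb{Q},S)\in\bar{\mathcal{P}}}\sum_{t=0}^T\mathbb{E}_{\mathbb{Q}}[c^b_t+c^s_tS_T]$, and the same bound a fortiori over the smaller set $\mathcal{P}$.

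\emph{Strong duality and attainment.} For the reverse inequality I would describe the superhedging cone $\mathcal{A}\coloneqq\{(x,y):y\in\mathcal{N}^{2\prime}$ superhedges $c$, $y_0=(x,0)\}$ projected onto its cash coordinate; robust no-arbitrage~\eqref{eq:RNA} guarantees (via the standard closedness results of \citet{schachermayer2004fundamental}) that the set of attainable terminal positions is closed, so that $\pi^a(c)$ is attained by some superhedging strategy and the set $\{x:x\ge\pi^a(c)\}$ of valid initial endowments is closed. I would then separate the point $(\pi^a(c)-\varepsilon,0,\ldots,0)$ — corresponding to an endowment too small to superhedge — from this closed convex cone by a hyperplane; the normal vector, after the usual normalisation and an appeal to robust no-arbitrage to ensure strict positivity of the relevant densities, produces a pair $(\mathbb{Q},S)\in\mathcal{P}$ with $S_t\in\relint[S^b_t,S^a_t]$ for which $\sum_{t=0}^T\mathbb{E}_{\mathbb{Q}}[c^b_t+c^s_tS_T]\ge\pi^a(c)-\varepsilon$. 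Letting $\varepsilon\downarrow0$ and combining with weak duality gives equality throughout, with the maximum over $\bar{\mathcal{P}}$ attained because $\bar{\mathcal{P}}$ is compact (being a closed bounded subset of a finite-dimensional space, as $\Omega$ is finite and $S^b\le S\le S^a$) and the objective is continuous, while the supremum over the possibly non-closed set $\mathcal{P}$ equals it by the approximation argument just given.

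\emph{Main obstacle.} The routine part is the martingale telescoping in weak duality; the delicate step is the strong-duality separation argument — specifically, ensuring that the separating functional can be taken with \emph{strictly positive} density (so that $\mathbb{Q}\sim\mathbb{P}$ rather than merely $\mathbb{Q}\ll\mathbb{P}$) and with $S_t$ in the \emph{relative interior} of the bid-ask spread. This is exactly where the robust no-arbitrage assumption~\eqref{eq:RNA} is used: one perturbs a candidate dual element by a small multiple of the equivalent martingale pair furnished by Proposition~\ref{prop:RNA} to push it into $\mathcal{P}$ without destroying the near-optimality, which is the technical heart of the proof.
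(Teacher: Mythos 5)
Your argument is correct in outline, but it takes a genuinely different (and much longer) route than the paper. The paper's proof is a two-line reduction: a strategy $y$ superhedges the payment stream $c$ if and only if the shifted strategy $w_{-1}\coloneqq y_{-1}$, $w_t\coloneqq y_t+\sum_{s=0}^{t}c_s$ satisfies $-\Delta w_t\in\mathcal{K}_t$ for all $t$, so the payment-stream problem collapses to the standard superhedging problem for the accumulated terminal claim $\sum_{t=0}^{T}c_t$, and the dual representation is then simply quoted from Theorem~4.4 of \citet{roux2016american}, with the bid-price formula following from~\eqref{eq:buyer's-superhedging-price}. What you propose is essentially a self-contained proof of that cited theorem. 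Your weak-duality half is sound: the inequality $w^b+w^sS_t\ge -\phi_t(-w)\ge0$ for $w\in\mathcal{K}_t$ and $S_t\in[S^b_t,S^a_t]$, combined with the tower property and the martingale telescoping, does give $x\ge\sum_t\mathbb{E}_{\mathbb{Q}}[c^b_t+c^s_tS_T]$; and your strong-duality half (closedness of the hedgeable cone under robust no-arbitrage, separation, then perturbation by the equivalent pair from Proposition~\ref{prop:RNA} to obtain an element of $\mathcal{P}$, with attainment over the compact set $\bar{\mathcal{P}}$) is the standard Schachermayer-type argument and is precisely where the cited theorem does its work. The trade-off is self-containedness versus economy: your separation step is only sketched, and if written out you would need to tidy two points — the set $\mathcal{A}$ you separate from is affine in $c$ rather than a cone (one should separate $c-(\pi^a(c)-\varepsilon)\mathbbm{1}$ from the cone $\mathcal{Z}$ of streams superhedgeable from zero), and the bookkeeping of which portfolio ($y_{-1}$ or $y_0=(x,0)$) carries the initial cash needs to be pinned down so that the telescoped left-hand side really equals $x$. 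Neither is a fatal gap.
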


The collection of payment streams that can be superhedged from zero will play an important role in the next section. Proposition~\ref{prop:ask-price-rep} gives that
\begin{align}
 \mathcal{Z}
 &\coloneqq \left\{ c\in\mathcal{N}^2:\exists y\in\Psi\text{ superhedging }c\right\}  \label{eq:def-of-Z}\\
 &= \left\{ c\in\mathcal{N}^2:\pi^a(c)\le0\right\}\nonumber \\
 &= \left\{c\in\mathcal{N}^2:\ttotal{t=0}{T}\mathbb{E}_{\mathbb{Q}} \big[c^b_t + c^s_tS_T\big]\le0\ \forall(\mathbb{Q},S)\in\bar{\mathcal{P}}\right\}. \label{eq:def-of-Z-extra}
\end{align}
It is self-evident from the representation~\eqref{eq:def-of-Z-extra} that $\mathcal{Z}$ is a convex cone.

%
%


\section{Disutility minimisation problem}
\label{sec:Disutility-minimisation}

The ability to manage investments in such a way that their proceeds cover an investor's liabilities as well as possible, is of fundamental importance in financial economics, and has therefore been well studied in the literature; see, for example, the work of \citet{Pricing1997Davis,delbaen2002exponential,guasoni2002,hugonnier2005utility}. The purpose of this section is to formulate an optimal investment problem in the model with proportional transaction costs, which will form the basis of the indifference prices that will be studied in Section~\ref{sec:Indifference-pricing}.

Consider an investor who faces the liability of a (given) payment stream \mbox{$u\in\mathcal{N}^{2}$}. The investor can create a trading strategy 
$
 y \in \Psi
$
in cash and stock, and is additionally allowed to inject (invest) cash on every trading date in a given set $\mathcal{I}\subseteq\{0,\ldots,T\}$. At each trading date $t\in\mathcal{I}$, in order to manage his position, the investor needs to inject $\phi_t(\Delta y_t+u_t)$ in cash in order to manage his position. At trading dates $t\notin\mathcal{I}$, the investor is required to manage his position in a self-financing manner, in other words,
$\phi_t(\Delta y_t+u_t) \le 0$. Denote the number of elements of $\mathcal{I}$ by $\lvert\mathcal{I}\rvert$ and assume that $\lvert\mathcal{I}\rvert>0$, in other words, injection is allowed at least once. It is \emph{not} assumed that $T\in\mathcal{I}$.

The objective of the investor is to choose $y$ in such a way as to minimise the sum of expected disutility of the cash injections over all the trading dates in $\mathcal{I}$, using for each time step $t\in\mathcal{I}$ the risk-averse exponential disutility (regret) function 
\begin{align*}
 v_t(x) &\coloneqq e^{\alpha_t x} - 1 \text{ for all }x\in\mathbb{R}
\intertext{with deterministic risk aversion parameter $\alpha_t \in (0,\infty)$. Define for every $t\notin\mathcal{I}$}
 v_t(x) &\coloneqq \begin{cases}
                                      0 &\text{if } x\le 0,\\
                                      \infty &\text{if } x>0.
                                     \end{cases}
\end{align*}
The investor's objective can then be written as the unconstrained optimisation problem
\begin{equation}
\mbox{minimise }\total{t=0}{T}\mathbb{E}[v_t(\phi_t(\Delta y_t+u_t))]\mbox{ over }y\in\Psi.\label{eq:Problem 1}
\end{equation}
The value function $V$ of~\eqref{eq:Problem 1} is defined as
\begin{equation}
V(u)\coloneqq\inf_{y\in\Psi}\total{t=0}{T}\mathbb{E}[v_t(\phi_t(\Delta y_t+u_t))].\label{eq:def_mindisutility}
\end{equation}
The value of $V(u)$ is finite because $v_t$ is bounded from below for all $t$. 

\begin{remark}
In the special case where $\mathcal{I}=\{T\}$ and $u_t=0$ for all $t<T$, the problem~\eqref{eq:Problem 1} becomes
 \begin{equation}
\mbox{maximise }\mathbb{E}\big[1-e^{-\alpha_T(-\phi_T(-y_{T-1}+u_T))}\big]\mbox{ over }y\in\Psi, -\Delta y_t\in\mathcal{K}_t\ \forall t<T.
\end{equation}
Noting that $-\phi_T(-y_{T-1}+u_T)$ is the liquidation value of the portfolio $y_{T-1}-u_T$, this is the classical utility maximisation problem of an investor facing a liability of $u_T$ at time $T$.
\end{remark}

It is possible to rewrite~\eqref{eq:Problem 1} directly
in terms of the cash injections. This reduces the
dimensionality of the controlled process from two to one, and will aid in the study of the dual problem
in the next section. Combining the fact that $v_t$ is nondecreasing for all $t$ with~\eqref{eq:def-of-Z}, we obtain
\begin{align}
V(u)
&=\inf\left\{\ttotal{t=0}{T}\mathbb{E}[v_t(x_t)]:(x,y)\in\mathcal{N}\times\Psi, x_t\ge\phi_t(\Delta y_t+u_t)\ \forall t\right\} \nonumber \\
&=\inf\left\{\ttotal{t=0}{T}\mathbb{E}[v_t(x_t)]:(x,y)\in\mathcal{N}\times\Psi, -\Delta y_t-u_t+(x_t,0)\in\mathcal{K}_t\ \forall t\right\} \label{eq:V-presentation01-intermediate}  \\ 
&=\inf\left\{\ttotal{t=0}{T}\mathbb{E}[v_t(x_t)]:(x,y)\in\mathcal{N}\times\Psi, y\text{ superhedges }u-(x,0)\right\} \nonumber \\
&=\inf\left\{\ttotal{t=0}{T}\mathbb{E}[v_t(x_t)]:x\in\mathcal{N}, u-(x,0)\in\mathcal{Z}\right\} \nonumber \\
&= \inf_{x\in \mathcal{A}_{u}}\total{t=0}{T}\mathbb{E}[v_t(x_t)], \label{eq:V-presentation01}
\end{align}
where
\begin{equation} \label{eq:def:Au}
\mathcal{A}_{u}\coloneqq\{ x\in\mathcal{N}:u-(x,0)\in\mathcal{Z}\}.
\end{equation}
In conclusion, the problem~\eqref{eq:Problem 1} has the same value function as the optimisation problem
\begin{equation}
\mbox{minimise }\total{t=0}{T}\mathbb{E}[v_t(x_t)]\mbox{ over }x\in \mathcal{A}_{u}.\label{eq:Problem 1'}
\end{equation}

The following result summarises a few key properties of $V$.

\begin{theorem} \label{th:solution-exists}
 The function $V$ is convex and lower semicontinuous on $\mathcal{N}^2$, and the infima in~\eqref{eq:def_mindisutility} and~\eqref{eq:V-presentation01} are attained for every $u\in\mathcal{N}^2$.
\end{theorem}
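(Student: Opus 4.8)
The plan is to work with the reformulation \eqref{eq:Problem 1'} of the problem in terms of cash injections, i.e.\ to prove that $x\mapsto\sum_{t=0}^T\mathbb E[v_t(x_t)]$ attains its infimum over the constraint set $\mathcal A_u$ defined in \eqref{eq:def:Au}, and that the resulting value function $V$ is convex and lower semicontinuous. Convexity should be the easy part: $\mathcal Z$ is a convex cone (noted after \eqref{eq:def-of-Z-extra}), so $\mathcal A_u$ is convex in $x$ and, jointly, the map $(u,x)\mapsto\mathcal A_u$ has a convex graph; since each $v_t$ is convex and the expectation and finite sum preserve convexity, a standard infimal-projection argument gives convexity of $V$ on $\mathcal N^2$. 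The finiteness of $V$ is already recorded in the text (each $v_t$ is bounded below), so the real work is attainment and lower semicontinuity.

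For attainment I would fix $u$ and take a minimising sequence $(x^{(n)},y^{(n)})\in\mathcal N\times\Psi$ for \eqref{eq:def_mindisutility}–\eqref{eq:V-presentation01}; since $\Omega$ is finite, everything lives in a finite-dimensional Euclidean space, so it suffices to show the minimising sequence can be taken bounded and then pass to a convergent subsequence, using that $\Psi$ and $\mathcal Z$ (hence $\mathcal A_u$) are closed and that $\sum_t\mathbb E[v_t(\cdot)]$ is lower semicontinuous (continuous on its domain, with value $+\infty$ outside, and closed epigraph because each $v_t$ is a closed convex function). Coercivity is where robust no-arbitrage enters: the exponential disutilities $v_t$ for $t\in\mathcal I$ blow up as $x_t\to+\infty$, which controls the injections from above, while the hard direction — ruling out injections running off to $-\infty$ compensated by huge stock positions — is exactly what robust no-arbitrage prevents. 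Concretely, I would use Proposition~\ref{prop:RNA}: pick $(\mathbb Q,S)\in\mathcal P$ with $S_t\in\relint[S^b_t,S^a_t]$, apply it to the superhedging constraint $u-(x,0)\in\mathcal Z$ via the dual description \eqref{eq:def-of-Z-extra} to get $\sum_t\mathbb E_{\mathbb Q}[x_t]\ge\sum_t\mathbb E_{\mathbb Q}[u^b_t+u^s_tS_T]$, a finite lower bound on a weighted average of the $x_t$; combined with the upper control from the exponential terms and a separate argument (again using the strict interiority of $S$, which gives that the self-financing cone $\mathcal K_t$ has the ``no free lunch'' property in a quantitative sense) bounding the portfolios $y^{(n)}_t$, one obtains boundedness of the minimising sequence.

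Lower semicontinuity of $V$ on $\mathcal N^2$ should then follow from the attainment argument made uniform: to check $V(u)\le\liminf_n V(u^{(n)})$ for $u^{(n)}\to u$, take near-optimal $(x^{(n)},y^{(n)})$ for each $u^{(n)}$, show by the same robust-no-arbitrage coercivity estimate (whose constants depend continuously, hence boundedly, on $u^{(n)}$ along a convergent sequence) that these are bounded, extract a convergent subsequence with limit $(x,y)$, use closedness of $\Psi$ and of the graph $\{(u,c):u-c\in\mathcal Z\}$ (a closed convex cone, by \eqref{eq:def-of-Z-extra}) to see $(x,y)$ is admissible for $u$, and finally use lower semicontinuity of the objective to conclude. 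I expect the main obstacle to be the coercivity estimate — turning ``robust no-arbitrage'' into a genuine a~priori bound on both the injections $x_t$ and the portfolio holdings $y_t$ along a minimising (or near-minimising) sequence, uniformly as $u$ varies over a convergent sequence. The cleanest route is probably to first establish that $\{y\in\Psi:-\Delta y_t-c_t\in\mathcal K_t\ \forall t\text{ for some }c\text{ in a bounded set}\}$ is bounded (a compactness lemma for the superhedging correspondence under robust no-arbitrage, in the spirit of \citet{schachermayer2004fundamental}), and then layer the exponential-disutility upper bound on top.
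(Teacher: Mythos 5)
Your overall strategy for the $x$-formulation \eqref{eq:Problem 1'} is sound and genuinely different from the paper's: the upper bound on each $x^{(n)\nu}_t$ from the exponential terms (and from $v_t=\infty$ on $(0,\infty)$ for $t\notin\mathcal{I}$), combined with the single linear inequality $\ttotal{t=0}{T}\mathbb{E}_{\mathbb{Q}}[x_t]\ge\ttotal{t=0}{T}\mathbb{E}_{\mathbb{Q}}[u^b_t+u^s_tS_T]$ for one fixed $(\mathbb{Q},S)\in\mathcal{P}$ with $\mathbb{Q}\sim\mathbb{P}$, does bound a minimising (or near-minimising) sequence of injections on the finite probability space, and closedness of $\mathcal{Z}$ plus lower semicontinuity of the objective then give attainment in \eqref{eq:V-presentation01} and the lower semicontinuity of $V$. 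The genuine gap is your treatment of the portfolio processes: the proposed compactness lemma, that $\{y\in\Psi:-\Delta y_t-c_t\in\mathcal{K}_t\ \forall t\text{ for some }c\text{ in a bounded set}\}$ is bounded under robust no-arbitrage, is false in general. Robust no-arbitrage does not exclude $S^b_t=S^a_t$ (the condition \eqref{eq:RNA} only requires $S_t$ in the \emph{relative} interior), nor, more generally, a nontrivial space $\Phi\cap\Psi$ of reversible zero-cost round trips; along any such direction the superhedging correspondence is unbounded, so a minimising sequence $(x^{(n)},y^{(n)})$ for \eqref{eq:def_mindisutility} need not admit a bounded choice of $y^{(n)}$ by any a priori estimate of the kind you describe. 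This is precisely the obstruction the paper's proof is built around: it casts the problem as minimisation of a normal integrand and invokes Theorem~2 of \citet{pennanen2012stochastic}, whose hypothesis is not compactness but that the recession set $\mathcal{M}$ be a \emph{linear space}; robust no-arbitrage enters only to show $\Phi\cap\Psi$ is linear \citep[Lemma~2.6]{schachermayer2004fundamental} and that the injection component of any recession direction vanishes (via the contradiction with \eqref{eq:NA}).

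The fix within your framework is easy and worth making explicit: you do not need to bound the $y^{(n)}$ at all. Once a minimiser $\hat{x}\in\mathcal{A}_u$ of \eqref{eq:V-presentation01} exists, the definition \eqref{eq:def-of-Z} of $\mathcal{Z}$ hands you some $y\in\Psi$ superhedging $u-(\hat{x},0)$; since $v_t$ is nondecreasing and $\phi_t(\Delta y_t+u_t)\le\hat{x}_t$, this $y$ attains the infimum in \eqref{eq:def_mindisutility}. Similarly, for lower semicontinuity you only need to pass to the limit in the $x$-variable, using closedness of the convex cone $\{(u,x):u-(x,0)\in\mathcal{Z}\}$ from \eqref{eq:def-of-Z-extra}. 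With that repair your argument is a legitimate, more elementary alternative to the paper's: it trades the normal-integrand/recession-cone machinery for a hands-on compactness argument made possible by the finiteness of $\Omega$, at the cost of not generalising to settings where the lineality space genuinely matters.
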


This result means that the optimisation problems \eqref{eq:Problem 1} and \eqref{eq:Problem 1'} can be solved. Whilst the optimal trading strategy in \eqref{eq:Problem 1} is not unique, it will be shown as part of the construction (in Proposition \ref{prop:Optimal_Investment}) that the optimal cash injection strategy in~\eqref{eq:Problem 1'} is unique.

Shadow price processes are often considered in the utility optimisation problem under transaction costs; see \citet*{Czichowsky_Muhle-Karbe_Schachermayer2014} and
\citet{Kallsen_Muhle-Karbe2011}, for example. In the present work, they will play a role in the construction of optimisers for \eqref{eq:Problem 1}. A process $\hat{S}\in\mathcal{N}$
is called a \emph{shadow price process} for a given liability $u\in\mathcal{N}^2$ if $S^b_t\le \hat{S}_t\le S^a_t$ for all $t$, and the optimal disutility in the model with bid-ask spread $[S^{b},S^{a}]$ and in the friction-free model with price process $\hat{S}$ coincide, in other words, 
\begin{equation}
V(u)=\inf_{y\in\Psi}\total{t=0}{T}\mathbb{E}\big[v_{t}\big(\Delta y^b_{t}+u^b_{t} + (\Delta y^s_{t}+u^s_{t})\hat{S}_t\big)\big].\label{eq:EqualDisutilitty_ShadowPrice}
\end{equation}
The shadow price process depends on the given liability, bid-ask spread of the stock, and investor's risk preference. It will be shown in Sections \ref{sec:Solution-construction} and \ref{sec:optimal-injection-investment}, by explicit construction, that a shadow price $\hat{S}$ exists for any given liability, and that it corresponds to a friction-free model that is free of arbitrage, in other words, there exists a probability measure $\hat{\mathbb{Q}}\sim\mathbb{P}$ such that $(\hat{\mathbb{Q}},\hat{S})\in\mathcal{P}$.

The following result concludes this section. The formulation below matches its usage in the constructions in Section \ref{sec:optimal-injection-investment}; however item \ref{prop:shadow-price-properties:2} holds under more general conditions than stated here.

\begin{proposition} \label{prop:shadow-price-properties}
The following holds true for any liability $u\in\mathcal{N}^2$ and associated shadow price process $\hat{S}$:
\begin{enumerate}[(1)]
\item \label{prop:shadow-price-properties:1} Any trading strategy $\hat{y}\in\Psi$ that solves~\eqref{eq:Problem 1} in the market model with price process $\hat{S}$ and satisfies
\begin{equation}
(\Delta\hat{y}^s_{t}+u^s_{t})_+S^a_t - (\Delta\hat{y}^s_{t}+u^s_{t})_-S^b_t = (\Delta\hat{y}^s_{t}+u^s_{t})\hat{S}_t \text{ for all }t,\label{eq:TradeAtSpread}
\end{equation}
also solves~\eqref{eq:Problem 1} in the model with bid-ask spread $[S^b,S^a]$.

\item \label{prop:shadow-price-properties:2} Any trading strategy $\hat{y}\in\Psi$ solving~\eqref{eq:Problem 1} in the market model with price process $[S^b,S^a]$ solves~\eqref{eq:Problem 1} in the friction-free model with stock price process $\hat{S}$. Furthermore, if the friction-free model with stock price process $\hat{S}$ is free of arbitrage, then~$\hat{y}$ also satisfies \eqref{eq:TradeAtSpread}.
\end{enumerate}
\end{proposition}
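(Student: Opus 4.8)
The plan is to exploit the fact that, for any fixed stock price process lying in the bid-ask spread, trading in the friction-free model is \emph{more permissive} than in the model with transaction costs, while the condition \eqref{eq:TradeAtSpread} is exactly what forces a given strategy to cost the same in both models. Throughout, write $\phi^{\hat S}_t(z)\coloneqq z^b + z^s\hat S_t$ for the (frictionless) cost of the portfolio $z=(z^b,z^s)\in\mathcal{L}^2_t$, and note that since $S^b_t\le\hat S_t\le S^a_t$ we always have $\phi^{\hat S}_t(z)\le \phi_t(z)$, with equality precisely when $z^s_+S^a_t - z^s_-S^b_t = z^s\hat S_t$. Since each $v_t$ is nondecreasing, this inequality passes through to the objective: for every $y\in\Psi$,
\[
 \total{t=0}{T}\mathbb{E}\big[v_t\big(\phi^{\hat S}_t(\Delta y_t+u_t)\big)\big] \le \total{t=0}{T}\mathbb{E}[v_t(\phi_t(\Delta y_t+u_t))],
\]
so taking infima over $y\in\Psi$ gives that the friction-free value in \eqref{eq:EqualDisutilitty_ShadowPrice} is at most $V(u)$; by the defining property of a shadow price the two are in fact equal. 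Call this common value $V(u)$.

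For part \ref{prop:shadow-price-properties:1}, let $\hat y\in\Psi$ solve \eqref{eq:Problem 1} in the frictionless $\hat S$-model and satisfy \eqref{eq:TradeAtSpread}. Then its frictionless objective value equals $V(u)$, and by \eqref{eq:TradeAtSpread} we have $\phi_t(\Delta\hat y_t+u_t)=\phi^{\hat S}_t(\Delta\hat y_t+u_t)$ at every $t$, so its objective value in the $[S^b,S^a]$-model is also $V(u)$; since $V(u)$ is the optimal value of \eqref{eq:Problem 1} in the transaction-cost model, $\hat y$ is optimal there. For part \ref{prop:shadow-price-properties:2}, let $\hat y\in\Psi$ solve \eqref{eq:Problem 1} in the $[S^b,S^a]$-model, so its transaction-cost objective value is $V(u)$. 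By the displayed inequality above its frictionless objective value is at most $V(u)$, but it cannot be strictly less because $V(u)$ is the infimum of the frictionless objective over all of $\Psi$; hence $\hat y$ is optimal in the frictionless $\hat S$-model.

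The last claim — that under absence of arbitrage in the frictionless $\hat S$-model the optimiser $\hat y$ necessarily satisfies \eqref{eq:TradeAtSpread} — is the only step that needs real work. Here I would argue as follows. From the previous paragraph, the frictionless objective value of $\hat y$ equals $V(u)$, while its transaction-cost objective value also equals $V(u)$; subtracting, and using monotonicity of each $v_t$ together with $\phi^{\hat S}_t\le\phi_t$ term by term, forces
\[
 \mathbb{E}\big[v_t\big(\phi^{\hat S}_t(\Delta\hat y_t+u_t)\big)\big] = \mathbb{E}[v_t(\phi_t(\Delta\hat y_t+u_t))] \quad\text{for every }t.
\]
For $t\in\mathcal{I}$ the function $v_t(x)=e^{\alpha_t x}-1$ is \emph{strictly} increasing, so this equality of expectations together with the pointwise inequality $\phi^{\hat S}_t\le\phi_t$ yields $\phi^{\hat S}_t(\Delta\hat y_t+u_t)=\phi_t(\Delta\hat y_t+u_t)$ $\mathbb{P}$-a.s., which is \eqref{eq:TradeAtSpread} at those $t$. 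For $t\notin\mathcal{I}$ a slightly more delicate argument is required, since $v_t$ is only nondecreasing: here I would use that $\hat y\in\Psi$ is optimal with finite value, hence feasible, so $\phi_t(\Delta\hat y_t+u_t)\le 0$ a.s.; the point is to rule out a set of positive probability on which $\phi^{\hat S}_t(\Delta\hat y_t+u_t)<\phi_t(\Delta\hat y_t+u_t)\le 0$, i.e. on which $(\Delta\hat y^s_t+u^s_t)$ is ``mispriced'' relative to $\hat S_t$. On such a set one can perturb $\hat y$ at time $t$ by trading the stock position at the frictionless price $\hat S_t$ and banking the strictly positive cash difference, carrying it forward; absence of arbitrage in the frictionless $\hat S$-model (equivalently, existence of $\hat{\mathbb{Q}}\sim\mathbb{P}$ with $(\hat{\mathbb{Q}},\hat S)\in\mathcal{P}$) is exactly what guarantees this extra cash cannot be destroyed by future price moves, so it can be injected at some date in $\mathcal{I}$ to strictly reduce the objective — contradicting optimality of $\hat y$. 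Making this perturbation argument precise, in particular choosing the injection date and verifying it stays within $\Psi$, is where I expect the main technical care to be needed; the $\hat{\mathbb{Q}}$-martingale property of $\hat S$ is the tool that makes it go through.
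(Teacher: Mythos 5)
Parts \ref{prop:shadow-price-properties:1} and the first claim of part \ref{prop:shadow-price-properties:2} are correct and coincide with the paper's argument: the pointwise inequality $\Delta\hat y^b_t+u^b_t+(\Delta\hat y^s_t+u^s_t)\hat S_t\le\phi_t(\Delta\hat y_t+u_t)$, monotonicity of $v_t$, and the defining equality \eqref{eq:EqualDisutilitty_ShadowPrice} give the sandwich you describe. Your treatment of \eqref{eq:TradeAtSpread} at dates $t\in\mathcal{I}$ (equal expectations plus pointwise inequality plus strict monotonicity of $v_t$) is also sound and is, in substance, the paper's argument.

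The genuine gap is the case $t\notin\mathcal{I}$, which you leave as a sketch. Your perturbation carries the surplus cash \emph{forward} to a later date in $\mathcal{I}$; this fails outright whenever the offending date $t$ exceeds $\max\mathcal{I}$ (e.g.\ $\mathcal{I}=\{0\}$ and a mispriced trade at $t=5$), since there is then no later injection date at which to spend the surplus. In that situation the surplus must instead be moved \emph{backward} in time, i.e.\ one must borrow at an earlier date in $\mathcal{I}$ against a surplus that materialises only on one node, and making that risk-free requires hedging with the stock under the martingale measure for $\hat S$ — precisely the machinery you are trying to avoid. Relatedly, your explanation of the role of no-arbitrage is off: cash is the num\'eraire and cannot be ``destroyed by price moves''; what no-arbitrage of the $\hat S$-model actually buys is that the duality results of the paper apply to that model. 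The paper's route is to invoke Proposition \ref{prop:Optimal_Investment} in \emph{both} models: the optimal injection is unique and, by \eqref{eq:xhat} (whose proof rests on $\Lambda^{\hat{\mathbb{Q}}}_t>0$ for an equivalent martingale measure, hence the no-arbitrage hypothesis), vanishes at every $t\notin\mathcal{I}$. Since $\hat y$ is optimal in both models, $\phi_t(\Delta\hat y_t+u_t)=\hat x_t=0=\bar x_t=\Delta\hat y^b_t+u^b_t+(\Delta\hat y^s_t+u^s_t)\hat S_t$ for $t\notin\mathcal{I}$, which is \eqref{eq:TradeAtSpread} there with no perturbation argument needed. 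You should replace your sketch with this appeal to the uniqueness of the optimal injection (or supply the full backward-transfer construction, which essentially reproves it).
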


Condition \eqref{eq:TradeAtSpread} can be formulated equivalently as
\[
 \big\{\Delta\hat{y}_{t}^{s}+u_{t}^{s}>0\big\}\subseteq\big\{\hat{S}_{t}=S_{t}^{a}\big\}\text{ and }\big\{\Delta\hat{y}_{t}^{s}+u_{t}^{s}<0\big\}\subseteq\big\{\hat{S}_{t}=S_{t}^{b}\big\}\text{ for all }t,
\]
in other words, a strategy $\hat{y}$ satisfying \eqref{eq:TradeAtSpread} trades only when $\hat{S}$ coincides with the bid and ask prices in the model with transaction costs. The proof of item \ref{prop:shadow-price-properties:2} depends on Proposition \ref{prop:Optimal_Investment} and, accordingly, appears in logical order after its proof in Appendix \ref{app:proofs}. 

\section{Dual formulation}
\label{sec:The-dual-problem}

It is possible to obtain a Lagrangian dual formulation for the optimisation problem~\eqref{eq:Problem 1'}. For every $u\in\mathcal{N}^2$, define the
Lagrangian $L_{u}:\mathcal{N}\times[0,\infty)\times\bar{\mathcal{P}}\rightarrow\mathbb{R}\cup\{\infty\} $
as 
\begin{equation}
L_{u}(x,\lambda,(\mathbb{Q},S))\coloneqq\total{t=0}{T}\big(\mathbb{E}[v_t(x_t)]+\lambda\mathbb{E}_{\mathbb{Q}}\big[u^b_t + u^s_tS_T - x_t\big] \big). \label{eq:Lagrangian (scalar)}
\end{equation}
The formulation of $L_u$ is motivated by an argument of \citet[(74)]{Schachermayer2002} (in the context of utility maximisation in incomplete market models without transaction costs). The coefficient of $\lambda$ encapsulates the constraints in~\eqref{eq:Problem 1'}; see~\eqref{eq:def-of-Z-extra}.

The following strong duality result holds.

\begin{theorem}
\label{thm:StrongDuality}For all $u\in\mathcal{N}^{2}$, we have
\begin{equation}
V(u)
= \inf_{x\in\mathcal{N}}\sup_{\lambda\geq0,(\mathbb{Q},S)\in\bar{\mathcal{P}}}L_{u}(x,\lambda,(\mathbb{Q},S)) = \sup_{\lambda\geq0,(\mathbb{Q},S)\in\bar{\mathcal{P}}}\inf_{x\in\mathcal{N}}L_{u}(x,\lambda,(\mathbb{Q},S)).\label{eq:StrongDuality}
\end{equation}
\end{theorem}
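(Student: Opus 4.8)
The plan is to recognise~\eqref{eq:Problem 1'} as a convex optimisation problem over the single-variable process $x\in\mathcal{N}$, and to apply a Lagrangian duality theorem. First I would rewrite the constraint $x\in\mathcal{A}_u$, i.e.\ $u-(x,0)\in\mathcal{Z}$, using the explicit dual description~\eqref{eq:def-of-Z-extra}: the constraint is equivalent to $\sum_{t=0}^T\mathbb{E}_{\mathbb{Q}}[u^b_t+u^s_tS_T-x_t]\le 0$ for every $(\mathbb{Q},S)\in\bar{\mathcal{P}}$. Hence, since $\sup_{\lambda\ge 0}\lambda z$ equals $0$ if $z\le 0$ and $+\infty$ otherwise,
\begin{equation*}
 V(u)=\inf_{x\in\mathcal{N}}\Big(\total{t=0}{T}\mathbb{E}[v_t(x_t)]+\sup_{(\mathbb{Q},S)\in\bar{\mathcal{P}}}\sup_{\lambda\ge 0}\lambda\total{t=0}{T}\mathbb{E}_{\mathbb{Q}}\big[u^b_t+u^s_tS_T-x_t\big]\Big)=\inf_{x\in\mathcal{N}}\sup_{\lambda\ge0,(\mathbb{Q},S)\in\bar{\mathcal{P}}}L_u(x,\lambda,(\mathbb{Q},S)),
\end{equation*}
which is the first equality in~\eqref{eq:StrongDuality}. (One should double-check the harmless reordering of the two suprema and the fact that pushing $\lambda$ inside a finite sum causes no trouble.)

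For the second equality, weak duality $\sup\inf\le\inf\sup$ is immediate, so the content is the reverse inequality. I would deduce it from a standard minimax / Lagrangian strong-duality theorem for convex problems — the natural reference here is the Lagrangian setup of \citet{Schachermayer2002} (which the authors already cite for the form of $L_u$) or, more elementarily, a Sion-type minimax theorem. The ingredients to verify are: (i) for fixed $(\lambda,(\mathbb{Q},S))$ the map $x\mapsto L_u(x,\lambda,(\mathbb{Q},S))$ is convex (indeed a sum of the convex exponentials $v_t$ and an affine term in $x$); (ii) for fixed $x$ the map $(\lambda,(\mathbb{Q},S))\mapsto L_u(x,\lambda,(\mathbb{Q},S))$ is concave (in fact affine in $\lambda$, and affine in $(\mathbb{Q},S)$ once $\mathbb{Q}$ and the martingale $S$ are encoded by the finitely many coordinates $(\Lambda^{\mathbb{Q}\nu}_t)$ and $(S^\nu_t)$); (iii) the dual parameter set $[0,\infty)\times\bar{\mathcal{P}}$ is convex, and $\bar{\mathcal{P}}$ is a compact convex subset of a finite-dimensional space (it is cut out by the linear equalities $\sum_{\nu\in\Omega_t}\Lambda^{\mathbb{Q}\nu}_t\mathbb{P}(\nu)=1$, the martingale conditions, and the box constraints $S^b_t\le S_t\le S^a_t$ together with $\Lambda^{\mathbb{Q}}\ge 0$, all of which define a compact polytope-like set); and (iv) a Slater-type qualification or coercivity to license the interchange — here the robust no-arbitrage assumption~\eqref{eq:RNA} supplies an equivalent martingale pair in the relative interior, which is exactly the interior-point condition that makes the Lagrangian dual of the constraint $u-(x,0)\in\mathcal{Z}$ exact, and the exponential $v_t$ gives coercivity in $x_t$ on $\{t\in\mathcal{I}\}$ (with the indicator penalties handling $t\notin\mathcal{I}$).

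The main obstacle I expect is not the minimax step per se but justifying the interchange cleanly despite two mild irregularities: the disutility functions $v_t$ for $t\notin\mathcal{I}$ are extended-real-valued (so $L_u$ takes the value $+\infty$ and is not finite everywhere on $\mathcal{N}$), and the effective domain of $x$ is therefore constrained to $\{x:x_t\le 0\ \forall t\notin\mathcal{I}\}$. I would handle this either by first restricting the inner infimum to that (closed, convex) domain — on which everything is finite and the classical convex minimax theorem applies — or by a perturbation/relaxation argument in which the $t\notin\mathcal{I}$ constraints are themselves dualised and one checks that the corresponding multipliers are automatically finite because of robust no-arbitrage. A secondary technical point is confirming that the $\sup$ over $\lambda\ge 0$ is attained (or that no finite-$\lambda$ duality gap hides in the unboundedness of $[0,\infty)$); this follows once one knows, from Theorem~\ref{th:solution-exists} and the Slater condition, that the primal optimum is attained and the subdifferential of the value of~\eqref{eq:Problem 1'} as a function of the constraint level is nonempty at $0$, yielding a bounded optimal multiplier $\hat\lambda$. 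Assembling (i)–(iv) plus this attainment remark then gives $\inf_x\sup = \sup\inf_x = V(u)$, completing the proof.
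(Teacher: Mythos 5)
Your treatment of the first equality in \eqref{eq:StrongDuality} is exactly the paper's argument and is correct. The second equality is where your proposal has a genuine gap: the minimax or Lagrangian strong-duality theorem you want to invoke cannot be applied in the coordinates you have chosen. Contrary to your points (ii) and (iii), the map $(\lambda,(\mathbb{Q},S))\mapsto L_{u}(x,\lambda,(\mathbb{Q},S))$ is \emph{not} concave (nor even quasiconcave) in the dual variables: writing $\mathbb{E}_{\mathbb{Q}}[u^s_tS_T]=\mathbb{E}\big[\Lambda^{\mathbb{Q}}_T u^s_t S_T\big]$, the relevant term is trilinear in $(\lambda,\Lambda^{\mathbb{Q}},S)$, and bilinear forms are not quasiconcave. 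Likewise $\bar{\mathcal{P}}$ is \emph{not} convex in the coordinates $(\Lambda^{\mathbb{Q}\nu}_t,S^\nu_t)$: an ordinary convex combination of two martingale pairs is in general not a martingale pair (the correct mixture weights $S$ by conditional density ratios, as in \eqref{eq:Qepsilon}--\eqref{eq:Sepsilon}). What makes the interchange possible is the reparametrisation of the dual variables by $z_t=\lambda\Lambda^{\mathbb{Q}}_t(1,S_t)$: the Lagrangian depends on $(\lambda,(\mathbb{Q},S))$ only through $z$, and the set $\bar{\mathcal{C}}$ of such processes is a closed convex cone, namely the nonnegative $\mathbb{P}$-martingales taking values in the polar solvency cones $\mathcal{K}^+_t$. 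This identification \eqref{eq:Cbar-and-Pbar}, adapted from \citet[pp.~24--25]{schachermayer2004fundamental}, is the key lemma your proposal omits; without it neither Sion's theorem nor conic Lagrangian duality is applicable.

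Two further remarks. First, the paper does not in fact use a minimax theorem: it applies Fenchel--Moreau biconjugation to $V$, which is convex and lower semicontinuous by Theorem~\ref{th:solution-exists}, and computes $V^{\ast}$ explicitly by decoupling into a solvency-cone term, a martingale term and the conjugates $v^\ast_t$; robust no-arbitrage enters through the lower semicontinuity of $V$, not through a constraint qualification. Second, your point (iv) misidentifies the Slater condition: robust no-arbitrage \eqref{eq:RNA} is an interiority condition on the \emph{dual} side, whereas the primal Slater point you need is, for instance, $x$ with $x_t<0$ for $t\notin\mathcal{I}$ and a large cash injection at some $t\in\mathcal{I}$, so that $\pi^a(u-(x,0))<0$; such a point exists already under plain no-arbitrage. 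A Slater-based conic duality argument would be a legitimate alternative to the paper's biconjugation route, but only after the reparametrisation by $\bar{\mathcal{C}}$ described above has been carried out.
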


The strong duality established in Theorem~\ref{eq:V-ito-hat-lambda} suggests that further study of the \emph{dual problem} 
\begin{equation}
\mbox{maximise }\inf_{x\in\mathcal{N}}L_{u}(x,\lambda,(\mathbb{Q},S))\mbox{ over }(\lambda,(\mathbb{Q},S))\in[0,\infty)\times\bar{\mathcal{P}}\label{eq:problem 1' dual}
\end{equation}
of~\eqref{eq:Problem 1'} would be profitable. It turns out that there is an explicit formula for the value of the inner optimisation problem over $x$. Note that in this paper we adopt the convention $0\ln0=0$.

\begin{proposition}
\label{prop:infL_dual}For any $u\in\mathcal{N}^2$ and $(\lambda,(\mathbb{Q},S))\in[0,\infty)\in\bar{\mathcal{P}}$,
we have 
\begin{multline}
\inf_{x\in\mathcal{N}}L_{u}(x,\lambda,(\mathbb{Q},S))
= - \total{t\in\mathcal{I}}{} \tfrac{\lambda}{\alpha_t}\mathbb{E}_{\mathbb{Q}}\big[\ln\Lambda^\mathbb{Q}_t\big] + \lambda\total{t=0}{T}\mathbb{E}_{\mathbb{Q}}\big[u^b_t + u^s_tS_T\big] \\ - \total{t\in\mathcal{I}}{}\tfrac{\lambda}{\alpha_t}\big(\ln\tfrac{\lambda}{\alpha_t}-1\big) - \lvert\mathcal{I}\rvert. \label{eq:prop:infL_dual:formula}
\end{multline}
\end{proposition}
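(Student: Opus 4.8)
The plan is to exploit the product structure of $\mathcal{N}$. Since an adapted process $x\in\mathcal{N}$ amounts to a free choice of a real number $x_t^\nu$ for each $t$ and each node $\nu\in\Omega_t$, and since $L_u$ is additive over $t$ and (through the expectations) over the nodes, the infimum over $x\in\mathcal{N}$ breaks up into a sum of independent scalar minimisation problems, one per pair $(t,\nu)$. Concretely, I would first strip off the term $\lambda\sum_{t=0}^T\mathbb{E}_{\mathbb{Q}}[u^b_t+u^s_tS_T]$ from $L_u$ in~\eqref{eq:Lagrangian (scalar)}, which is independent of $x$ and accounts for the middle term of~\eqref{eq:prop:infL_dual:formula}. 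Using $\mathbb{E}_{\mathbb{Q}}[x_t]=\sum_{\nu\in\Omega_t}\mathbb{Q}(\nu)x_t^\nu$ for $\mathcal{F}_t$-measurable $x_t$ (by~\eqref{eq:Lambda^Q-as-fraction}), the remaining $x$-dependent part of $L_u$ equals $\sum_{t=0}^T\sum_{\nu\in\Omega_t}\big(\mathbb{P}(\nu)v_t(x_t^\nu)-\lambda\mathbb{Q}(\nu)x_t^\nu\big)$, to be minimised freely over each $x_t^\nu\in\mathbb{R}$.

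For $t\notin\mathcal{I}$ the definition of $v_t$ forces $x_t^\nu\le0$ (else the value is $+\infty$), and then $-\lambda\mathbb{Q}(\nu)x_t^\nu\ge0$ is minimised at $x_t^\nu=0$, so these time steps contribute nothing. For $t\in\mathcal{I}$ the per-node objective is the strictly convex function $x\mapsto\mathbb{P}(\nu)(e^{\alpha_t x}-1)-\lambda\mathbb{Q}(\nu)x$. When $\lambda\mathbb{Q}(\nu)>0$ the first-order condition yields the (global) minimiser $x_t^\nu=\tfrac{1}{\alpha_t}\ln\tfrac{\lambda\Lambda^{\mathbb{Q}\nu}_t}{\alpha_t}$, and substituting back, using $\mathbb{P}(\nu)\Lambda^{\mathbb{Q}\nu}_t=\mathbb{Q}(\nu)$ and $\ln\tfrac{\lambda\Lambda^{\mathbb{Q}\nu}_t}{\alpha_t}=\ln\Lambda^{\mathbb{Q}\nu}_t+\ln\tfrac{\lambda}{\alpha_t}$, gives the node contribution $-\tfrac{\lambda}{\alpha_t}\mathbb{Q}(\nu)\ln\Lambda^{\mathbb{Q}\nu}_t-\tfrac{\lambda}{\alpha_t}\mathbb{Q}(\nu)\big(\ln\tfrac{\lambda}{\alpha_t}-1\big)-\mathbb{P}(\nu)$. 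When $\lambda\mathbb{Q}(\nu)=0$ the objective is increasing and its infimum $-\mathbb{P}(\nu)$ is approached only as $x\to-\infty$; with the convention $0\ln0=0$ this coincides with the same closed-form expression. Summing this uniform node formula over $\nu\in\Omega_t$ and using $\sum_{\nu\in\Omega_t}\mathbb{Q}(\nu)=1=\sum_{\nu\in\Omega_t}\mathbb{P}(\nu)$ gives $-\tfrac{\lambda}{\alpha_t}\mathbb{E}_{\mathbb{Q}}[\ln\Lambda^\mathbb{Q}_t]-\tfrac{\lambda}{\alpha_t}\big(\ln\tfrac{\lambda}{\alpha_t}-1\big)-1$ for each $t\in\mathcal{I}$; adding over $t\in\mathcal{I}$ and restoring the $u$-term produces exactly~\eqref{eq:prop:infL_dual:formula}.

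The only real subtlety is the bookkeeping in the degenerate cases ($\lambda=0$, and nodes on which $\mathbb{Q}$ — equivalently $\Lambda^\mathbb{Q}_t$ — vanishes), where the scalar minimiser is not attained and the quantities $\mathbb{Q}(\nu)\ln\Lambda^{\mathbb{Q}\nu}_t$ and $\tfrac{\lambda}{\alpha_t}\ln\tfrac{\lambda}{\alpha_t}$ must be interpreted via $0\ln0=0$; once one checks that the per-node closed form above is continuous across these cases, nothing else is needed. It is also worth recording for later use in the analysis of the dual problem~\eqref{eq:problem 1' dual} that, when $\lambda>0$ and $(\mathbb{Q},S)\in\mathcal{P}$ (so $\mathbb{Q}(\nu)>0$ for every node), the infimum in~\eqref{eq:prop:infL_dual:formula} is in fact attained, at the explicit process $x_t^\nu=\tfrac{1}{\alpha_t}\ln\tfrac{\lambda\Lambda^{\mathbb{Q}\nu}_t}{\alpha_t}$ for $t\in\mathcal{I}$ and $x_t=0$ for $t\notin\mathcal{I}$.
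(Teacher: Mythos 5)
Your proof is correct and takes essentially the same route as the paper's: the paper splits the infimum over $x\in\mathcal{N}$ into pointwise scalar problems and identifies each as (minus) the convex conjugate $v^\ast_t\big(\lambda\Lambda^{\mathbb{Q}}_t\big)$, which is exactly the per-node first-order-condition computation you carry out by hand, with the same handling of the degenerate cases via the convention $0\ln 0=0$.
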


The joint dependence on $\lambda$ and $(\mathbb{Q},S)$ in~\eqref{eq:prop:infL_dual:formula} is very simple: the two terms on the right hand side that depend on $(\mathbb{Q},S)$, both contain $\lambda$ only as a nonnegative linear coefficient. This suggests that it should be possible to rewrite the outer maximisation in the dual problem~\eqref{eq:problem 1' dual} as a two-step maximisation, in other words, maximising first over $(\mathbb{Q},S)$, and then over $\lambda$. 

The solution to the first step, maximisation over $(\mathbb{Q},S)$, will be the subject of Section~\ref{sec:Solution-construction}. In the remainder of this section, we introduce some notation in order to capture the two-step nature of the maximisation, and then show that the maximisation problem over $\lambda$ has a unique closed form solution. To this end, define
\begin{align}
H((\mathbb{Q},S);X) &\coloneqq \total{t\in\mathcal{I}}{}\tfrac{1}{\alpha_t}\mathbb{E}_{\mathbb{Q}}\big[\ln\Lambda^\mathbb{Q}_t\big]+\mathbb{E}_{\mathbb{Q}}\big[X^b+X^sS_T\big] \text{ for all } (\mathbb{Q},S)\in\bar{\mathcal{P}},\label{eq:def_HI} \\
K(X) &\coloneqq \inf_{(\mathbb{Q},S)\in\bar{\mathcal{P}}} H((\mathbb{Q},S);X).\label{eq:K_I(X)}
\end{align}
for any $X\in\mathcal{L}^2_T$. Notice that $K(X)$ is finite because the
values of the mapping $x\mapsto x\ln x$ are finite and bounded from below on
$[0,\infty)$. Combining this notation with~\eqref{eq:StrongDuality} and~\eqref{eq:prop:infL_dual:formula}, we obtain, for all $u\in\mathcal{N}^2$,
\begin{align}
 V(u) &= \sup_{\lambda\ge0}\left\{-\lambda K\big(-\ttotal{t=0}{T}u_t\big) - \ttotal{t\in\mathcal{I}}{}\tfrac{\lambda}{\alpha_t}\big(\ln\tfrac{\lambda}{\alpha_t}-1\big)-\lvert\mathcal{I}\rvert\right\} \nonumber \\
 &= -\inf_{\lambda\ge0}\left\{\lambda K\big(-\ttotal{t=0}{T}u_t\big) + \ttotal{t\in\mathcal{I}}{}\tfrac{\lambda}{\alpha_t}\big(\ln\tfrac{\lambda}{\alpha_t}-1\big)\right\} - \lvert\mathcal{I}\rvert. \label{eq:V-ito-lambda}
\end{align}

The following result concludes this section. 

\begin{theorem}
\label{thm:min_disutility}
For any $u\in\mathcal{N}^2$, the minimal disutility is
\begin{equation} \label{eq:V-ito-hat-lambda}
V(u)=\hat{\lambda}_u\total{t\in\mathcal{I}}{}\tfrac{1}{\alpha_t}-\lvert\mathcal{I}\rvert,
\end{equation}
where 
\begin{equation}
\hat{\lambda}_u\coloneqq\exp\left\{\left(\ttotal{t\in\mathcal{I}}{}\tfrac{\ln\alpha_t}{\alpha_t}-K\big(-\ttotal{t=0}{T}u_t\big)\right)\middle/\ttotal{t\in\mathcal{I}}{}\tfrac{1}{\alpha_t}\right\}>0\label{eq:def_lambda()}
\end{equation}
is the unique value attaining the infimum in~\eqref{eq:V-ito-lambda}.
\end{theorem}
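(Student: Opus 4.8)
The plan is to start from the reduction~\eqref{eq:V-ito-lambda} and carry out the one-dimensional minimisation over $\lambda\ge0$ explicitly. Abbreviate $K\coloneqq K(-\ttotal{t=0}{T}u_t)$, which is finite as noted after~\eqref{eq:K_I(X)}, and set $A\coloneqq\ttotal{t\in\mathcal{I}}{}\tfrac1{\alpha_t}>0$ and $B\coloneqq\ttotal{t\in\mathcal{I}}{}\tfrac{\ln\alpha_t}{\alpha_t}$. Expanding $\ln\tfrac{\lambda}{\alpha_t}=\ln\lambda-\ln\alpha_t$ and collecting terms, the bracketed expression in~\eqref{eq:V-ito-lambda} becomes
\[
 g(\lambda)\coloneqq\lambda K+\total{t\in\mathcal{I}}{}\tfrac{\lambda}{\alpha_t}\big(\ln\tfrac{\lambda}{\alpha_t}-1\big)=A\lambda\ln\lambda+(K-A-B)\lambda,
\]
where the convention $0\ln0=0$ gives the continuous extension $g(0)=0$.

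Next I would record the analytic properties of $g$ on $[0,\infty)$: it is continuous there, and on $(0,\infty)$ it is twice differentiable with $g'(\lambda)=A\ln\lambda+K-B$ and $g''(\lambda)=A/\lambda>0$, so $g$ is strictly convex; moreover $g'(\lambda)\to-\infty$ as $\lambda\downarrow0$ and $g(\lambda)\to\infty$ as $\lambda\to\infty$. These facts force the infimum of $g$ over $[0,\infty)$ to be attained at a unique point, which lies in $(0,\infty)$ and solves $g'(\lambda)=0$; that equation gives $\ln\lambda=(B-K)/A$, i.e. $\lambda=\hat\lambda_u$ as defined in~\eqref{eq:def_lambda()}, which in particular is positive.

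Finally I would substitute $\ln\hat\lambda_u=(B-K)/A$ back into $g$ to obtain $g(\hat\lambda_u)=\hat\lambda_u(B-K)+(K-A-B)\hat\lambda_u=-A\hat\lambda_u$, hence $\inf_{\lambda\ge0}g(\lambda)=-\hat\lambda_u\ttotal{t\in\mathcal{I}}{}\tfrac1{\alpha_t}$. Inserting this into~\eqref{eq:V-ito-lambda} yields $V(u)=\hat\lambda_u\ttotal{t\in\mathcal{I}}{}\tfrac1{\alpha_t}-\lvert\mathcal{I}\rvert$, which is~\eqref{eq:V-ito-hat-lambda}, and the uniqueness claim in the statement was obtained en route from strict convexity.

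There is no deep obstacle here once~\eqref{eq:V-ito-lambda} is available; the content is elementary calculus. The only point needing a little care is excluding the boundary value $\lambda=0$ as a candidate minimiser (and, relatedly, confirming that $g$ is genuinely minimised rather than merely having a critical point): this follows from $g(0)=0$ together with $g(\hat\lambda_u)=-A\hat\lambda_u<0$, and from $g'\to-\infty$ at the origin. A secondary bookkeeping matter is applying the convention $0\ln0=0$ consistently so that $g$ extends continuously to $\lambda=0$ and the algebraic simplification above is valid throughout $[0,\infty)$.
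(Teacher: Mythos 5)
Your proposal is correct and follows essentially the same route as the paper: the paper's proof likewise observes that the function $\lambda\mapsto\lambda K+\sum_{t\in\mathcal{I}}\tfrac{\lambda}{\alpha_t}(\ln\tfrac{\lambda}{\alpha_t}-1)$ is convex with unique minimiser $\hat\lambda_u$ and substitutes back into~\eqref{eq:V-ito-lambda}. You merely supply the elementary calculus (the explicit form $A\lambda\ln\lambda+(K-A-B)\lambda$, the derivative computation, and the boundary check at $\lambda=0$) that the paper leaves implicit.
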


Note that Theorem~\ref{thm:min_disutility} implies that $\hat{\lambda}_u$, and hence $V(u)$, depend on $u$ only through $\ttotal{t=0}{T}u_t$. This is perhaps surprising in view of the definition~\eqref{eq:def_mindisutility} of $V(u)$. The reason for this comes from the dual formulation and the nature of the dual objects in models with proportional transaction costs: for example, it can be seen in~\eqref{eq:def-of-Z-extra} that whether a payment stream can be superhedged from zero depends only on its total payoff. This is the reason why the Lagrangian $L_u$ depends linearly on $\ttotal{t=0}{T}u_t$, which in turn leads directly into the dual formulation of $V(u)$.

\section{Indifference pricing}
\label{sec:Indifference-pricing}

In this section we consider an investor trading in cash and shares and who is entitled to receive a given portfolio $w_t\in\mathcal{L}_t^{2}$ at each time step $t$. We refer to the payment stream $w\in\mathcal{N}^2$ as the \emph{endowment} of the investor (though it may in fact represent a liability if negative). The minimal disutility of the investor in this situation is~$V(-w)$.

Indifference pricing provides a way for such an investor to determine the value of derivatives, or payment streams. We will introduce disutility indifference prices for the seller and buyer of a payment stream $c\in\mathcal{N}^{2}$. Consider the situation where the investor is selling the payment stream $c$. He receives a single payment of $\delta\in\mathbb{R}$ in cash at time $0$, and then delivers the portfolio $c_t$ at each time step $t$. After selling $c$, the investor's minimum disutility becomes $V(c-\delta\mathbbm{1}-w)$, where the process $\mathbbm{1}=(\mathbbm{1}_t)_{t=0}^T$ is defined as
\[
\mathbbm{1}_t\coloneqq\begin{cases}
(1,0) & \text{if }t=0,\\
(0,0) & \text{if }t>0.
\end{cases}
\]
The \emph{seller's disutility indifference price} $\pi^{ai}(c;w)$ of $c$ is defined as the lowest price for which he could sell $c$ without increasing his minimal disutility, in other words,
\begin{equation}
\pi^{ai}(c;w)\coloneqq\inf\{\delta\in\mathbb{R}: V(c-\delta\mathbbm{1}-w)\leq V(-w)\} .\label{eq:seller_indifferenceprice}
\end{equation}

The \emph{buyer's disutility indifference price} $\pi^{bi}(c;w)$ is similarly defined as the highest price at which the investor could buy the payment stream (and receive $c_t$ at each time step
$t$) without increasing his minimal disutility, in other words,
\begin{align}
\pi^{bi}(c;w) & \coloneqq\sup\{ \delta\in\mathbb{R}: V(-c+\delta\mathbbm{1}-w)\leq V(-w)\} \nonumber \\
 & =-\inf\{ \delta\in\mathbb{R}: V(-c-\delta\mathbbm{1}-w)\leq V(-w)\} =-\pi^{ai}(-c;w).\label{eq:pibF=-piaF}
\end{align}

The following theorem gives formulae for computing the buyer's and seller's indifference prices. These pricing formulae resemble existing formulae for utility indifference prices in friction-free models under exponential utility, in particular those obtained by \citet{delbaen2002exponential} and \citet{rouge2000pricing} in general continuous-time market models without transaction costs, and \citet{musiela2004valuation} in a discrete time friction-free model with a non-traded asset.

Observe that, to determine the buyer's and seller's indifference prices of a payment stream, it is sufficient to be able to determine the value of $K$ for three different random variables.

\begin{theorem}
\label{thm:indifferenceprices_expdisutility}For any $c,w\in\mathcal{N}^{2}$,
we have 
\begin{align}
\pi^{ai}(c;w) & =K\left(\ttotal{t=0}{T}w_t\right)-K\left(\ttotal{t=0}{T}(w_t-c_t)\right), \label{eq:formula:piFai}\\
\pi^{bi}(c;w) & =K\left(\ttotal{t=0}{T}(w_t+c_t)\right)-K\left(\ttotal{t=0}{T}w_t\right). \label{eq:formula:piFbi}
\end{align}
\end{theorem}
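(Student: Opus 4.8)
The plan is to reduce the statement entirely to Theorem~\ref{thm:min_disutility}. That result expresses $V(u)$ as the strictly increasing function $x\mapsto\Sigma_\mathcal{I}e^{x/\Sigma_\mathcal{I}}-\lvert\mathcal{I}\rvert$ evaluated at $A_\mathcal{I}-K(-\sum_{t=0}^Tu_t)$, where I write $\Sigma_\mathcal{I}\coloneqq\sum_{t\in\mathcal{I}}\tfrac1{\alpha_t}>0$ and $A_\mathcal{I}\coloneqq\sum_{t\in\mathcal{I}}\tfrac{\ln\alpha_t}{\alpha_t}$; in particular, $V(u)\le V(u')$ if and only if $K(-\sum_{t=0}^Tu_t)\ge K(-\sum_{t=0}^Tu'_t)$. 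The only additional ingredient is the elementary translation identity $K(X+\delta\mathbbm{1}_0)=K(X)+\delta$ for $X\in\mathcal{L}^2_T$ and deterministic $\delta\in\mathbb{R}$, which is immediate from~\eqref{eq:def_HI}--\eqref{eq:K_I(X)} since $\mathbb{E}_\mathbb{Q}[\delta]=\delta$ for every $(\mathbb{Q},S)\in\bar{\mathcal{P}}$.

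First I would compute the relevant total payment streams: since $\sum_{t=0}^T\mathbbm{1}_t=\mathbbm{1}_0=(1,0)$, we have $-\sum_{t=0}^T(c-\delta\mathbbm{1}-w)_t=\sum_{t=0}^T(w_t-c_t)+\delta\mathbbm{1}_0$ and $-\sum_{t=0}^T(-w)_t=\sum_{t=0}^Tw_t$, each lying in $\mathcal{L}^2_T$. Applying Theorem~\ref{thm:min_disutility} together with the translation identity, the condition $V(c-\delta\mathbbm{1}-w)\le V(-w)$ appearing in~\eqref{eq:seller_indifferenceprice} becomes, after cancelling the common constants $A_\mathcal{I}$, $\Sigma_\mathcal{I}$ and $\lvert\mathcal{I}\rvert$ and using strict monotonicity of $x\mapsto\Sigma_\mathcal{I}e^{x/\Sigma_\mathcal{I}}$, the linear inequality $\delta\ge K(\sum_{t=0}^Tw_t)-K(\sum_{t=0}^T(w_t-c_t))$. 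Taking the infimum over the admissible $\delta$ yields~\eqref{eq:formula:piFai} at once. Formula~\eqref{eq:formula:piFbi} then follows by applying~\eqref{eq:formula:piFai} with $c$ replaced by $-c$ and invoking $\pi^{bi}(c;w)=-\pi^{ai}(-c;w)$ from~\eqref{eq:pibF=-piaF}.

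There is no genuine analytic obstacle here: the hard work --- existence of optimisers, strong duality, and the closed form~\eqref{eq:def_lambda()} for $\hat{\lambda}_u$ --- has already been done in Theorems~\ref{th:solution-exists}, \ref{thm:StrongDuality} and~\ref{thm:min_disutility}, and what remains is bookkeeping. The two points I would be careful about are: first, checking that the total streams $\sum_{t=0}^Tc_t$ and $\sum_{t=0}^Tw_t$ are $\mathcal{F}_T$-measurable and hence lie in the domain $\mathcal{L}^2_T$ of $K$, so that $K$ and Theorem~\ref{thm:min_disutility} genuinely apply; and second, tracking the signs and the cash component $\mathbbm{1}_0=(1,0)$ of $\mathbbm{1}$ correctly when forming these total streams and when pulling the deterministic $\delta$ out of the expectations in~\eqref{eq:def_HI}.
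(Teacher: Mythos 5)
Your proof is correct, and its first half coincides with the paper's: both arguments hinge on the translation identity $K(X+(\delta,0))=K(X)+\delta$ for deterministic $\delta$ (immediate from \eqref{eq:def_HI}--\eqref{eq:K_I(X)}) together with the closed form for $V$ in Theorem~\ref{thm:min_disutility}, and both dispatch \eqref{eq:formula:piFbi} via \eqref{eq:pibF=-piaF}. Where you diverge is in showing that no $\delta$ strictly below $\hat{\pi}\coloneqq K(\ttotal{t=0}{T}w_t)-K(\ttotal{t=0}{T}(w_t-c_t))$ is feasible. You get this purely from the fact that $V(c-\delta\mathbbm{1}-w)$ is a strictly decreasing function of $K(\ttotal{t=0}{T}(w_t-c_t))+\delta$, hence strictly decreasing in $\delta$, so the feasible set in \eqref{eq:seller_indifferenceprice} is exactly $[\hat{\pi},\infty)$. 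The paper instead runs a primal perturbation argument: it takes an optimal injection process $x^{\pi}\in\mathcal{A}_{c-\pi\mathbbm{1}-w}$ (whose existence requires Theorem~\ref{th:solution-exists}), shifts it by $\tfrac{1}{\lvert\mathcal{I}\rvert}(\pi-\hat{\pi})$ on the dates in $\mathcal{I}$, checks via \eqref{eq:def:Au} that the shifted process is admissible for $c-\hat{\pi}\mathbbm{1}-w$, and uses strict monotonicity of $v_t$ to conclude $V(c-\pi\mathbbm{1}-w)>V(c-\hat{\pi}\mathbbm{1}-w)$. Your route is shorter and avoids re-invoking attainment of the infimum, at the cost of leaning entirely on the dual closed form; the paper's version exhibits concretely how an underpriced sale forces strictly larger regret at the level of injection strategies. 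Both are logically complete, and your two flagged checkpoints (measurability of the total streams and the sign/placement of $\mathbbm{1}_0=(1,0)$) are indeed the only bookkeeping points that matter.
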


The following one-step toy model demonstrates the calculation of the indifference prices using~\eqref{eq:formula:piFai} and~\eqref{eq:formula:piFbi}.

\begin{example} \label{ex:one-step-model}
Let $T=1$ and $\Omega=\{\mathrm{u},\mathrm{d}\}$, and take any probability measure $\mathbb{P}$ with $p\coloneqq\mathbb{P}(\mathrm{u})\in(0,1)$. Suppose furthermore that the bid and ask prices in this model satisfy
\begin{equation} \label{eq:example:1}
 S^{b\mathrm{d}}_1\le S^{a\mathrm{d}}_1 < S^b_0 = \bar{S}_0 = S^a_0 < S^{b\mathrm{u}}_1\le S^{a\mathrm{u}}_1.
\end{equation}
The mid-price process $\bar{S}=(\bar{S}_0,\bar{S}_1)\in\mathcal{N}$ with $\bar{S}_1\coloneqq\tfrac{1}{2}(S^a_1 + S^b_1)$ together with the unique probability measure $\mathbb{Q}$ with $\mathbb{Q}(\mathrm{u})=\tfrac{\bar{S}_0-\bar{S}_1^{\mathrm{d}}}{\bar{S}_1^{\mathrm{u}} - \bar{S}_1^{\mathrm{d}}}$ satisfies the robust no-arbitrage condition in Proposition~\ref{prop:RNA}. 

Every probability measure $\mathbb{Q}$ in this model can be characterised uniquely by $\mathbb{Q}(\mathrm{u})$. It follows from~\eqref{eq:example:1} that
\begin{align*}
\mathcal{Q}&\coloneqq  \left\{ \mathbb{Q}(\mathrm{u}):(\mathbb{Q},S)\in\bar{\mathcal{P}}\right\} \\
&=  \big\{ q\in[0,1]:qx^{\mathrm{u}}+(1-q)x^{\mathrm{d}}=\bar{S}_0\text{ for some }x^{\mathrm{u}}\in\big[S^{b\mathrm{u}}_1,S^{a\mathrm{u}}_1\big],x^{\mathrm{d}}\in\big[S^{b\mathrm{d}}_1,S^{a\mathrm{d}}_1\big]\big\} \\
&=\left\{ \tfrac{\bar{S}_0-x^{\mathrm{d}}}{x^{\mathrm{u}}-x^{\mathrm{d}}}:x^{\mathrm{u}}\in\big[S^{b\mathrm{u}}_1,S^{a\mathrm{u}}_1\big],x^{\mathrm{d}}\in\big[S^{b\mathrm{d}}_1,S^{a\mathrm{d}}_1\big]\right\} \\
&=\left[\tfrac{\bar{S}_0-S_1^{a\mathrm{d}}}{S_1^{a\mathrm{u}}-S_1^{a\mathrm{d}}},\tfrac{\bar{S}_0-S_1^{b\mathrm{d}}}{S_1^{b\mathrm{u}}-S_1^{b\mathrm{d}}}\right] \eqqcolon [q_{\min},q_{\max}].
\end{align*}
Observe in particular that $\mathcal{Q}\subset(0,1)$.

Let $\mathcal{I}\coloneqq\{0,1\}$ and $\alpha_0=\alpha_1=\alpha>0$, and set the investor's endowment $w=(w_0,w_1)\in\mathcal{N}^2$ to be zero, in other words, $w_0=w_1=(0,0)$. It is possible to derive explicit formulae for the buyer's and seller's disutility indifference prices of a derivative security with cash payoff $D\in\mathcal{L}_1$ at time~$1$. This corresponds to the payment stream $c=(c_0,c_1)\in\mathcal{N}^2$ satisfying $c_0=(0,0)$ and $c_1 = (D,0)$. From~\eqref{eq:formula:piFai} and~\eqref{eq:formula:piFbi}, these prices involve terms of the form $K((Y,0))$ where $Y\in\mathcal{L}_1$. For any such $Y$ and any $(\mathbb{Q},S)\in\bar{\mathcal{P}}$, combining~\eqref{eq:def_HI} and~\eqref{eq:Lambda^Q-as-fraction} gives
\[
H((\mathbb{Q},S);(Y,0)) 
= \tfrac{1}{\alpha}\mathbb{E}_{\mathbb{Q}}\big[\ln\Lambda_1^{\mathbb{Q}}\big] + \mathbb{E}_{\mathbb{Q}}[Y] = f_Y(\mathbb{Q}(\mathrm{u})),
\]
where
\[
f_Y(q)\coloneqq\tfrac{1}{\alpha}\left(q\ln\tfrac{q}{p}+(1-q)\ln\tfrac{1-q}{1-p}\right) + qY^{\mathrm{u}}+(1-q)Y^{\mathrm{d}}\text{ for all }q\in[0,1].
\]
The function $f_Y$ is continuous and convex on $[0,1]$, and that it reaches its minimum at
\[
 \hat{q}_Y \coloneqq \left.pe^{-\alpha Y^{\mathrm{u}}}\middle/\left(pe^{-\alpha Y^{\mathrm{u}}}+(1-p)e^{-\alpha Y^{\mathrm{d}}}\right)\right. \in (0,1).
\]
It then follows from~\eqref{eq:K_I(X)} that
\begin{equation} \label{eq:example:2}
K((Y,0)) = \inf_{(\mathbb{Q},S)\in\bar{\mathcal{P}}} H((\mathbb{Q},S);(Y,0)) = \inf_{q\in[q_{\min},q_{\max}]}f_Y(q) = f_Y(q_Y),
\end{equation}
where
$
 q_Y \coloneqq \min\{\max\{\hat{q}_Y,q_{\min}\},q_{\max}\}.
$
After substituting~\eqref{eq:example:2} into~\eqref{eq:formula:piFai} and~\eqref{eq:formula:piFbi}, the buyer's and seller's disutility indifference prices of $c$ become
\begin{align*}
\pi^{ai}(c;0) & = K((0,0))-K((-D,0)) = f_0(q_0) - f_{-D}(q_{-D}), \\
\pi^{bi}(c;0) & = K((D,0))-K((0,0)) = f_D(q_D) - f_0(q_0).
\end{align*}
\end{example}

We conclude this section by presenting a key property of disutility indifference prices, namely that they produce smaller bid-ask intervals than superhedging prices.

\begin{theorem}\label{th:bidaskspread}
 We have for any $c,w\in\mathcal{N}^2$ that
 \[
  \pi^b(c) \le \pi^{bi}(c;w) \le \pi^{ai}(c;w) \le \pi^a(c).
 \]
 Moreover, the mapping $u\mapsto\pi^{ai}(u;w)$ is convex, and $u\mapsto\pi^{bi}(u;w)$ is concave.
\end{theorem}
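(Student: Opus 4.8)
The plan is to derive everything from the closed-form pricing formulae \eqref{eq:formula:piFai} and \eqref{eq:formula:piFbi} in Theorem \ref{thm:indifferenceprices_expdisutility}, together with the dual representation of superhedging prices in Proposition \ref{prop:ask-price-rep} and the definition \eqref{eq:K_I(X)} of $K$. The key structural observation is that $K$ is a concave function of its argument: indeed, for fixed $(\mathbb{Q},S)\in\bar{\mathcal{P}}$, the map $X\mapsto H((\mathbb{Q},S);X)=\ttotal{t\in\mathcal{I}}{}\tfrac1{\alpha_t}\mathbb{E}_\mathbb{Q}[\ln\Lambda^\mathbb{Q}_t]+\mathbb{E}_\mathbb{Q}[X^b+X^sS_T]$ is affine in $X$, so $K$, being an infimum of affine functions, is concave (and finite, as noted after \eqref{eq:K_I(X)}). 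Concavity of $K$ immediately yields the convexity of $u\mapsto\pi^{ai}(u;w)=K(\ttotal{t=0}{T}w_t)-K(\ttotal{t=0}{T}(w_t-u_t))$ in $u$ (it is a constant minus a concave function composed with an affine map $u\mapsto\ttotal{t=0}{T}(w_t-u_t)$), and likewise the concavity of $u\mapsto\pi^{bi}(u;w)$; this disposes of the second assertion.

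For the chain of inequalities, the middle one $\pi^{bi}(c;w)\le\pi^{ai}(c;w)$ follows from \eqref{eq:pibF=-piaF} and convexity: writing $X_w\coloneqq\ttotal{t=0}{T}w_t$ and $X_c\coloneqq\ttotal{t=0}{T}c_t$, we have $\pi^{ai}(c;w)-\pi^{bi}(c;w)=2K(X_w)-K(X_w-X_c)-K(X_w+X_c)\ge0$ since $K$ is concave and $X_w=\tfrac12(X_w-X_c)+\tfrac12(X_w+X_c)$. The two outer inequalities are symmetric under $c\mapsto-c$ via \eqref{eq:buyer's-superhedging-price} and \eqref{eq:pibF=-piaF}, so it suffices to prove $\pi^{ai}(c;w)\le\pi^a(c)$. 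For this I would bound each occurrence of $K$: from \eqref{eq:K_I(X)}, $K(X_w-X_c)\le H((\mathbb{Q},S);X_w-X_c)$ for every $(\mathbb{Q},S)\in\bar{\mathcal{P}}$, while $K(X_w)=H((\mathbb{Q}^\star,S^\star);X_w)$ for some minimising pair (the infimum is attained because $\bar{\mathcal P}$ is compact and $H$ continuous — this is implicit in the use of $\min$ in Proposition \ref{prop:ask-price-rep}). Evaluating at the same pair, the entropy terms $\ttotal{t\in\mathcal{I}}{}\tfrac1{\alpha_t}\mathbb{E}_\mathbb{Q}[\ln\Lambda^\mathbb{Q}_t]$ cancel, leaving $\pi^{ai}(c;w)\le H((\mathbb{Q}^\star,S^\star);X_w)-H((\mathbb{Q}^\star,S^\star);X_w-X_c)=\mathbb{E}_{\mathbb{Q}^\star}[X_c^b+X_c^sS^\star_T]=\ttotal{t=0}{T}\mathbb{E}_{\mathbb{Q}^\star}[c^b_t+c^s_tS^\star_T]\le\max_{(\mathbb{Q},S)\in\bar{\mathcal P}}\ttotal{t=0}{T}\mathbb{E}_\mathbb{Q}[c^b_t+c^s_tS_T]=\pi^a(c)$ by Proposition \ref{prop:ask-price-rep}.

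I expect the main subtlety to be bookkeeping rather than anything deep: one must pick the right pair at which to evaluate (the minimiser for $K(X_w)$ in the seller's bound, the minimiser for $K(X_w+X_c)$ in the buyer's bound) so that the entropy contributions cancel cleanly, and one must make sure the minima/maxima defining $K$ and $\pi^a$ are genuinely attained — which follows from finiteness of $\Omega$ and compactness of the set of martingale pairs, already exploited implicitly in Proposition \ref{prop:ask-price-rep}. No optimisation over trading strategies is needed at all; the entire argument lives on the dual side.
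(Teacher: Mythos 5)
Your argument for $\pi^{ai}(c;w)\le\pi^a(c)$ has the key inequality pointing the wrong way. Write $X_w=\ttotal{t=0}{T}w_t$ and $X_c=\ttotal{t=0}{T}c_t$, so that $\pi^{ai}(c;w)=K(X_w)-K(X_w-X_c)$. You take $(\mathbb{Q}^\star,S^\star)$ to be a minimiser for $K(X_w)$ and use $K(X_w-X_c)\le H((\mathbb{Q}^\star,S^\star);X_w-X_c)$; but an upper bound on the \emph{subtracted} term only yields $\pi^{ai}(c;w)\ge H((\mathbb{Q}^\star,S^\star);X_w)-H((\mathbb{Q}^\star,S^\star);X_w-X_c)=\mathbb{E}_{\mathbb{Q}^\star}\big[X_c^b+X_c^sS_T^\star\big]$, which proves the (true but different) bound $\pi^{ai}(c;w)\ge\pi^b(c)$, not the one you need. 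The fix stays entirely within your framework: evaluate instead at a minimiser $(\mathbb{Q}^\dagger,S^\dagger)$ of the subtracted term $K(X_w-X_c)$ (attainment is given by Theorem~\ref{thm:opt_(QS)=opt(qx)}, or use an $\varepsilon$-minimiser and let $\varepsilon\downarrow0$). Then $K(X_w)\le H((\mathbb{Q}^\dagger,S^\dagger);X_w)$ and $K(X_w-X_c)=H((\mathbb{Q}^\dagger,S^\dagger);X_w-X_c)$, the entropy terms cancel, and $\pi^{ai}(c;w)\le\mathbb{E}_{\mathbb{Q}^\dagger}\big[X_c^b+X_c^sS_T^\dagger\big]\le\pi^a(c)$ by Proposition~\ref{prop:ask-price-rep}. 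Note that your closing rule of thumb (``the minimiser for $K(X_w)$ in the seller's bound'') repeats the same slip; the correct prescription is to pick the minimiser of the term you subtract when proving an upper bound, and of the leading term when proving a lower bound (your choice for the buyer's bound, the minimiser of $K(X_w+X_c)$, is consistent with this and is correct).

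Everything else is sound, and your route is genuinely different from the paper's. The paper proves $\pi^{ai}(c;w)\le\pi^a(c)$ on the primal side, showing $c-\pi^a(c)\mathbbm{1}\in\mathcal{Z}$ and using that $\mathcal{Z}$ is a convex cone to get $\mathcal{A}_{-w}\subseteq\mathcal{A}_{c-\pi^a(c)\mathbbm{1}-w}$ and hence $V(c-\pi^a(c)\mathbbm{1}-w)\le V(-w)$; and it obtains convexity of $u\mapsto\pi^{ai}(u;w)$ from convexity of $V$ via the sublevel set $\{x:V(x-w)\le V(-w)\}$ rather than from the formula~\eqref{eq:formula:piFai}. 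Your observation that $K$ is concave, being an infimum over $(\mathbb{Q},S)\in\bar{\mathcal{P}}$ of functions affine in $X$, is correct and delivers the convexity/concavity claims and the middle inequality $2K(X_w)\ge K(X_w-X_c)+K(X_w+X_c)$ in one line; the paper's derivation of the middle inequality from $0=\pi^{ai}(0;w)\le\tfrac12\pi^{ai}(c;w)+\tfrac12\pi^{ai}(-c;w)$ is the same fact in disguise. Once the sign slip above is repaired, your proof is complete and arguably more economical, at the cost of relying on the explicit formulae of Theorem~\ref{thm:indifferenceprices_expdisutility} (and hence on the duality theory), whereas the paper's primal argument for the outer inequalities and for convexity needs only Theorem~\ref{th:solution-exists} and the structure of $\mathcal{Z}$.
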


\section{Solving the dual problem}
\label{sec:Solution-construction}

It was shown in Section~\ref{sec:The-dual-problem} that solving the disutility minimisation problem~\eqref{eq:Problem 1} amounts to computing the value of $K(X)$, defined in~\eqref{eq:K_I(X)}, for suitably chosen $X$ (see Theorem~\ref{thm:min_disutility}). The same holds true for determining the buyer's and seller's indifference prices in Section~\ref{sec:Indifference-pricing} (see Theorem~\ref{thm:indifferenceprices_expdisutility}). In this section, we propose a dynamic procedure for determining $K(X)$ for any $X\in\mathcal{L}^2_T$. We also present a dynamic procedure for constructing a pair $(\hat{\mathbb{Q}},\hat{S})\in\mathcal{P}$ such that 
\begin{equation} \label{eq:Solution-construction}
K(X) = H((\hat{\mathbb{Q}},\hat{S});X) = \total{t\in\mathcal{I}}{}\tfrac{1}{\alpha_t}\mathbb{E}_{\hat{\mathbb{Q}}}\big[\ln\Lambda_t^{\hat{\mathbb{Q}}}\big]+\mathbb{E}_{\hat{\mathbb{Q}}}\big[X^b+X^s\hat{S}_T\big].
\end{equation} 

\begin{remark}
  The dynamic procedure can also be used to find the \emph{minimal entropy martingale measure} \citep{frittelli2000introduction,frittelli2000minimal}. This is the measure $\hat{\mathbb{Q}}$ satisfying
 \[
  K(0) = \mathbb{E}_{\hat{\mathbb{Q}}}\big[\ln\Lambda_T^{\hat{\mathbb{Q}}}\big] = \mathbb{E}\left[\tfrac{d\hat{\mathbb{Q}}}{d\mathbb{P}}\ln\tfrac{d\hat{\mathbb{Q}}}{d\mathbb{P}}\right],
 \]
 in the special case when $\mathcal{I}=\{T\}$ and there are no transaction costs (in other words, $\hat{S}=S^b=S^a$).
\end{remark}

The ability to construct a solution by dynamic programming follows from the following representation for $H$ in terms of transition probabilities. The notation
\begin{equation}\label{eq:def:a}
a_t\coloneqq\total{k\in\mathcal{I},k\geq t}{}\tfrac{1}{\alpha_k} \text{ for all }t
\end{equation}
will be used throughout the remainder of this paper for brevity.

\begin{proposition}
\label{prop:H_formula} For all $X\in\mathcal{L}_T^{2}$ and $(\mathbb{Q},S)\in\bar{\mathcal{P}}$, we have
\begin{multline} \label{eq:H_formula}
 H((\mathbb{Q},S);X)
 = \total{t=0}{T-1}a_{t+1}\total{\mu\in\Omega_t^{\mathbb{Q}}}{}\mathbb{Q}(\mu)\total{\nu\in\mu^{+}}{}q_{t+1}^\nu\ln\tfrac{q_{t+1}^\nu}{p_{t+1}^\nu} \\
 + \total{\mu\in\Omega_{T-1}^{\mathbb{Q}}}{}\mathbb{Q}(\mu)\total{\nu\in\mu^{+}}{}q_T^\nu\big(X^{b\nu} + X^{s\nu}S_T^\nu\big). 
\end{multline}
\end{proposition}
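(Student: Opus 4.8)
The plan is to rewrite both terms in the definition \eqref{eq:def_HI} of $H((\mathbb{Q},S);X)$ by expressing the $\mathbb{Q}$-expectations as nested sums over nodes, using the tower property and the transition probabilities $q^\nu_{t+1}=\mathbb{Q}(\nu)/\mathbb{Q}(\mu)$. The key algebraic input is the identity \eqref{eq:Lambda^Q-and-transition-probabilities}, which unwinds $\Lambda^{\mathbb{Q}\nu}_t$ along a path as a product of the ratios $q^\nu_k/p^\nu_k$; taking logarithms turns this product into the sum $\ln\Lambda^{\mathbb{Q}\nu}_t=\sum_{k=1}^{t}\ln\tfrac{q^{\nu_k}_k}{p^{\nu_k}_k}$ along the path to node $\nu$ (recalling $\mathcal{F}_0$ is trivial so $\Lambda^\mathbb{Q}_0=1$).

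First I would handle the entropy term $\sum_{t\in\mathcal{I}}\tfrac{1}{\alpha_t}\mathbb{E}_{\mathbb{Q}}[\ln\Lambda^\mathbb{Q}_t]$. Writing each expectation as $\sum_{\nu\in\Omega_t^{\mathbb{Q}}}\mathbb{Q}(\nu)\ln\Lambda^{\mathbb{Q}\nu}_t$ and substituting the path-sum for $\ln\Lambda^{\mathbb{Q}\nu}_t$, one obtains a double sum over $t\in\mathcal{I}$ and over edges $k\le t$ along paths. The crucial step is to interchange the order of summation: group by the edge $(\mu,\nu)$ from time $k-1$ to $k$, and sum the coefficient $\tfrac{1}{\alpha_t}$ over all $t\in\mathcal{I}$ with $t\ge k$, which by definition \eqref{eq:def:a} is exactly $a_k$. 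Collapsing the marginal probabilities over the later coordinates (i.e. $\sum_{\nu'\in\Omega_t^\mathbb{Q},\,\nu'\subseteq\nu}\mathbb{Q}(\nu')=\mathbb{Q}(\nu)$ and $\mathbb{Q}(\nu)=\mathbb{Q}(\mu)q^\nu_k$) leaves $\sum_{k=1}^{T}a_k\sum_{\mu\in\Omega^\mathbb{Q}_{k-1}}\mathbb{Q}(\mu)\sum_{\nu\in\mu^+}q^\nu_k\ln\tfrac{q^\nu_k}{p^\nu_k}$, which after reindexing $t=k-1$ is the first line of \eqref{eq:H_formula}.

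For the payoff term $\mathbb{E}_{\mathbb{Q}}[X^b+X^sS_T]$, since $X$ and $S_T$ are $\mathcal{F}_T$-measurable, I would condition on $\mathcal{F}_{T-1}$: write the expectation as $\sum_{\mu\in\Omega_{T-1}^\mathbb{Q}}\mathbb{Q}(\mu)\sum_{\nu\in\mu^+}q^\nu_T(X^{b\nu}+X^{s\nu}S^\nu_T)$, which is precisely the second line of \eqref{eq:H_formula}. (Nodes outside $\Omega^\mathbb{Q}_t$ contribute zero throughout, so restricting the sums to $\Omega^\mathbb{Q}_t$ is harmless.) Combining the two pieces gives the claimed formula.

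The only real obstacle is the bookkeeping in the summation interchange for the entropy term: one must be careful that the path decomposition of $\ln\Lambda^\mathbb{Q}_t$ is only valid on nodes of positive $\mathbb{Q}$-probability (so that all the transition probabilities involved are strictly positive and the logarithms are finite), and that after swapping sums the inner marginalisation genuinely telescopes to $\mathbb{Q}(\mu)$. Using the convention $0\ln 0=0$ and noting that any node $\nu$ with $\mathbb{Q}(\nu)=0$ has a predecessor with either zero $\mathbb{Q}$-mass or $q^\nu=0$ (contributing nothing) resolves the edge cases. Everything else is routine rearrangement.
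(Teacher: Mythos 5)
Your proposal is correct and follows essentially the same route as the paper: the paper telescopes $\mathbb{E}_{\mathbb{Q}}[\ln\Lambda^{\mathbb{Q}}_t]$ into a sum of per-edge relative-entropy contributions via \eqref{eq:Lambda^Q-and-transition-probabilities} and then collects like terms over $t\in\mathcal{I}$ to produce the coefficients $a_{t+1}$, which is exactly your path-decomposition plus summation interchange, and it handles the payoff term by the same partition over $\Omega_{T-1}$.
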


The representation in Proposition~\ref{prop:H_formula} suggests that it is possible to construct a sequence $(\hat{q}_t)_{t=1}^T$ of transition probabilities, from which then to assemble the probability measure $\hat{\mathbb{Q}}$. The following construction provides a sequence of auxiliary functions to achieve this aim.

\begin{construction} \label{constr:Jt}
For given $X\in\mathcal{L}_T^{2}$, construct two adapted sequences of random functions $\process{f}{t}{0}{T-1}$ and $\process{J}{t}{0}{T}$ by backward induction. Define $J_T:\Omega\times\mathbb{R}\rightarrow\mathbb{R}\cup\{\infty\}$ as
\begin{equation} \label{eq:def_JT}
J_T^\nu(x)\coloneqq\begin{cases}
X^{b\nu}+xX^{s\nu} & \text{if }x\in\big[S_T^{b\nu},S_T^{a\nu}\big],\\
\infty & \text{otherwise}.
\end{cases}
\end{equation}
for all $\nu\in\Omega_T$. For every $t<T$, assume that $J_{t+1}$ has already been constructed, and define
\begin{align}
  f_t^\mu(x) &\coloneqq \inf \Big\{\ttotal{\nu\in\mu^{+}}{}q^\nu\left(J_{t+1}^\nu(x^\nu) + a_{t+1} \ln\tfrac{q^\nu}{p_{t+1}^\nu}\right) \nonumber\\
 &~:q^\nu\in[0,1],x^\nu\in\dom J_{t+1}^\nu
 \thinspace\forall \nu\in\mu^{+},
  \ttotal{k=1}{m} q^\nu = 1, \ttotal{k=1}{m} q^\nu x^\nu = x
 \Big\},  \label{eq:def_ft}\\
J_t^\mu (x) &\coloneqq
\begin{cases}
 f_t^\mu(x) & \text{if }x\in\big[S_t^{b\nu},S_t^{a\nu}\big],\\
 \infty & \text{otherwise}.
\end{cases} \label{eq:def_Jt}
\end{align}
for all $\mu\in\Omega_t$ and $x\in\mathbb{R}$.
\end{construction}

The definition~\eqref{eq:def_ft} of $f^\nu_t$ is reminiscent of that of the convex hull of the collection $\{J^\nu_{t+1}\}_{\nu\in\mu^+}$ of convex functions, if the term involving the logarithm is disregarded \citep[cf.][Theorem~5.6]{rockafellar1997convex}. The following result summarises the main properties of $\process{J}{t}{0}{T}$, with some of the technical arguments of the generalised convex hull deferred to \ref{sec:conv-hull}. Recall that the $\sigma$-field $\mathcal{F}_0$ is trivial, and therefore $J_0$ is a deterministic function.

\begin{proposition} \label{prop:J-minimizes-H}
 Fix any $X\in\mathcal{L}_T^{2}$ and let $\process{J}{t}{0}{T}$ be the sequence of functions from Construction~\ref{constr:Jt}. Then for each $t$ and $\nu\in\Omega_t$, the function $J^\nu_t$ is convex, bounded from below, continuous on its closed effective domain $\dom J^\nu_t\subseteq[S_t^{b\nu},S_t^{a\nu}]$ and the infimum in~\eqref{eq:def_ft} is attained whenever it is finite. Moreover,
 \begin{equation} \label{eq:J0S0-formula}
  J_0(S_0) = \inf_{(\bar{\mathbb{Q}},\bar{S})\in\bar{\mathcal{P}},\bar{S}_0=S_0}H((\bar{\mathbb{Q}},\bar{S});X) \text{ for all }(\mathbb{Q},S)\in\bar{\mathcal{P}}.
\end{equation}
\end{proposition}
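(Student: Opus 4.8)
The plan is to prove Proposition~\ref{prop:J-minimizes-H} by backward induction on $t$, establishing the structural properties (convexity, boundedness below, continuity on a closed effective domain contained in $[S^{b\nu}_t,S^{a\nu}_t]$, and attainment of the infimum in~\eqref{eq:def_ft}) at every node, and then using these to derive the identification~\eqref{eq:J0S0-formula} with the value of $H$.

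\paragraph{Base case and the generalised convex hull.}
First I would check the claim for $t=T$: the function $J_T^\nu$ in~\eqref{eq:def_JT} is affine on the compact interval $[S_T^{b\nu},S_T^{a\nu}]$ and $+\infty$ elsewhere, hence convex, bounded below (a continuous function on a compact set), and continuous on its closed effective domain $[S_T^{b\nu},S_T^{a\nu}]\subseteq[S_T^{b\nu},S_T^{a\nu}]$; there is no infimum to attain. For the inductive step, assume the properties hold for $J_{t+1}^\nu$ at every $\nu\in\Omega_{t+1}$. The key object is $f_t^\mu$ in~\eqref{eq:def_ft}, which is exactly the ``generalised convex hull'' alluded to after the construction: it is the infimal convolution/hull of the functions $J_{t+1}^\nu(\cdot)+a_{t+1}\ln\tfrac{q^\nu}{p^\nu_{t+1}}$ weighted by the probabilities $q^\nu$. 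Here I would invoke the results of Appendix~\ref{sec:conv-hull} (which the excerpt says collects precisely the properties of this generalised hull): that $f_t^\mu$ is convex, that it inherits boundedness below from the $J_{t+1}^\nu$ together with the fact that $x\ln x$ is bounded below on $[0,\infty)$, that its effective domain is the Minkowski-type combination $\{\sum_{\nu}q^\nu x^\nu: q^\nu\in[0,1], \sum q^\nu=1, x^\nu\in\dom J_{t+1}^\nu\}=\conv\bigl(\bigcup_\nu\dom J_{t+1}^\nu\bigr)$ which is a closed bounded interval, that $f_t^\mu$ is continuous on this domain, and that the infimum defining $f_t^\mu(x)$ is attained whenever $x$ lies in this domain (the feasible set for each fixed $x$ is compact after reducing to extreme points, and the objective is lower semicontinuous). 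Intersecting with $[S_t^{b\mu},S_t^{a\mu}]$ as in~\eqref{eq:def_Jt} preserves all four properties, and robust no-arbitrage~\eqref{eq:RNA} guarantees $\dom J_t^\mu\cap(S_t^{b\mu},S_t^{a\mu})\neq\emptyset$ (via Proposition~\ref{prop:RNA}, there is a martingale pair with $S_t$ in the relative interior, so the backward recursion never produces an empty domain), so $\dom J_t^\mu$ is a nonempty closed subinterval of $[S_t^{b\mu},S_t^{a\mu}]$.

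\paragraph{From the recursion to $H$.}
For~\eqref{eq:J0S0-formula} I would unwind Construction~\ref{constr:Jt} against the formula~\eqref{eq:H_formula} of Proposition~\ref{prop:H_formula}. Writing $H((\mathbb{Q},S);X)$ in the telescoped form of~\eqref{eq:H_formula}, one groups the sum node by node: for fixed transition probabilities and martingale values at times $\ge t+1$, the contribution from the subtree rooted at a node $\mu\in\Omega_t$ is exactly $\sum_{\nu\in\mu^+}q^\nu_{t+1}\bigl(H_\nu + a_{t+1}\ln\tfrac{q^\nu_{t+1}}{p^\nu_{t+1}}\bigr)$ where $H_\nu$ is the corresponding subtree value at $\nu$. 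Here I would use the martingale constraint: $S_\mu=\mathbb{E}_\mathbb{Q}[S_{t+1}\mid\mathcal{F}_t]^\mu=\sum_{\nu\in\mu^+}q^\nu_{t+1}S^\nu_{t+1}$ and $S^\nu_{t+1}\in[S_{t+1}^{b\nu},S_{t+1}^{a\nu}]=\dom$-constraint, so minimising the subtree value over all admissible continuations, by an exchange-of-infima (Bellman) argument, gives precisely $J_t^\mu(S_\mu)$. Formally this is another backward induction: at $t=T$, $\sum_{\nu\in\mu^+}q^\nu_T(X^{b\nu}+X^{s\nu}S^\nu_T)$ minimised over $S^\nu_T\in[S^{b\nu}_T,S^{a\nu}_T]$ subject to $\sum q^\nu_T S^\nu_T=S^\mu_{T-1}$ and $\sum q^\nu_T=1$ equals $f_{T-1}^\mu(S^\mu_{T-1})=J_{T-1}^\mu(S^\mu_{T-1})$ (using $J_T^\nu(x)=X^{b\nu}+xX^{s\nu}$ on its domain); the inductive step replaces $H_\nu$ by $J_{t+1}^\nu(S^\nu_{t+1})$ and recognises~\eqref{eq:def_ft}. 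At the root, $\mathcal{F}_0$ trivial forces $\mathbb{Q}(\Omega)=1$ and $S_0=\bar S_0$ fixed, yielding $\inf H=J_0(S_0)$ over all $(\bar{\mathbb{Q}},\bar S)\in\bar{\mathcal{P}}$ with $\bar S_0=S_0$. Attainment of each nodal infimum (already established) lets the exchange of infima go through without measurability headaches, since $\Omega$ is finite.

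\paragraph{Main obstacle.}
The routine parts are the convexity/boundedness/continuity bookkeeping; the genuinely delicate point is the generalised convex hull itself — specifically, proving that the infimum in~\eqref{eq:def_ft} is \emph{attained} and that $f_t^\mu$ is continuous (not merely lower semicontinuous) up to the boundary of its effective domain, because the logarithmic term $q^\nu\ln q^\nu$ behaves badly as $q^\nu\to0$ and one must argue that near the endpoints of $\dom f_t^\mu$ the optimal weights can be taken to stay in a compact subset of $(0,1]^{|\mu^+|}$ (or that vanishing weights contribute $0$ by the convention $0\ln0=0$ and can be handled by a limiting argument). This is exactly why the paper offloads it to Appendix~\ref{sec:conv-hull}; in the proof proper I would simply cite those lemmas. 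The second, milder obstacle is making the Bellman exchange-of-infima rigorous when some nodes have $\mathbb{Q}(\mu)=0$: one restricts to $\Omega_t^{\mathbb{Q}}$ as in~\eqref{eq:H_formula}, notes that nodes off the support contribute nothing and impose no constraint, and checks that allowing the support to shrink can only decrease $H$, consistent with the infimum over all of $\bar{\mathcal{P}}$.
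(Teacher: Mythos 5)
Your proposal is correct and follows essentially the same route as the paper: the structural properties are obtained by backward induction with the heavy lifting (convexity, continuity, attainment for the generalised hull with the entropy term $q\ln q$) delegated to Propositions~\ref{prop:f-convex} and~\ref{prop:f-continuous-closed} of \ref{sec:conv-hull}, and \eqref{eq:J0S0-formula} is obtained by the same Bellman-type backward induction, which the paper formalises by introducing the partial collections $\bar{\mathcal{P}}_t(\mathbb{Q},S)$ of martingale pairs agreeing with a given one up to time $t$ and restricting sums to $\Omega_t^{\mathbb{Q}}$ exactly as you indicate.
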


The following construction uses the sequence $\process{J}{t}{0}{T}$ of Construction~\ref{constr:Jt} to produce a pair $(\hat{\mathbb{Q}},\hat{S})$ satisfying~\eqref{eq:Solution-construction}. It will be shown in Theorem~\ref{thm:opt_(QS)=opt(qx)} below that this does indeed produe a solution to~\eqref{eq:K_I(X)}.

\begin{construction} \label{constr:optQS}
 For given $X\in\mathcal{L}_T^{2}$ and associated sequence $\process{J}{t}{0}{T}$ from Construction~\ref{constr:Jt}, construct a process $\hat{S}\in\mathcal{N}$ and a predictable process $\process{\hat{q}}{t}{1}{T}$ by induction, as follows. First, choose any $\hat{S}_0$ satisfying
 \begin{equation}
  J_0(\hat{S}_0) = \min_{x\in[S^b_0,S^a_0]} J_0(x). \label{eq:hatS-minimizes-J0}
 \end{equation}
 For each $t<T$ and $\mu\in\Omega_t$, assume that $\hat{S}_t^\mu\in[S^{b\mu}_t,S^{a\mu}_t]$ has already been defined, and choose $\hat{q}_{t+1}^\nu\in[0,1]$, $\hat{S}_{t+1}^\nu\in\big[S_{t+1}^{b\nu},S_{t+1}^{a\nu}\big]$ for all $\nu\in\mu^+$ such that
\begin{align}
    J_t^\mu(\hat{S}_t^\mu) &= \total{\nu\in\mu^{+}}{}\hat{q}_{t+1}^\nu\left(J_{t+1}^\nu(\hat{S}_{t+1}^\nu) + a_{t+1}\ln\tfrac{\hat{q}_{t+1}^\nu}{p_{t+1}^\nu}\right), \label{eq:QShat-and-Js}\\
    \hat{S}_t^\mu &= \total{\nu\in\mu^{+}}{}\hat{q}_{t+1}^\nu\hat{S}_{t+1}^\nu, \label{eq:Shat-martingale}\\
    1 &= \total{\nu\in\mu^{+}}{}\hat{q}_{t+1}^\nu. \label{eq:Shat-martingale:prob}
\end{align}
 Finally, define $\hat{\mathbb{Q}}:\mathcal{F}\rightarrow\mathbb{R}$ as
 $
  \hat{\mathbb{Q}}(A) \coloneqq \ttotal{\omega\in A}{} \tproduct{t=1}{T} \hat{q}^{\omega_t}_t$ for all $A\in\mathcal{F}$,
 where the value of the sum over an empty set is taken to be $0$.
\end{construction}

Construction~\ref{constr:optQS} produces a well-defined pair $(\hat{\mathbb{Q}},\hat{S})$. This is because the existence of $\hat{S}_0$ is assured by the continuity of $J_0$, and the infimum in~\eqref{eq:def_ft} is attained whenever finite. It also produces a solution to the optimization problem~\eqref{eq:K_I(X)}, as claimed at the start of the section.


\begin{theorem}
\label{thm:opt_(QS)=opt(qx)}
For $X\in\mathcal{L}_T^{2}$ given, let $\process{J}{t}{0}{T}$ and $(\hat{\mathbb{Q}},\hat{S})=(\hat{\mathbb{Q}},\hat{S})$ be given by Constructions~\ref{constr:Jt} and~\ref{constr:optQS}. Then $(\hat{\mathbb{Q}},\hat{S})\in\mathcal{P}$ is a minimiser in \eqref{eq:K_I(X)} and
\begin{align*}
 K(X) 
 &= J_0(\hat{S}_0) = \min_{x\in[S^b_0,S^a_0]} J_0(x) \\
 &= H((\hat{\mathbb{Q}},\hat{S});X) = \min_{(\mathbb{Q},S)\in\bar{\mathcal{P}}} H((\mathbb{Q},S);X) \\
 &= \min_{(\mathbb{Q},S)\in\bar{\mathcal{P}}}\left(\ttotal{t\in\mathcal{I}}{}\tfrac{1}{\alpha_t}\mathbb{E}_{\mathbb{Q}}\big[\ln\Lambda^\mathbb{Q}_t\big]+\mathbb{E}_{\mathbb{Q}}\big[X^b+X^sS_T\big]\right).
\end{align*}
Moreover, the probability measure $\hat{\mathbb{Q}}$ is unique on nodes at times in $\mathcal{I}$, in the sense that if $(\mathbb{Q},S)\in\mathcal{P}$ is any other pair produced by Construction~\ref{constr:optQS}, then 
\begin{equation} \label{eq:Qhat-uniqueness}
\hat{\mathbb{Q}}(\nu)=\mathbb{Q}(\nu) \text{ for all }t\in\mathcal{I}\text{ and } \nu\in\Omega_t. 
\end{equation}
\end{theorem}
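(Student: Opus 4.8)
The plan is to tie Constructions~\ref{constr:Jt} and~\ref{constr:optQS} to the minimisation~\eqref{eq:K_I(X)} in two stages: first show $K(X)=\min_{x\in[S^b_0,S^a_0]}J_0(x)=J_0(\hat S_0)$, and then show that the pair $(\hat{\mathbb{Q}},\hat{S})$ assembled in Construction~\ref{constr:optQS} both attains this value and belongs to~$\mathcal{P}$. For the first stage I would split the infimum defining $K(X)$ according to the value $x\in[S^b_0,S^a_0]$ of the martingale at time~$0$ and invoke~\eqref{eq:J0S0-formula} of Proposition~\ref{prop:J-minimizes-H}, which identifies $\inf\{H((\bar{\mathbb{Q}},\bar{S});X):(\bar{\mathbb{Q}},\bar{S})\in\bar{\mathcal{P}},\ \bar{S}_0=x\}$ with $J_0(x)$; taking the infimum over~$x$ gives $K(X)=\inf_{x\in[S^b_0,S^a_0]}J_0(x)$ (values of $x$ not realised as some $\bar{S}_0$ have $J_0(x)=\infty$, so do not matter). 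Since $J_0$ is convex, bounded below and continuous on its closed effective domain $\dom J_0\subseteq[S^b_0,S^a_0]$, which is compact and nonempty (the latter because $\bar{\mathcal{P}}\neq\emptyset$ under robust no-arbitrage), this infimum is attained, hence equals $\min_{x\in[S^b_0,S^a_0]}J_0(x)$, which in turn is $J_0(\hat{S}_0)$ by the choice~\eqref{eq:hatS-minimizes-J0}.

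For the second stage I would unroll~\eqref{eq:QShat-and-Js}--\eqref{eq:Shat-martingale:prob} by forward induction on~$t$: substituting~\eqref{eq:QShat-and-Js} at successive nodes, the products of the $\hat{q}$'s telescope into $\hat{\mathbb{Q}}$-masses, giving
\[
 J_0(\hat{S}_0)=\total{\omega\in\Omega}{}\hat{\mathbb{Q}}(\omega)J_T^{\omega_T}(\hat{S}_T^{\omega_T})+\total{t=0}{T-1}a_{t+1}\total{\mu\in\Omega_t}{}\hat{\mathbb{Q}}(\mu)\total{\nu\in\mu^{+}}{}\hat{q}_{t+1}^\nu\ln\tfrac{\hat{q}_{t+1}^\nu}{p_{t+1}^\nu}.
\]
Using $\hat{S}_T^\nu\in[S_T^{b\nu},S_T^{a\nu}]=\dom J_T^\nu$ to write $J_T^{\omega_T}(\hat{S}_T^{\omega_T})=X^{b\omega_T}+X^{s\omega_T}\hat{S}_T^{\omega_T}$, and noting that nodes with $\hat{\mathbb{Q}}$-mass zero contribute nothing (convention $0\ln 0=0$), the right-hand side is exactly the expression for $H((\hat{\mathbb{Q}},\hat{S});X)$ in Proposition~\ref{prop:H_formula}. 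That $\hat{S}$ is a $\hat{\mathbb{Q}}$-martingale with $S^b_t\le\hat{S}_t\le S^a_t$ is immediate from~\eqref{eq:Shat-martingale},~\eqref{eq:Shat-martingale:prob} and the range constraints built into the construction; together with $\hat{\mathbb{Q}}\sim\mathbb{P}$ this yields $(\hat{\mathbb{Q}},\hat{S})\in\mathcal{P}\subseteq\bar{\mathcal{P}}$. Then $K(X)=J_0(\hat{S}_0)=H((\hat{\mathbb{Q}},\hat{S});X)\ge\inf_{\bar{\mathcal{P}}}H=K(X)$ forces equality throughout, so $(\hat{\mathbb{Q}},\hat{S})$ is a minimiser over $\bar{\mathcal{P}}$, and hence (being in $\mathcal{P}$) over $\mathcal{P}$ as well; the final equality in the statement is then merely the definition~\eqref{eq:def_HI} of~$H$.

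For the uniqueness claim~\eqref{eq:Qhat-uniqueness}, let $(\mathbb{Q},S),(\mathbb{Q}',S')\in\mathcal{P}$ be any two minimisers in~\eqref{eq:K_I(X)} and put $\mathbb{Q}''\coloneqq\tfrac12(\mathbb{Q}+\mathbb{Q}')\sim\mathbb{P}$ and $S''_t\coloneqq(\Lambda^{\mathbb{Q}}_tS_t+\Lambda^{\mathbb{Q}'}_tS'_t)/(2\Lambda^{\mathbb{Q}''}_t)$. One checks routinely that $(\mathbb{Q}'',S'')\in\mathcal{P}$: $S''_t$ is a convex combination of $S_t$ and $S'_t$, hence lies in $[S^b_t,S^a_t]$, and $\Lambda^{\mathbb{Q}''}S''=\tfrac12(\Lambda^{\mathbb{Q}}S+\Lambda^{\mathbb{Q}'}S')$ is a $\mathbb{P}$-martingale, so $S''$ is a $\mathbb{Q}''$-martingale. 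The term $\mathbb{E}_{\mathbb{Q}''}[X^b+X^sS''_T]$ is the average of the corresponding terms for the two minimisers, while for each $t\in\mathcal{I}$ the relative entropy $\mathbb{E}_{\mathbb{Q}''}[\ln\Lambda^{\mathbb{Q}''}_t]=\total{\nu\in\Omega_t}{}\mathbb{P}(\nu)\,g\big(\tfrac{\mathbb{Q}''(\nu)}{\mathbb{P}(\nu)}\big)$, with $g(z)\coloneqq z\ln z$, is \emph{at most} the average, with strict inequality unless $\mathbb{Q}(\nu)=\mathbb{Q}'(\nu)$ for all $\nu\in\Omega_t$, by strict convexity of $g$ on $[0,\infty)$. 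Since $\alpha_t>0$ for $t\in\mathcal{I}$ and $\lvert\mathcal{I}\rvert>0$, any disagreement of $\mathbb{Q}$ and $\mathbb{Q}'$ on a node at a time in $\mathcal{I}$ would yield $H((\mathbb{Q}'',S'');X)<\tfrac12\big(H((\mathbb{Q},S);X)+H((\mathbb{Q}',S');X)\big)=K(X)$, contradicting $K(X)=\min_{\bar{\mathcal{P}}}H$ from the second stage; hence $\hat{\mathbb{Q}}(\nu)=\mathbb{Q}(\nu)$ for all $t\in\mathcal{I}$ and $\nu\in\Omega_t$.

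The step I expect to be the main obstacle is the verification that $\hat{\mathbb{Q}}\sim\mathbb{P}$, equivalently that every transition probability $\hat{q}^\nu_{t+1}$ produced by Construction~\ref{constr:optQS} is strictly positive. The idea is that, since the derivative of $q\mapsto a_{t+1}q\ln(q/p^\nu_{t+1})$ tends to $-\infty$ as $q\downarrow0$, any optimal configuration in~\eqref{eq:def_ft} carrying a zero weight could be strictly improved by shifting a little mass onto that successor --- provided such a shift is feasible, i.e.\ can be compensated within the effective domains $\dom J^\nu_{t+1}$ while keeping the martingale and probability constraints intact. Establishing this feasibility is where the robust no-arbitrage assumption~\eqref{eq:RNA} and the structural properties of the generalised convex hull recorded in Appendix~\ref{sec:conv-hull} come in; once strict positivity is in hand, the telescoping of the first two stages and the strict-convexity argument above are routine.
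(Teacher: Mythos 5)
Your treatment of the value identities and of the uniqueness claim is essentially the paper's own: the recursive expansion of~\eqref{eq:QShat-and-Js} to identify $J_0(\hat{S}_0)$ with $H((\hat{\mathbb{Q}},\hat{S});X)$ via Proposition~\ref{prop:H_formula}, the sandwich through Proposition~\ref{prop:J-minimizes-H} and~\eqref{eq:hatS-minimizes-J0}, and a strict-convexity-of-$z\ln z$ argument applied to a convex combination of two minimisers (you take the midpoint; the paper takes a general $\epsilon$-mixture and cites the same martingale-pair mixing lemma you verify by hand). Those parts are correct.

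The genuine gap is exactly the step you flag and defer: the proof that $\hat{\mathbb{Q}}\sim\mathbb{P}$, i.e.\ that $(\hat{\mathbb{Q}},\hat{S})\in\mathcal{P}$ rather than merely $\bar{\mathcal{P}}$. Your sketch attacks this locally, at the level of the one-step optimisation~\eqref{eq:def_ft}, by arguing that a zero transition weight could be improved by shifting mass onto the corresponding successor; but you concede that the feasibility of such a shift (compensating within the effective domains $\dom J^\nu_{t+1}$ while preserving the barycentre and total-mass constraints) is unproven, and this is precisely where the argument can fail without further structural input --- e.g.\ when $\hat{S}^\mu_t$ sits at an endpoint of the convex hull of the successors' domains. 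The paper avoids the local route entirely and argues globally: assuming $\Lambda_t^{\hat{\mathbb{Q}}}(\omega)=0$ somewhere, it fixes an equivalent pair $(\mathbb{Q},S)\in\mathcal{P}$ (existence from robust no-arbitrage via Proposition~\ref{prop:RNA}), forms the mixture $\bar{\mathbb{Q}}=\epsilon\mathbb{Q}+(1-\epsilon)\hat{\mathbb{Q}}$ with the matching price process~\eqref{eq:Sepsilon}, and observes that on the set $\{\Lambda_t^{\hat{\mathbb{Q}}}=0\}$ the entropy of the mixture picks up the extra term $\epsilon\Lambda_t^{\mathbb{Q}}\ln\epsilon<0$ beyond the convexity bound; choosing $\epsilon$ small enough (explicitly, in terms of $H((\mathbb{Q},S);X)-H((\hat{\mathbb{Q}},\hat{S});X)$ and $\sum_{t\in\mathcal{I}}\frac{1}{\alpha_t}\mathbb{Q}(\Lambda_t^{\hat{\mathbb{Q}}}=0)$) then makes $H((\bar{\mathbb{Q}},\bar{S});X)<H((\hat{\mathbb{Q}},\hat{S});X)$, contradicting the optimality you have already established. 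Note the logical order this imposes: optimality of $(\hat{\mathbb{Q}},\hat{S})$ over $\bar{\mathcal{P}}$ must be proved first and is then used to deduce equivalence, rather than equivalence being an input to membership of $\mathcal{P}$ as your write-up suggests. Until this step is carried out, the assertion $(\hat{\mathbb{Q}},\hat{S})\in\mathcal{P}$ --- and with it the well-posedness of parts of your uniqueness argument applied to pairs produced by the construction --- remains unestablished.
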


The property \eqref{eq:Qhat-uniqueness} ensures that $\hat{\mathbb{Q}}$ is unique as long as the $\sigma$-field generated by $\{\nu\in\Omega_t:t\in\mathcal{I}\}$ is $2^\Omega$. This holds true, for example, if $T\in\mathcal{I}.$ However, the pair $(\hat{\mathbb{Q}},\hat{S})$ is not unique in general, because the solutions to~\eqref{eq:hatS-minimizes-J0} and~\eqref{eq:QShat-and-Js}--\eqref{eq:Shat-martingale:prob} might not be unique. Nevertheless, the property \eqref{eq:Qhat-uniqueness} is sufficient to ensure the uniqueness of the optimal injection strategy, which will be considered in the next section.

\section{Optimal injection and investment}
\label{sec:optimal-injection-investment}

The optimal injection in \eqref{eq:Problem 1'}, and hence the optimal trading strategy in \eqref{eq:Problem 1}, can be obtained by means of the dual formulation of Section \ref{sec:The-dual-problem} and the constructions in Section \ref{sec:Solution-construction}.

The following result gives an explicit formula for the optimal injection strategy. It is consistent with Corollary~3.4 of \citet*{Kallsen_Muhle-Karbe2011} (obtained in a slightly different setting).

\begin{proposition}
\label{prop:Optimal_Investment} For any $u\in\mathcal{N}^2$, let $(\hat{\mathbb{Q}},\hat{S})$ be as in Theorem~\ref{thm:opt_(QS)=opt(qx)} for $X=-\ttotal{t=0}{T}u_{t}$. Then the process $\hat{x}\in\mathcal{N}$ defined by
\begin{equation} \label{eq:xhat}
\hat{x}_{t}=\begin{cases}
\frac{1}{\alpha_{t}}\ln\frac{\hat{\lambda}_{u}\Lambda_{t}^{\hat{\mathbb{Q}}}}{\alpha_{t}} & \text{if }t\in\mathcal{I},\\
0 & \text{if }t\notin\mathcal{I},
\end{cases}
\end{equation}
where $\hat{\lambda}_u$ is given by \eqref{eq:def_lambda()}, is the unique minimiser in  \eqref{eq:Problem 1'}.
\end{proposition}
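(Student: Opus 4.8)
The plan is to prove that $\hat{x}$ is optimal (and the unique optimum) for~\eqref{eq:Problem 1'} by combining the existence of \emph{some} minimiser, furnished by Theorem~\ref{th:solution-exists}, with the strong duality of Theorem~\ref{thm:StrongDuality}, and then identifying an arbitrary minimiser with $\hat{x}$. Write $X=-\ttotal{t=0}{T}u_t$, so that $(\hat{\mathbb{Q}},\hat{S})$ is the pair from Theorem~\ref{thm:opt_(QS)=opt(qx)} for this $X$; in particular $(\hat{\mathbb{Q}},\hat{S})\in\mathcal{P}$, hence $\hat{\mathbb{Q}}\sim\mathbb{P}$ and $\Lambda_t^{\hat{\mathbb{Q}}\nu}>0$ for all $t$ and $\nu\in\Omega_t$, while $\hat{\lambda}_u>0$ by~\eqref{eq:def_lambda()}; thus $\hat{x}$ is a well-defined element of $\mathcal{N}$. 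The first step is to record that $(\hat{\lambda}_u,(\hat{\mathbb{Q}},\hat{S}))$ solves the dual problem~\eqref{eq:problem 1' dual} with value $V(u)$: substituting $(\hat{\mathbb{Q}},\hat{S})$, which attains $K(X)$ by Theorem~\ref{thm:opt_(QS)=opt(qx)}, and $\lambda=\hat{\lambda}_u$ into the formula of Proposition~\ref{prop:infL_dual} gives $\inf_{x\in\mathcal{N}}L_u(x,\hat{\lambda}_u,(\hat{\mathbb{Q}},\hat{S}))=-\hat{\lambda}_u K(X)-\ttotal{t\in\mathcal{I}}{}\tfrac{\hat{\lambda}_u}{\alpha_t}\big(\ln\tfrac{\hat{\lambda}_u}{\alpha_t}-1\big)-\lvert\mathcal{I}\rvert$, which equals $V(u)$ because $\hat{\lambda}_u$ attains the infimum in~\eqref{eq:V-ito-lambda} by Theorem~\ref{thm:min_disutility}.

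The second step computes the minimiser of the map $x\mapsto L_u(x,\hat{\lambda}_u,(\hat{\mathbb{Q}},\hat{S}))$ over $\mathcal{N}$. Using $\mathbb{E}_{\hat{\mathbb{Q}}}[x_t]=\mathbb{E}[\Lambda_t^{\hat{\mathbb{Q}}}x_t]$ for $\mathcal{F}_t$-measurable $x_t$, this map equals a constant (in $x$) plus $\ttotal{t=0}{T}\ttotal{\nu\in\Omega_t}{}\big(\mathbb{P}(\nu)v_t(x_t^\nu)-\hat{\lambda}_u\hat{\mathbb{Q}}(\nu)x_t^\nu\big)$, so it splits into independent scalar minimisations over the $x_t^\nu$. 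For $t\in\mathcal{I}$ the summand $\mathbb{P}(\nu)(e^{\alpha_t x}-1)-\hat{\lambda}_u\hat{\mathbb{Q}}(\nu)x$ is strictly convex and tends to $+\infty$ as $\lvert x\rvert\to\infty$, with unique minimiser $\tfrac{1}{\alpha_t}\ln\tfrac{\hat{\lambda}_u\Lambda_t^{\hat{\mathbb{Q}}\nu}}{\alpha_t}$; for $t\notin\mathcal{I}$ it is $+\infty$ for $x>0$ and equals $-\hat{\lambda}_u\hat{\mathbb{Q}}(\nu)x\ge0$ for $x\le0$, hence is uniquely minimised at $0$. Thus $\hat{x}$ is the unique point attaining $\inf_{x\in\mathcal{N}}L_u(x,\hat{\lambda}_u,(\hat{\mathbb{Q}},\hat{S}))$.

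Finally, let $x^{*}$ be any minimiser of~\eqref{eq:Problem 1'}, which exists by Theorem~\ref{th:solution-exists}; then $\ttotal{t=0}{T}\mathbb{E}[v_t(x_t^{*})]=V(u)$ and $u-(x^{*},0)\in\mathcal{Z}$, the latter giving $\ttotal{t=0}{T}\mathbb{E}_{\hat{\mathbb{Q}}}[u_t^b+u_t^s\hat{S}_T-x_t^{*}]\le0$ via~\eqref{eq:def-of-Z-extra} applied to the pair $(\hat{\mathbb{Q}},\hat{S})\in\mathcal{P}\subseteq\bar{\mathcal{P}}$. Hence $L_u(x^{*},\hat{\lambda}_u,(\hat{\mathbb{Q}},\hat{S}))\le\ttotal{t=0}{T}\mathbb{E}[v_t(x_t^{*})]=V(u)=\inf_{x\in\mathcal{N}}L_u(x,\hat{\lambda}_u,(\hat{\mathbb{Q}},\hat{S}))$, so $x^{*}$ attains the inner infimum and therefore $x^{*}=\hat{x}$ by the second step. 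In particular $\hat{x}=x^{*}\in\mathcal{A}_u$ is a minimiser of~\eqref{eq:Problem 1'}, and since the argument applies verbatim to an arbitrary minimiser, it is the unique one. I expect the main work to be organisational rather than technical: correctly assembling the dual optimality of $(\hat{\lambda}_u,(\hat{\mathbb{Q}},\hat{S}))$ from Proposition~\ref{prop:infL_dual} together with Theorems~\ref{thm:min_disutility} and~\ref{thm:opt_(QS)=opt(qx)}, and exploiting the observation that feasibility of $\hat{x}$ comes for free once a primal minimiser is shown to coincide with it, so that no separate (and more delicate) verification of $u-(\hat{x},0)\in\mathcal{Z}$ is needed.
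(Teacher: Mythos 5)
Your proof is correct and follows essentially the same route as the paper: establish that $(\hat{\lambda}_u,(\hat{\mathbb{Q}},\hat{S}))$ attains the dual value $V(u)$, show that any primal minimiser (which exists by Theorem~\ref{th:solution-exists}) must attain $\inf_{x\in\mathcal{N}}L_u(x,\hat{\lambda}_u,(\hat{\mathbb{Q}},\hat{S}))$, and identify that inner minimiser uniquely with $\hat{x}$, so feasibility of $\hat{x}$ comes for free. The only difference is presentational: the paper phrases the identification through the Fenchel--Young equality $v_t^\ast(\hat{\lambda}_u\Lambda_t^{\hat{\mathbb{Q}}})=\hat{\lambda}_u\Lambda_t^{\hat{\mathbb{Q}}}\bar{x}_t-v_t(\bar{x}_t)$, which is exactly your explicit nodewise minimisation of the Lagrangian.
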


Observe from~\eqref{eq:K_I(X)} and \eqref{eq:def_lambda()} that
\begin{equation} \label{eq:hat-lambda-H}
 \hat{\lambda}_u = \exp\left\{\tfrac{1}{a_0}\left(\ttotal{t\in\mathcal{I}}{}\tfrac{\ln \alpha_s}{\alpha_s}-H((\hat{\mathbb{Q}},\hat{S});X)\right)\right\},
\end{equation}
where $H$ is defined in \eqref{eq:def_HI} and $a_0$ in \eqref{eq:def:a}. This leads to two important observations.

\begin{remark} \label{remark:PL}
Substituting \eqref{eq:def_HI} into \eqref{eq:hat-lambda-H}, the optimal P\&L (cash gain, negative injection) associated with an optimal trading strategy is
\[
-\total{t\in\mathcal{I}}{}\hat{x}_{t} = \total{t\in\mathcal{I}}{}\tfrac{1}{\alpha_{t}}\left(\mathbb{E}_{\hat{\mathbb{Q}}}\big[\ln\Lambda^{\hat{\mathbb{Q}}}_t\big]-\ln\Lambda_{t}^{\hat{\mathbb{Q}}}\right) - \total{t=0}{T}\mathbb{E}_{\hat{\mathbb{Q}}}[u_{t}^{b}+u_{t}^{s}\hat{S}_{T}].
\]
The second term on the right hand side arises naturally in the no-arbitrage pricing of the liability $u$; see Section~\ref{sec:superhedging}. The first term in this expression is effectively a profit that can be achieved from following this particular injection strategy (rather than any other). Taking the expected value of this term under the real-world probability~$\mathbb{P}$ gives that
\[
\total{t\in\mathcal{I}}{}\tfrac{1}{\alpha_{t}}\left(\mathbb{E}_{\hat{\mathbb{Q}}}\big[\ln\Lambda^{\hat{\mathbb{Q}}}_t\big]-\mathbb{E}\big[\ln\Lambda^{\hat{\mathbb{Q}}}_t\big]\right)=\total{t\in\mathcal{I}}{}\tfrac{1}{\alpha_{t}}\total{\omega\in\Omega}{}\left(\hat{\mathbb{Q}}\left(\omega\right)-\mathbb{P}\left(\omega\right)\right)\ln\tfrac{\hat{\mathbb{Q}}\left(\omega\right)}{\mathbb{P}\left(\omega\right)}\geq0.
\]
When $\hat{\mathbb{Q}}=\mathbb{P}$, then this term is zero, but whenever $\hat{\mathbb{Q}}$ is distinct from $\mathbb{P}$, there is some room for profit. The numerical
results in Example~\ref{exa:OptSol_Prob} supports this finding.
\end{remark}

\begin{remark} \label{remark:xhat}
 The optimal injection strategy can be constructed inductively by decomposing \eqref{eq:xhat} into transition probabilities and using Theorem \ref{thm:opt_(QS)=opt(qx)}. Taking the sequence $\process{J}{t}{0}{T}$ from Construction \ref{constr:Jt} with $X=-\ttotal{t=0}{T}u_t$ and pair $(\hat{\mathbb{Q}},\hat{S})$ from Construction \ref{constr:optQS}, one obtains
 \[
  \hat{\lambda}_u = \exp\left\{\tfrac{1}{a_0}\left[\ttotal{t\in\mathcal{I}}{}\tfrac{\ln \alpha_s}{\alpha_s}-J_0(\hat{S}_0)\right]\right\},
 \] 
 and then
   \[
\hat{x}_{t}=\begin{cases}
\frac{1}{\alpha_{t}}\ln\frac{\hat{\lambda}_{u}}{\alpha_{t}} & \text{if }t\in\mathcal{I}\cap\{0\},\\
\frac{1}{\alpha_{t}}\ln\frac{\hat{\lambda}_{u}}{\alpha_{t}} + \frac{1}{\alpha_{t}}\ttotal{s=0}{t-1}\ln\frac{\hat{q}_{s+1}}{p_{s+1}} & \text{if }t\in\mathcal{I}\setminus\{0\},\\
0 & \text{if }t\notin\mathcal{I}.
\end{cases}
  \]
\end{remark}

For any $u\in\mathcal{N}^2$, observe that the process $\hat{S}$ in the martingale pair $(\hat{\mathbb{Q}},\hat{S})$ of Theorem \ref{thm:opt_(QS)=opt(qx)} with $X=-\ttotal{t=0}{T}u_{t}$ is a shadow price process. It satisfies $S^b_t\le \hat{S}_t\le S^a_t$ for all $t$ by construction. Furthermore, the minimal disutility $V(u)$ in the friction-free model with stock price process $\hat{S}$ is exactly the same as in the model with bid-ask spread $[S^{b},S^{a}]$ by Theorem \ref{thm:min_disutility}, leading to \eqref{eq:EqualDisutilitty_ShadowPrice}. The pair $(\hat{\mathbb{Q}},\hat{S})$ also satisfies the claims in Theorem \ref{thm:opt_(QS)=opt(qx)} in the friction-free model, and hence the optimal injection strategy $\hat{x}$ in \eqref{eq:xhat} is the same as in the model with bid-ask spread. From Proposition \ref{prop:shadow-price-properties}, any optimal trading strategy $\hat{y}$ in the friction-free model is also optimal in the model with bid-ask spreads, provided that it satisfies \eqref{eq:TradeAtSpread} and injection is allowed at time $T$. This final observation leads to the following construction of the set of all optimal trading strategies.

\begin{construction}
\label{alg:OptStockPosition}
Assume that $u\in\mathcal{N}^2$ is given. For 
the sequence~$\process{J}{t}{0}{T}$ from Construction~\ref{constr:Jt} with $X=-\ttotal{t=0}{T}u_t$ and the pair~$(\hat{\mathbb{Q}},\hat{S})$ from Construction~\ref{constr:optQS}, construct a sequence of auxiliary sets $\process{\mathcal{W}}{t}{-1}{T}$ by induction, where 
\[\mathcal{W}_t\subset\mathcal{N}^{2\prime}_t\coloneqq\big\{\process{w}{k}{-1}{t}:w\in\mathcal{N}^{2\prime}\big\} \text{ for all }t,\]
and a set $\mathcal{Y}\subset\mathcal{N}^{2\prime}$.

Define $\mathcal{W}_{-1}\coloneqq\{0\}$. For each $t=0,\ldots,T-1$, let $\mathcal{W}_t$ be the collection of all processes $\process{w}{k}{-1}{t}\in\mathcal{N}^{2\prime}_t$ such that $\process{w}{k}{-1}{t-1}\in\mathcal{W}_{t-1}$ and the random variable $w_t\in\mathcal{L}^2_t$ solves on each node $\mu\in\Omega_t$ the system of equations
\begin{align}
 \Delta w^{s\mu}_t\hat{S}^\mu_t &= (\Delta w^s_t)_+S^{a\mu}_t - (\Delta w^s_t)_-S^{b\mu}_t,  \label{eq:OpStockConditionZ}\\
w^{b\mu}_t + w^{s\mu}_t\hat{S}^\nu_{t+1} &= -J^\nu_{t+1}(\hat{S}^\nu_{t+1}) - a_{t+1}\ln\tfrac{\hat{q}^{\nu}_{t+1}}{p^{\nu}_{t+1}} \text{ for all }\nu\in\mu^+, \label{eq:OpStockEquation}
\end{align}
where $a_{t+1}$ is given by \eqref{eq:def:a}. Finally, let $\mathcal{W}_T$ be the collection of all processes $\processdef{w}{t}{-1}{T}\in\mathcal{N}^{2\prime}_T=\mathcal{N}^{2\prime}$ such that $\process{w}{t}{-1}{T-1}\in\mathcal{W}_{T-1}$ and the random variable $w_T\in\mathcal{L}^2_T$ satisfies
\begin{align} \label{eq:OpStockConditions-T}
 w_T &= \total{t=0}{T} u_t, & \Delta w^s_T\hat{S}_T &= (\Delta w^s_T)_+S^a_T - (\Delta w^s_T)_-S^b_T.
\end{align}

Define $\mathcal{Y}$ to be the collection of all trading strategies $\hat{y}\in\mathcal{N}^{2\prime}$ constructed by induction from some $w\in\mathcal{W}_T$ as $\hat{y}_{-1}\coloneqq0$ and
 \begin{align}
\hat{y}^b_t &\coloneqq 
\begin{cases}
 \Delta w^b_0 + \hat{x}_0 - u^b_0 + J_0(\hat{S}_0) & \text{if }t=0,\\
 \hat{y}^b_{t-1}+\Delta w^b_t +\hat{x}_t- u^b_t - a_t\ln\tfrac{\hat{q}_t}{p_t} & \text{if }t>0,
\end{cases}\label{eq:constr:yhatb:Delta}\\
\hat{y}^s_t &\coloneqq 
\hat{y}^s_{t-1}+\Delta w^s_t - u^s_t \text{ for all }t\ge0.\label{eq:constr:yhats:Delta}
\end{align}
Here $\hat{x}\in\mathcal{N}$ is determined as in Remark \ref{remark:xhat}.
\end{construction}

Construction~\ref{alg:OptStockPosition} requires the system of equations \eqref{eq:OpStockConditionZ}--\eqref{eq:OpStockEquation} to be solved at every non-terminal node, and \eqref{eq:OpStockConditions-T} at each terminal node, in each case for two variables. Despite being the stock price process of an arbitrage-free model, the shadow price process~$\hat{S}$ can be degenerate (for example, under large proportional transaction costs it could be constant), which can lead to these systems of equations being underdetermined, and hence having many solutions. This is the reason why the construction produces a collection of processes, rather than a single strategy. In most practical applications (involving models with two or more successors at each non-terminal node and small to moderate transaction costs), the systems involve two or more equations, and hence the collections produced by this construction are very small. That they are not empty (and hence that the systems are well-determined) comes from the following result.

\begin{proposition} \label{prop:Constr-optimal-hedging}
 For given $u\in\mathcal{N}^2$, let $\mathcal{Y}$ be the collection of trading strategies from Construction~\ref{alg:OptStockPosition}. Then $\mathcal{Y}\neq\emptyset$ and every $\hat{y}\in\mathcal{Y}$ is a minimiser in \eqref{eq:Problem 1}.
\end{proposition}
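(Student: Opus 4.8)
The plan is to verify two things: first, that the construction never gets stuck (so that $\mathcal{Y}\neq\emptyset$), and second, that any strategy it outputs is a genuine minimiser of~\eqref{eq:Problem 1}. The second part is essentially a bookkeeping exercise with the dual objects already in hand, so the main obstacle is the first: showing that at every non-terminal node the system~\eqref{eq:OpStockConditionZ}--\eqref{eq:OpStockEquation} is solvable.

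For nonemptiness I would argue inductively on $t$ that $\mathcal{W}_t\neq\emptyset$. The key is to use the optimal data $(\hat{\mathbb{Q}},\hat{S})$ and the functions $\process{J}{t}{0}{T}$ from Constructions~\ref{constr:Jt} and~\ref{constr:optQS}. Fix a node $\mu\in\Omega_t$ with $t<T$. By~\eqref{eq:QShat-and-Js}, $J_t^\mu(\hat{S}_t^\mu)=\sum_{\nu\in\mu^+}\hat{q}_{t+1}^\nu\big(J_{t+1}^\nu(\hat{S}_{t+1}^\nu)+a_{t+1}\ln\tfrac{\hat{q}_{t+1}^\nu}{p_{t+1}^\nu}\big)$, and by~\eqref{eq:Shat-martingale}--\eqref{eq:Shat-martingale:prob} the $\hat{q}_{t+1}^\nu$ are a probability vector with barycentre $\hat{S}_t^\mu$. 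Interpreting $f_t^\mu$ as (essentially) a generalised convex hull via~\eqref{eq:def_ft}, the fact that the infimum is attained at these optimal transition probabilities tells me that the affine function $x\mapsto w_t^{b\mu}+x\,w_t^{s\mu}$ supporting $J_t^\mu$ at $\hat{S}_t^\mu$ — that is, choosing $w_t^{s\mu}$ to be a subgradient of $J_t^\mu$ at $\hat{S}_t^\mu$ and $w_t^{b\mu}$ so the line passes through $(\hat{S}_t^\mu,J_t^\mu(\hat{S}_t^\mu))$ — automatically satisfies the interpolation condition~\eqref{eq:OpStockEquation} at each successor $\nu$: the supporting line lies below each point $(\hat{S}_{t+1}^\nu,-J_{t+1}^\nu(\hat{S}_{t+1}^\nu)-a_{t+1}\ln\tfrac{\hat{q}_{t+1}^\nu}{p_{t+1}^\nu})$ only when that is a convex combination achieving the infimum, and optimality forces equality. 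I would lean on the properties collected in Proposition~\ref{prop:J-minimizes-H} (convexity, finiteness/continuity of $J_t^\mu$ on its effective domain, attainment of the infimum) and on the generalised-convex-hull results in \ref{sec:conv-hull} to make this rigorous. One then picks $\Delta w_t^s$ consistently (adjusting, if $\hat{S}_t^\mu$ sits strictly inside $[S_t^{b\mu},S_t^{a\mu}]$, so that the sign of $\Delta w_t^s$ is compatible with~\eqref{eq:OpStockConditionZ}; this uses that $\hat{S}$ is a valid price in the bid-ask spread), which shows $w_t$ exists on $\mu$. The terminal step uses~\eqref{eq:OpStockConditions-T}: we need $w_T=\sum_t u_t$ together with the trade-at-spread condition, and since $\hat S_T$ lies in $[S_T^b,S_T^a]$ and $w_{T-1}$ was built to interpolate $-J_T^\nu(\hat S_T^\nu)$, the required $\Delta w_T^s$ is determined and has the right sign.

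For the optimality claim, take any $\hat{y}\in\mathcal{Y}$ built from some $w\in\mathcal{W}_T$. I would first check $\hat{y}\in\Psi$: $\hat{y}_{-1}=0$ by definition, and $\hat{y}_T=0$ because telescoping~\eqref{eq:constr:yhatb:Delta}--\eqref{eq:constr:yhats:Delta} and using $w_T=\sum_t u_t$ together with the identities for $\hat{x}$ in Remark~\ref{remark:xhat} and the expression~\eqref{eq:J0S0-formula}/\eqref{eq:hat-lambda-H} for $J_0(\hat S_0)$ collapses the sum to zero — this is the routine calculation I would not grind through here. Next, the construction is rigged so that $\phi_t(\Delta\hat{y}_t+u_t)=\hat{x}_t$ for $t\in\mathcal{I}$ and $\le 0$ for $t\notin\mathcal{I}$: the stock part of $\Delta\hat{y}_t+u_t$ equals $\Delta w_t^s$ by~\eqref{eq:constr:yhats:Delta}, condition~\eqref{eq:OpStockConditionZ} (resp.\ the second part of~\eqref{eq:OpStockConditions-T}) says this is traded exactly at $\hat{S}_t$, so $\phi_t(\Delta\hat y_t+u_t)=\Delta\hat y_t^b+u_t^b+(\Delta w_t^s)\hat S_t$, and substituting~\eqref{eq:constr:yhatb:Delta} together with~\eqref{eq:OpStockEquation}, the $J_{t+1}$ and $\ln(\hat q/p)$ terms cancel, leaving exactly $\hat{x}_t$ (for $t\notin\mathcal I$, $\hat x_t=0$, and the self-financing inequality holds, possibly with slack coming from $\hat S_t$ being interior). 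Therefore $\sum_{t}\mathbb{E}[v_t(\phi_t(\Delta\hat y_t+u_t))]=\sum_{t\in\mathcal I}\mathbb{E}[e^{\alpha_t\hat x_t}-1]$, and plugging in~\eqref{eq:xhat} gives $\sum_{t\in\mathcal I}\big(\tfrac{\hat\lambda_u}{\alpha_t}\mathbb{E}[\Lambda_t^{\hat{\mathbb{Q}}}]-1\big)=\hat\lambda_u\sum_{t\in\mathcal I}\tfrac{1}{\alpha_t}-|\mathcal I|=V(u)$ by Theorem~\ref{thm:min_disutility} (using $\mathbb{E}[\Lambda_t^{\hat{\mathbb{Q}}}]=1$). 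So $\hat y$ attains the infimum in~\eqref{eq:def_mindisutility}, i.e.\ is a minimiser in~\eqref{eq:Problem 1}.

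The delicate point throughout is the first one — ensuring the interpolation/trade-at-spread systems are consistently solvable node by node — and it is exactly where Proposition~\ref{prop:shadow-price-properties}\ref{prop:shadow-price-properties:2} is doing the conceptual work: it guarantees that an optimal strategy in the frictionless $\hat S$-market which trades at the spread exists, and the construction is just an explicit recipe for one such strategy read off from the $J_t$'s. I would phrase the nonemptiness proof so that it parallels the proof of that proposition, reusing whatever attainment lemma is established there rather than re-deriving subgradient existence from scratch.
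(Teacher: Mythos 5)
The optimality half of your argument is essentially sound and close to the paper's in spirit: the paper concludes via Proposition~\ref{prop:shadow-price-properties}\ref{prop:shadow-price-properties:1} after verifying that any $\hat{y}\in\mathcal{Y}$ trades at the spread and is self-financing at $\hat{S}$ with injections $\hat{x}_t$, whereas you recompute the attained value $\ttotal{t\in\mathcal{I}}{}\mathbb{E}[e^{\alpha_t\hat{x}_t}-1]=\hat{\lambda}_u\ttotal{t\in\mathcal{I}}{}\tfrac{1}{\alpha_t}-\lvert\mathcal{I}\rvert=V(u)$ directly; both work. The problem is the nonemptiness half.

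Your node-by-node argument can at best deliver solvability of the interpolation system~\eqref{eq:OpStockEquation} at each node, and even that needs a KKT-type first-order condition in the transition probabilities rather than a plain supporting-line/Jensen argument: the entropy term $a_{t+1}q\ln(q/p)$ is not linear in $q$, so a supporting line of $J_t^\mu$ at $\hat{S}_t^\mu$ does not automatically pass through the lifted points $(\hat{S}_{t+1}^\nu,\,J_{t+1}^\nu(\hat{S}_{t+1}^\nu)+a_{t+1}\ln\tfrac{\hat{q}_{t+1}^\nu}{p_{t+1}^\nu})$. More seriously, once~\eqref{eq:OpStockEquation} holds at two successors with distinct $\hat{S}_{t+1}^\nu$ it pins down $(w_t^{b\mu},w_t^{s\mu})$ uniquely, so $\Delta w_t^{s\mu}$ is forced by consecutive interpolation steps, and~\eqref{eq:OpStockConditionZ} (and the second half of~\eqref{eq:OpStockConditions-T}) becomes a binding constraint coupling time $t-1$ to time $t$ --- not something you can ``pick consistently'' or ``adjust''. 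Whether it holds is precisely the global trade-at-spread property~\eqref{eq:TradeAtSpread} of an optimal strategy, which no purely local argument at node $\mu$ can deliver. The paper closes this by going in the opposite direction: it starts from a minimiser $\hat{y}$ of~\eqref{eq:Problem 1} in the transaction-cost model (existence from Theorem~\ref{th:solution-exists}), uses Proposition~\ref{prop:shadow-price-properties}\ref{prop:shadow-price-properties:2} together with the uniqueness of the optimal injection (Proposition~\ref{prop:Optimal_Investment}) to conclude that $\hat{y}$ is optimal in the frictionless $\hat{S}$-model and trades only at the spread, and then exhibits the corresponding element of $\mathcal{W}_T$ explicitly via a martingale/self-financing computation based on~\eqref{eq:QShat-and-Js}. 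Your closing paragraph correctly identifies Proposition~\ref{prop:shadow-price-properties}\ref{prop:shadow-price-properties:2} as the conceptual engine, but the body of your argument never actually invokes it where it is needed, so as written the construction could still get stuck at~\eqref{eq:OpStockConditionZ}.
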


In practice, the computational cost of constructing an optimal trading strategy $\hat{y}$ grows exponentially in the number of time steps, even in recombinant binary trees. The reason for this is that neither $\hat{x}$ nor $\hat{S}$ are generally recombinant processes, even when $\ttotal{t=0}{T}u_t$ is path-independent and the bid-ask spread $[S^b,S^a]$ is a recombinant process. However, it is very efficient for determining the trading strategy in particular scenarios of interest.

\section{Numerical examples}
\label{sec:numerical-examples}

Consider a friction-free binomial tree model with $T=52$ steps representing one year in real time with weekly rehedging, where the stock price $\processdef{S}{t}{0}{52}$ satisfies $S_0=100$ and
\[
 S_{t+1} =
 \begin{cases}
 e^{\sigma\sqrt{1/52}} S_t & \text{with probability }p,\\
 e^{-\sigma\sqrt{1/52}} S_t & \text{with probability }1-p
 \end{cases} 
\]
for all $t<52$. Here $\sigma=0.2$ is the annual volatility of the return on stock, and the model is assumed to have an annual effective interest rate of $r_e=0.02$. Define the bid and ask prices of the stock as
\begin{align*}
S_t^a & \coloneqq(1+k)S_t, &
S_t^b & \coloneqq(1-k)S_t
\end{align*}
for all $t>0$, where $k$ is the proportional transaction cost parameter. We assume that there are no transaction costs at time $0$, in other words $S_0^a\coloneqq S_0^b\coloneqq S_0=100$.

The numerical results in this section have been obtained by applying the approximation methods introduced in Section~\ref{subsec:numerical-approximation} for the generalised convex hull. Each of these methods allow us to construct a sequence of random piecewise linear functions approximating the sequence $\process{J}{t}{0}{52}$ of Construction~\ref{constr:Jt}, starting from the final value $J_{52}$. This leads naturally to an approximation for $K$ via Theorem~\ref{thm:opt_(QS)=opt(qx)}, and $\pi^{ai}(c;w)$ and $\pi^{bi}(c;w)$ via Theorem~\ref{thm:indifferenceprices_expdisutility}. Superhedging bid and ask prices are also provided for the purposes of comparison, calculated using methods previously reported by \citet{roux2008options}.

It is assumed throughout this section that the investor's endowment is $w=0$, and that the risk aversion coefficient is constant, in other words, $\alpha_t=\alpha$ for all $t\in\mathcal{I}$. Consider a call option with expiry one year, strike $100$ and physical delivery (based on the underlying). This corresponds to the payment stream $\processdef{C}{t}{0}{52}$ where $C_t=0$ for all $t<52$ and
  \begin{align*}
   C_{52} &= (-100,1)\mathbbm{1}_{\{S_{52}>100\}}.
  \end{align*}
  
\begin{example} \label{ex:accuracy-numerical-approximation}
  Table~\ref{tab:accuracy-numerical-approximation} contains approximate indifference prices for the seller and buyer of the call option in the case where $p=0.5$, $k=0.005$, $\mathcal{I}=\{0,\ldots,52\}$ and $\alpha=0.1$, as computed by both the upper and lower approximation methods described in Section~\ref{subsec:numerical-approximation}. In each case, the approximation is obtained by dividing each (discounted) bid-ask interval into $n$ subintervals of equal length.
  \begin{table}
  \caption{Indifference prices by approximation method (Example~\ref{ex:accuracy-numerical-approximation})}
  \label{tab:accuracy-numerical-approximation}
  \begin{center}
  \begin{tabular}{|r|*{6}{d{4}}|}
   \hline
   $n$                                 & \multicolumn{1}{r}{20}      & \multicolumn{1}{r}{50}      & \multicolumn{1}{r}{100}      & \multicolumn{1}{r}{150}    & \multicolumn{1}{r}{200} & \multicolumn{1}{r|}{300} \\
   \hline
   \multicolumn{7}{|c|}{Upper approximation method} \\
   \hline
   $\pi^{bi}(C;0)$   & 8.5759& 8.5673& 8.5658& 8.5655& 8.5654& 8.5654 \\
   $\pi^{ai}(C;0)$   & 9.1596& 9.1672& 9.1684& 9.1687& 9.1687& 9.1688 \\
   \hline
   \multicolumn{7}{|c|}{Lower approximation method} \\
   \hline
   $\pi^{bi}(C;0)$   & 8.4974& 8.5533& 8.5633& 8.5647& 8.5652& 8.5653 \\
   $\pi^{ai}(C;0)$   & 9.2357& 9.1797& 9.171& 9.1692& 9.1690& 9.1690 \\
   \hline   
  \end{tabular}
  \end{center}
  \end{table}
  
  The results from the two approximation methods are consistent converge to the same limit, but the upper approximation converges much faster than the lower approximation. The results suggest that taking $n=150$ results in accuracy up to 3 decimal places, which is perfectly adequate for graphical representation.
  
  The indifference pricing spread (between the seller's and buyer's indifference prices) is considerably smaller than the (superhedging) bid-ask spread; note that the ask and bid prices in this case are $\pi^a(C)=10.4788$ and $\pi^b(C)=6.9694$.
\end{example}

  Different possibilities for the set $\mathcal{I}$ of dates on which injection is allowed will be considered below. The case $\mathcal{I}=\{52\}$, in particular, corresponds to the classical utility indifference pricing framework, where the cash injection at time $52$ reflects the hedging shortfall at the expiration date of the option under exponential utility.

\begin{example} \label{ex:alpha-k}
  Figure \ref{fig:ex:alpha-k} illustrates seller's and buyer's indifference prices for a range of values of the risk aversion coefficient $\alpha$ and transaction cost parameter $k$ in the case where $p=0.5$. Observe that the indifference pricing spread (between the seller's and buyer's indifference prices) is smaller for $\mathcal{I}=\{0,\ldots,52\}$ than $\mathcal{I}=\{52\}$. This is because being able to inject cash at different time steps introduces considerable flexibility, which in turn results in decreased hedging costs. 
 
 As seen in part (a), indifference pricing spreads increase as $\alpha$ increases. The indifference pricing spread remains well within the superhedging bid-ask spread for a large range of values of $\alpha$.

   Indifference pricing spreads increase with $k$, the intuitive reason being that increased transaction costs results in increased trading costs. This is illustrated in part (b). Observe finally that the indifference pricing spreads remain well within the superhedging bid-ask spread for all values of $k$, and also expand slower as $k$ increases.

\begin{figure}
\begin{center}
\begin{tikzpicture}
\begin{filecontents}{alpha-util.dat}
Numstep  Numseg  I  k  riskpref  Kwealth  KsumWL  KdiffWL  bid  ask  time(min) 
52 150 Util 0.0050 0.0100 -0.0000 8.5574 -9.1784 8.5574 9.1784 5.5667
52 150 Util 0.0050 0.1000 0.0000 8.3596 -9.3789 8.3596 9.3789 10.3000
52 150 Util 0.0050 0.2000 0.0000 8.2569 -9.4798 8.2569 9.4798 11.1500
52 150 Util 0.0050 0.2500 0.0000 8.2190 -9.5168 8.2190 9.5168 11.6667
52 150 Util 0.0050 0.5000 0.0000 8.0848 -9.6460 8.0848 9.6460 12.2833
52 150 Util 0.0050 0.7500 0.0000 7.9922 -9.7301 7.9922 9.7301 13.0500
52 150 Util 0.0050 1.0000 0.0000 7.9226 -9.7919 7.9226 9.7919 13.2833
52 150 Util 0.0050 1.2500 0.0000 7.8655 -9.8416 7.8655 9.8416 13.3667
52 150 Util 0.0050 1.5000 0.0000 7.8158 -9.8832 7.8158 9.8832 13.6667
52 150 Util 0.0050 1.7500 0.0000 7.7712 -9.9183 7.7712 9.9183 13.8833
52 150 Util 0.0050 2.0000 0.0000 7.7308 -9.9479 7.7308 9.9479 14.4000
\end{filecontents}
\begin{filecontents}{alpha-full.dat}
Numstep  Numseg  I  k  riskpref  Kwealth  KsumWL  KdiffWL  bid  ask  time(min) 
52 150 Full 0.0050 0.0100 -0.0001 8.6494 -9.0762 8.6496 9.0760 10.8333
52 150 Full 0.0050 0.1000 -0.0000 8.5655 -9.1687 8.5655 9.1687 4.9333
52 150 Full 0.0050 0.2000 -0.0000 8.5294 -9.2024 8.5294 9.2024 7.2167
52 150 Full 0.0050 0.2500 -0.0000 8.5161 -9.2149 8.5161 9.2149 8.7500
52 150 Full 0.0050 0.5000 0.0000 8.4679 -9.2614 8.4679 9.2614 9.0500
52 150 Full 0.0050 0.7500 0.0000 8.4334 -9.2944 8.4334 9.2944 9.7167
52 150 Full 0.0050 1.0000 0.0000 8.4057 -9.3209 8.4057 9.3209 10.0500
52 150 Full 0.0050 1.2500 0.0000 8.3820 -9.3433 8.3820 9.3433 9.8333
52 150 Full 0.0050 1.5000 0.0000 8.3611 -9.3628 8.3611 9.3628 10.1333
52 150 Full 0.0050 1.7500 0.0000 8.3422 -9.3802 8.3422 9.3802 10.5167
52 150 Full 0.0050 2.0000 0.0000 8.3252 -9.3959 8.3252 9.3959 10.6500
\end{filecontents}
\begin{filecontents}{bid-ask-0.005.dat}
Numstep  k  riskpref  bid  ask 
52 0.0050 0.0100 6.9694 10.4788
52 0.0050 0.1000 6.9694 10.4788
52 0.0050 0.2000 6.9694 10.4788
52 0.0050 0.2500 6.9694 10.4788
52 0.0050 0.5000 6.9694 10.4788
52 0.0050 0.7500 6.9694 10.4788
52 0.0050 1.0000 6.9694 10.4788
52 0.0050 1.2500 6.9694 10.4788
52 0.0050 1.5000 6.9694 10.4788
52 0.0050 1.7500 6.9694 10.4788
52 0.0050 2.0000 6.9694 10.4788
\end{filecontents} 
    \pgfplotstableread{alpha-full.dat} {\fullprices}
   \pgfplotstableread{alpha-util.dat} {\utilprices}
   \pgfplotstableread{bid-ask-0.005.dat} {\superprices}
   {\scriptsize
	\begin{axis} [name=plot1, anchor=north, height=\plotHeight,width=\plotWidth, xmin=0, xmax=2, ymin=0, ymax=15, xlabel near ticks, xlabel = Risk aversion $\alpha$, grid = both]
        \addplot[dashed, color1] 
            table [x={riskpref}, y={ask}] {\utilprices};
        \addplot[color1] 
            table [x={riskpref}, y={bid}] {\utilprices};
 
        \addplot[dashed, color2] 
            table [x={riskpref}, y={ask}] {\fullprices};
        \addplot[color2] 
            table [x={riskpref}, y={bid}] {\fullprices};

        \addplot[dashed, color4] 
            table [x={riskpref}, y={ask}] {\superprices};
        \addplot[color4] 
            table [x={riskpref}, y={bid}] {\superprices};
    \end{axis}
	}
	\node[align=center,anchor=north] at ($(plot1.north)+(0,0.5\vSpace)$) {(a) Indifference prices, $k=0.5\%$};
	
\begin{filecontents}{k-bid-ask.dat}
Numstep  k  ask  bid 
52 0.0000 8.86836 8.86836
52 0.0010 9.29058 8.42904
52 0.0025 9.69869 7.96885
52 0.0050 10.4788 6.96943
52 0.0100 11.9256 4.41398
52 0.0200 14.5108 0
\end{filecontents}
\begin{filecontents}{k-util.dat}
Numstep  Numseg  I  k  Kwealth  KsumWL  KdiffWL  bid  ask  time(min) 
52 150 Util 0.0000 0.0000 8.8684 -8.8684 8.8684 8.8684 0.0000
52 150 Util 0.0010 0.0000 8.7123 -9.0242 8.7123 9.0242 12.4667
52 150 Util 0.0025 0.0000 8.5627 -9.1739 8.5627 9.1739 11.3000
52 150 Util 0.0050 0.0000 8.3596 -9.3789 8.3596 9.3789 10.6167
52 150 Util 0.0100 0.0000 8.0124 -9.7309 8.0124 9.7309 8.9333
52 150 Util 0.0200 0.0000 7.4043 -10.3512 7.4043 10.3512 6.9167
\end{filecontents}
\begin{filecontents}{k-full.dat}
Numstep  Numseg  I  k  Kwealth  KsumWL  KdiffWL  bid  ask  time(min) 
52 150 Full 0.0000 0.0000 8.8684 -8.8684 8.8684 8.8684 0.0000
52 150 Full 0.0010 -0.0000 8.7888 -8.9470 8.7888 8.9469 8.8667
52 150 Full 0.0025 -0.0000 8.6992 -9.0353 8.6992 9.0353 6.4667
52 150 Full 0.0050 -0.0000 8.5655 -9.1687 8.5655 9.1687 4.5333
52 150 Full 0.0100 -0.0000 8.3250 -9.4152 8.3250 9.4152 2.7833
52 150 Full 0.0200 -0.0000 7.9146 -9.8771 7.9146 9.8771 1.5667
\end{filecontents}
    \pgfplotstableread{k-full.dat} {\fullprices}
   \pgfplotstableread{k-util.dat} {\utilprices}
   \pgfplotstableread{k-bid-ask.dat} {\superprices}
	{\scriptsize
	\begin{axis} [name=plot2, anchor=west, height=\plotHeight,width=\plotWidth, at={($(plot1.east)+(\hSpace,0)$),}, grid=both, legend style={at={(-0.5\hSpace,-\vSpace)}, anchor=north, legend columns=4}, xmin=0, xmax=0.02, ymin=0, ymax=15, xlabel near ticks, xlabel = Transaction cost $k$]
	
	        \addlegendimage{empty legend}
        \addlegendentry{$\mathcal{I}=\{52\}$:}
        \addplot[dashed, color1] 
            table [x={k}, y={ask}] {\utilprices};
            \addlegendentry{$\pi^{ai}(C;0)$}
        \addplot[color1] 
            table [x={k}, y={bid}] {\utilprices};
            \addlegendentry{$\pi^{bi}(C;0)$}

        \addlegendimage{empty legend}
        \addlegendentry{}
        \addlegendimage{empty legend}
        \addlegendentry{$\mathcal{I}=\{0,\ldots,52\}$:}
        \addplot[dashed, color2] 
            table [x={k}, y={ask}] {\fullprices};
            \addlegendentry{$\pi^{ai}(C;0)$}
        \addplot[color2] 
            table [x={k}, y={bid}] {\fullprices};
            \addlegendentry{$\pi^{bi}(C;0)$}
        
        \addlegendimage{empty legend}
        \addlegendentry{}
        \addlegendimage{empty legend}
        \addlegendentry{Superhedging:}
        \addplot[dashed, color4] 
            table [x={k}, y={ask}] {\superprices};
            \addlegendentry{$\pi^a(C)$}
        \addplot[color4] 
            table [x={k}, y={bid}] {\superprices};
            \addlegendentry{$\pi^b(C)$}

	\end{axis}
	}
	\node[align=center,anchor=north] at ($(plot2.north)+(0,0.5\vSpace)$) {(b) Indifference prices, $\alpha=0.1$};
\end{tikzpicture}
\end{center}
\caption{Indifference prices, transaction costs and risk aversion (Example \ref{ex:alpha-k})}
\label{fig:ex:alpha-k}
\end{figure}
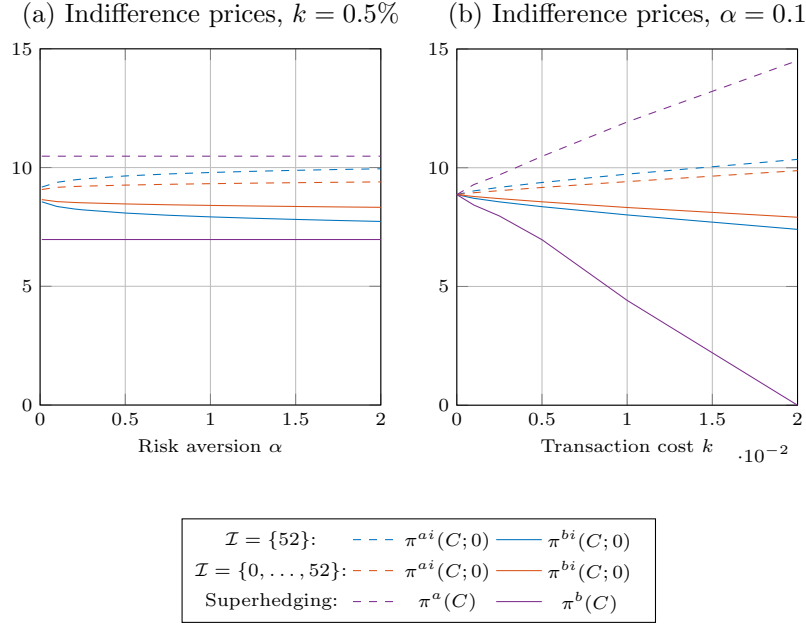
\end{example}

\begin{example} \label{ex:p}
  Buyer's and seller's indifference prices for a range of values of the market probability parameter $p$ in the case where $k=0.005$ and $\alpha=0.1$, are illustrated in Figure \ref{fig:ex:p}. It can be seen in part (a) that indifference pricing spreads tend to be at their largest when $p$ is close to the value of the friction-free risk-neutral probability in this model, which is
  \[q=\frac{(1+r_e)^{1/52}-e^{-\sigma\sqrt{1/52}}}{e^{\sigma\sqrt{1/52}}-e^{-\sigma\sqrt{1/52}}}\approx0.4999.\]
  The effect is more pronounced when injection is allowed at more trading dates. It can be explained by examining the behaviour of $K(0)$, $K(C_{52})$ and $K(-C_{52})$ for different values of $p$, illustrated in part (b). Whilst the dependence of these values on $p$ appear to be convex, they vary in steepness, both within groups associated with the same choice and $\mathcal{I}$, and between groups associated with different choices of~$\mathcal{I}$. This then has consequences for the vertical differences $\pi^{bi}(C;0)=K(C_{52})-K(0)$ and $\pi^{ai}(C;0)=K(0)-K(-C_{52})$.  
\begin{figure}
\begin{center}
\begin{tikzpicture}
\begin{filecontents}{p-util.dat}
Numstep  Numseg  I  k  pu  Kwealth  KsumWL  KdiffWL  bid  ask  time(min) 
52	150	Util	0.0050	0.0500	42.7556	51.0933	32.8386	8.3376	9.9170	60.0500
52	150	Util	0.0050	0.2000	11.4223	19.5342	1.5655	8.1119	9.8568	65.4833
52	150	Util	0.0050	0.3500	2.3783	10.3863	-7.4430	8.0080	9.8213	52.0667
52	150	Util	0.0050	0.4000	1.0152	8.9952	-8.7953	7.9800	9.8105	49.5500
52	150	Util	0.0050	0.4500	0.2400	8.1932	-9.5600	7.9531	9.8000	47.3500
52	150	Util	0.0050	0.5000	0.0000	7.9226	-9.7919	7.9226	9.7919	12.9667
52	150	Util	0.0050	0.5500	0.2415	8.1530	-9.5218	7.9115	9.7633	49.7000
52	150	Util	0.0050	0.6000	1.0185	8.9190	-8.7188	7.9005	9.7372	52.0500
52	150	Util	0.0050	0.6500	2.3837	10.2738	-7.3262	7.8901	9.7099	52.9167
52	150	Util	0.0050	0.8000	11.4369	19.2949	1.8273	7.8580	9.6096	53.5167
52	150	Util	0.0050	0.9500	42.7941	50.6000	33.3943	7.8060	9.3997	53.4167
\end{filecontents}
\begin{filecontents}{p-full.dat}
Numstep  Numseg  I  k  pu  Kwealth  KsumWL  KdiffWL  bid  ask  time(min) 
52 150 Full 0.0050 0.0500 1130.4323 1139.3666 1121.4585 8.9342 8.9739 52.4833
52 150 Full 0.0050 0.2000 301.1368 310.0290 292.1744 8.8922 8.9624 52.8667
52 150 Full 0.0050 0.3500 62.2417 71.0949 53.2571 8.8531 8.9847 52.4167
52 150 Full 0.0050 0.4000 26.3781 35.2066 17.3664 8.8286 9.0116 52.0333
52 150 Full 0.0050 0.4500 6.1081 14.8825 -2.9650 8.7745 9.0730 50.1167
52 150 Full 0.0050 0.5000 0.0000 8.4057 -9.3209 8.4057 9.3209 9.5500
52 150 Full 0.0050 0.5500 6.1396 14.7935 -2.8150 8.6539 8.9545 53.1833
52 150 Full 0.0050 0.6000 26.4436 35.1574 17.5408 8.7138 8.9028 56.2667
52 150 Full 0.0050 0.6500 62.3431 71.0843 53.4621 8.7412 8.8810 56.9000
52 150 Full 0.0050 0.8000 301.3682 310.1380 292.5198 8.7698 8.8484 57.7667
52 150 Full 0.0050 0.9500 1130.9346 1139.7071 1122.1228 8.7725 8.8118 59.1833
\end{filecontents}
\begin{filecontents}{p-inter.dat}
Numstep  Numseg  I  k  pu  Kwealth  KsumWL  KdiffWL  bid  ask  time(min) 
52 150 Inter 0.0050 0.0500 298.5632 307.4945 289.2871 8.9313 9.2761 70.5000
52 150 Inter 0.0050 0.2000 79.5619 88.3627 70.3366 8.8009 9.2253 73.7500
52 150 Inter 0.0050 0.3500 16.4548 25.1481 7.1800 8.6933 9.2748 69.3333
52 150 Inter 0.0050 0.4000 6.9775 15.5977 -2.3399 8.6202 9.3173 55.6500
52 150 Inter 0.0050 0.4500 1.6183 10.0896 -7.7642 8.4714 9.3824 52.5000
52 150 Inter 0.0050 0.5000 0.0000 8.2495 -9.4731 8.2495 9.4731 12.0833
52 150 Inter 0.0050 0.5500 1.6266 9.9658 -7.6244 8.3392 9.2511 56.0333
52 150 Inter 0.0050 0.6000 6.9949 15.4008 -2.1079 8.4059 9.1029 63.4667
52 150 Inter 0.0050 0.6500 16.4818 24.9312 7.4500 8.4494 9.0318 54.6167
52 150 Inter 0.0050 0.8000 79.6242 88.1230 70.6939 8.4989 8.9302 54.9333
52 150 Inter 0.0050 0.9500 298.7029 307.1492 289.8886 8.4463 8.8143 55.8833
\end{filecontents}
   \pgfplotstableread{p-full.dat} {\fullprices}
   \pgfplotstableread{p-util.dat} {\utilprices}
   \pgfplotstableread{p-inter.dat} {\interprices}
   {\scriptsize
	\begin{axis} [name=plot1, anchor=north, height=\plotHeight,width=\plotWidth, xmin=0.05, xmax=0.95, xlabel near ticks, xlabel = Probability $p$, grid = both]
        \addplot[dashed, color1] 
            table [x={pu}, y={ask}] {\utilprices};\label{graphA}
            
        \addplot[color1] 
            table [x={pu}, y={bid}] {\utilprices};\label{graphB}
            
        \addplot[dashed, color4] 
            table [x={pu}, y={ask}] {\interprices};\label{graphC}
        \addplot[color4] 
            table [x={pu}, y={bid}] {\interprices};\label{graphD}

        \addplot[dashed, color2] 
            table [x={pu}, y={ask}] {\fullprices};\label{graphE}
        \addplot[color2] 
            table [x={pu}, y={bid}] {\fullprices};\label{graphF}
    \end{axis}
	}
	\node[align=center,anchor=north] at ($(plot1.north)+(0,0.5\vSpace)$) {(a) Indifference prices};
	
	{\scriptsize
	\begin{axis} [name=plot2, anchor=west, height=\plotHeight,width=\plotWidth, at={($(plot1.east)+(\hSpace,0)$),}, grid=both, legend style={at={(-0.5\hSpace,-\vSpace)}, anchor=north, legend columns=6}, xmin=0.05, xmax=0.95, ymax=100, xlabel near ticks, xlabel = Probability $p$]
	
        \addlegendimage{empty legend}\addlegendentry{$\mathcal{I}=\{52\}$:}
        \addlegendimage{/pgfplots/refstyle=graphA}\addlegendentry{$\pi^{ai}(C;0)$}\addlegendimage{/pgfplots/refstyle=graphB}\addlegendentry{$\pi^{bi}(C;0)$} 
        \addplot[loosely dashed, color1] 
            table [x={pu}, y={KsumWL}] {\utilprices};
            \addlegendentry{$K(C_{52})$}
        \addplot[dashdotted, color1] 
            table [x={pu}, y={Kwealth}] {\utilprices};
            \addlegendentry{$K(0)$}
        \addplot[dotted, color1] 
            table [x={pu}, y={KdiffWL}] {\utilprices};
            \addlegendentry{$K(-C_{52})$}

        \addlegendimage{empty legend}\addlegendentry{$\mathcal{I}=\{0,4,\ldots,52\}$:}
            \addlegendimage{/pgfplots/refstyle=graphC}\addlegendentry{$\pi^{ai}(C;0)$}
            \addlegendimage{/pgfplots/refstyle=graphD}\addlegendentry{$\pi^{bi}(C;0)$}
        \addplot[loosely dashed, color4] 
            table [x={pu}, y={KsumWL}] {\interprices};
            \addlegendentry{$K(C_{52})$}
        \addplot[dashdotted, color4] 
            table [x={pu}, y={Kwealth}] {\interprices};
            \addlegendentry{$K(0)$}
        \addplot[dotted, color4] 
            table [x={pu}, y={KdiffWL}] {\interprices};
            \addlegendentry{$K(-C_{52})$}

        \addlegendimage{empty legend}\addlegendentry{$\mathcal{I}=\{0,1,\ldots,52\}$:}
            \addlegendimage{/pgfplots/refstyle=graphE}\addlegendentry{$\pi^{ai}(C;0)$}
            \addlegendimage{/pgfplots/refstyle=graphF}\addlegendentry{$\pi^{bi}(C;0)$}
        \addplot[loosely dashed, color2] 
            table [x={pu}, y={KsumWL}] {\fullprices};
            \addlegendentry{$K(C_{52})$}
        \addplot[dashdotted, color2] 
            table [x={pu}, y={Kwealth}] {\fullprices};
            \addlegendentry{$K(0)$}
        \addplot[dotted, color2] 
            table [x={pu}, y={KdiffWL}] {\fullprices};
            \addlegendentry{$K(-C_{52})$}

	\end{axis}
	}
	\node[align=center,anchor=north] at ($(plot2.north)+(0,0.5\vSpace)$) {(b) Values of $K$};
\end{tikzpicture}
\end{center}
\caption{Indifference prices and market probability (Example \ref{ex:p})}
\label{fig:ex:p}
\end{figure}
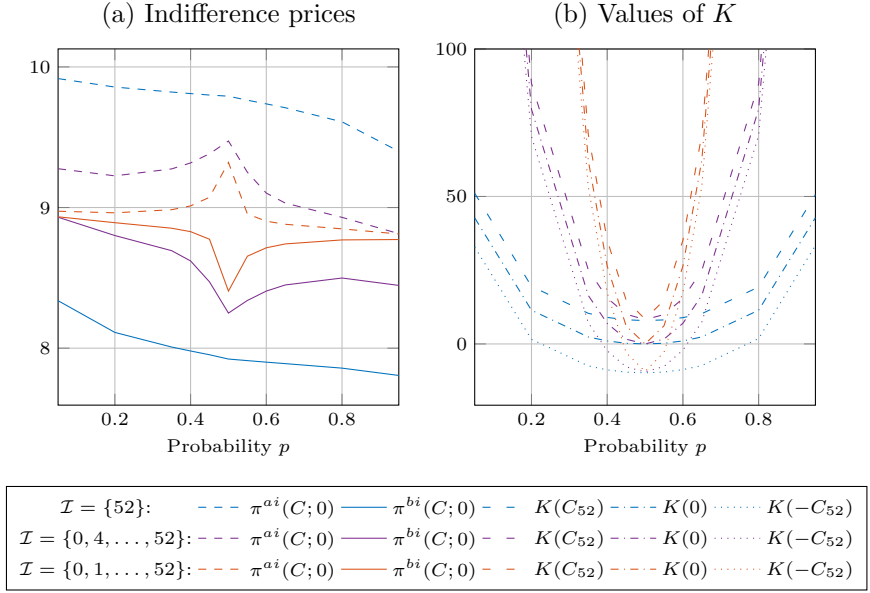
\end{example}

\begin{example} \label{exa:OptSol_Prob}
Figure~\ref{fig:OptSol_Prob} illustrates a number of numerical results related to  optimal injection and hedging strategies for $\mathcal{I}=\{52\}$ and $\mathcal{I}=\{0,13,\ldots,52\}$ and for different values of the probability $p$. The risk-aversion parameter is $\alpha=0.2$ throughout.

Parts (a) and (b) contain histograms of the optimal P\&L $-\ttotal{t\in\mathcal{I}}{}\hat{x}_{t}$ for 100000 randomly generated scenarios in the case where $k=0.005$. It is clear that the P\&L tends to be larger if the real-world probability is further away from the risk-neutral probability (calculated in Example~\ref{ex:p}), thus confirming the analysis in Remark \ref{remark:PL}. The distribution of P\&L depends on $\mathcal{I}$, too, with distributions being much wider in the case where $\mathcal{I}=\{0,13,\ldots,52\}$. Making injections quarterly, instead of at the terminal time step, allows an investor to reduce their regret by taking advantage of the convexity of the disutility function.

Due to the smallness of the transaction costs, Construction \ref{alg:OptStockPosition} produces a unique optimal trading strategy $\processdef{\hat{y}}{t}{-1}{52}$ in this model. Parts (c)--(f) illustrate the optimal stock positions $\process{\hat{y}^s}{t}{0}{52}$ associated with this strategy in two scenarios. The stock positions should be compared to the stock positions associated with the replicating strategy in the binary model without transaction costs (pictured).

Parts (c) and (e) focus on the stock positions when $\mathcal{I}=\{52\}$ in the case of no transaction costs ($k=0$) and $k=0.005$. The presence of transaction costs lead to smoother stock positions due to a reduction in trading. Stock positions tend to be higher for higher values of $p$; this indicates that the investor is taking advantage of market information. 

The corresponding results for the case $\mathcal{I}=\{0,13,\ldots,52\}$ are provided in (d) and (f). In this case the tendency is for stock holdings to be larger (in absolute value) initially, but with larger adjustments each quarter, and tending to similar values in the final quarter as in the case $\mathcal{I}=\{52\}$.
\begin{figure}
\begin{center}
\begin{filecontents}{Coor_prob_ItimeUtility.dat}
 ProbRW1	ProbRW2	ProbRW3	Edges1	Edges2	Edges3
0.00014	5E-05	3E-05	-18	-21	-24
0.00049	0.00104	0.00021	-10.9	-17.1	-15.8
0.00234	0.00465	0.00105	-3.8	-13.2	-7.6
0.00905	0.01371	0.00845	3.3	-9.3	0.600000000000001
0.0283	0.03392	0.03042	10.4	-5.4	8.8
0.07867	0.08323	0.07964	17.5	-1.5	17
0.14578	0.19363	0.16016	24.6	2.4	25.2
0.19429	0.24183	0.22152	31.7	6.3	33.4
0.19748	0.20097	0.21458	38.8	10.2	41.6
0.17848	0.1293	0.13036	45.9	14.1	49.8
0.11003	0.06446	0.09653	53	18	58
0.04312	0.02574	0.04533	60.1	21.9	66.2
0.01042	0.00619	0.01025	67.2	25.8	74.4
0.00129	0.00112	0.00136	74.3	29.7	82.6
0.00012	0.00016	0.00011	81.4	33.6	90.8
0	0	0	88.5	37.5	99
\end{filecontents}
\begin{filecontents}{Coor_prob_ItimeRegret.dat}
ProbRW1	ProbRW2	ProbRW3	Edges1	Edges2	Edges3
1E-05	0.00021	1E-05	-100	-54	-80
1E-05	0.00123	0.00024	-78	-43.8	-57
0.00016	0.00529	0.00107	-56	-33.6	-34
0.00152	0.01806	0.00678	-34	-23.4	-11
0.00564	0.04713	0.02363	-12	-13.2	12
0.01971	0.10353	0.07154	10	-2.99999999999999	35
0.0597	0.17646	0.13277	32	7.2	58
0.13445	0.21467	0.22174	54	17.4	81
0.22089	0.19673	0.22526	76	27.6	104
0.25849	0.13571	0.17096	98	37.8	127
0.19153	0.06822	0.10062	120	48	150
0.08475	0.025	0.03761	142	58.2	173
0.02107	0.00633	0.00684	164	68.4	196
0.002	0.00126	0.00083	186	78.6	219
7E-05	0.00017	0.0001	208	88.8	242
0	0	0	230	99	265
\end{filecontents}

\begin{filecontents}{Stra_prob_ItimeUtility_Scen_1.dat}
TimeStep	Stock	ProbRW1	ProbRW2	ProbRW3	ProbRW1_TCZero	ProbRW2_TCZero	ProbRW3_TCZero	Replication
0	100	-0.184336910465731	0.959710313793719	1.35855016890173	-0.185192865024627	0.94431771327308	1.34268099514511	0.578500430131744
1	102.77317506722	-0.184037152052196	0.959135544560007	1.35811965095054	-0.111026807331303	0.988005676153749	1.37561974141103	0.632059406222836
2	105.623255133974	-0.134859908791171	0.971491439626383	1.35971896224196	-0.038854963864413	1.03052182902558	1.4076767286419	0.684180218290966
3	108.552372910535	-0.065170709829874	1.01257016759301	1.39092480659237	0.03034115919846	1.07086247440601	1.43784043277855	0.733866356004407
4	111.562720250965	0.001828317847819	1.05189402617553	1.42064022256384	0.095661999526562	1.10810645745733	1.46518208359471	0.780203657742144
5	114.656549793277	0.065976638748308	1.08814407736223	1.44884967797379	0.156348070571176	1.14147328176424	1.488913774776	0.822418431604444
6	117.836176645079	0.124993949999743	1.12154612733284	1.4737782136169	0.211829892055695	1.1703730241023	1.50843837314354	0.8599273752731
7	121.103980115965	0.180381963246262	1.15158442457863	1.49543922863953	0.261763803701122	1.19444213286596	1.52338531168681	0.89237337939391
8	124.462405497951	0.231415968765965	1.17749195920751	1.51335990785711	0.306048942325224	1.21356039028313	1.53362754652863	0.919642494611933
9	127.913965895282	0.277840732429515	1.19917842671724	1.5266721631715	0.344822742282004	1.22784639837158	1.5392770377735	0.941859422218555
10	131.461244104982	0.318981021709078	1.21711442491066	1.5270182029107	0.378435048706153	1.23763167809123	1.54065884368987	0.959361617867859
11	127.81657865649	0.319597403795353	1.21687593121554	1.53040143356905	0.347822166301118	1.2315186242804	1.54318655191492	0.945313746175752
12	131.361156147565	0.322875430430429	1.22111456363659	1.5303382517719	0.381032875174104	1.24088415193118	1.54414220290672	0.962402069503111
13	127.719265563427	0.32268471602554	1.22142221755653	1.53385965915422	0.350937382090135	1.23530715458681	1.54721255125087	0.948884208504594
14	131.261144392068	0.326463932051795	1.22576793039734	1.533991843156	0.383688742032206	1.24419516495652	1.54768427722736	0.965500898777745
15	134.901245721296	0.362409841595012	1.23884793457695	1.53403879655192	0.411625177295289	1.24891216620216	1.54421209448826	0.977738033461307
16	138.642293433008	0.392597914105964	1.2385828545375	1.53415024093078	0.435501471303184	1.25019556668287	1.53752728343965	0.986338648218023
17	134.798538720239	0.392908006975412	1.2412164895076	1.53409335504799	0.41372954099048	1.25165448366504	1.54717940971694	0.980273735316903
18	131.061349254813	0.392964110378823	1.241138182572	1.53485723508974	0.389136333239571	1.25095454640452	1.55490630955746	0.971835427593907
19	127.42777059432	0.392675135234749	1.24076106011088	1.54479825428376	0.360950443292805	1.24734323901519	1.55996212892755	0.960265091579063
20	130.961565757155	0.392846755250119	1.24066225931569	1.54488851182456	0.391900311747206	1.25437516974652	1.55855852275463	0.97504338180831
21	127.330753520095	0.392838805027195	1.24056565467005	1.54715261183489	0.364486678118403	1.25155484277781	1.56441192628605	0.964257961994332
22	130.861858229617	0.393140195124561	1.24584200521319	1.54752346629253	0.394664991340137	1.25779699449097	1.56221211380943	0.97825237538626
23	134.490886654541	0.392694667425961	1.24573916918445	1.54746642344301	0.419804356316976	1.25964607823157	1.55584702669828	0.987644538022391
24	130.762226614359	0.39291918577179	1.24810558561219	1.54727858027286	0.397403354280256	1.26119300328098	1.5658400654994	0.981435390847391
25	134.38849208017	0.393272067898432	1.24838534079947	1.54729375379373	0.421666700620101	1.26214832282916	1.55857495557565	0.989939536586434
26	138.115320235749	0.406308317142628	1.24826464234618	1.54719851236056	0.442093459073771	1.25989598552032	1.54832400436144	0.995032332077011
27	141.945499860538	0.427941767041935	1.24824287359704	1.5474131514118	0.459789043361824	1.25552443434459	1.53616966949712	0.997807716660167
28	145.881897071711	0.44597753882557	1.24828143324194	1.54720303886556	0.475657074589366	1.24992077760556	1.52299323544192	0.999158146679868
29	141.837429719849	0.445963205248536	1.24830371483363	1.54723629901727	0.46012573681989	1.25646742217661	1.53732648925	0.998554342488213
30	137.90509222432	0.445994710994571	1.24836852586996	1.54732276693597	0.443756467654443	1.26280568482215	1.55167339481508	0.997538262683537
31	141.729441858311	0.446029905406359	1.2482917138318	1.5472331939179	0.460300320212559	1.25724876189688	1.538321823816	0.999139170590482
32	137.800098246116	0.445961770331065	1.24812463868879	1.54715908310876	0.444276898779868	1.26395017381655	1.55303798067757	0.99848063656849
33	141.621536213282	0.446241474189288	1.24819839579034	1.54711333502589	0.460310557048453	1.25786621736601	1.53915343717988	0.999559964713575
34	137.69518420503	0.446210129604575	1.24811857828597	1.547118723348	0.444563977669482	1.26486178606331	1.55416985749075	0.999189979708321
35	133.877687392851	0.446094326188141	1.2482364290394	1.54745467637773	0.42807748089403	1.27176592539432	1.56932356149471	0.998518537972785
36	130.166027847205	0.445870520509802	1.24834832670772	1.54723164550354	0.410603097232054	1.27834915738192	1.58439159292406	0.997310172639494
37	133.775759677454	0.445952129968919	1.24834573585693	1.54721755537825	0.428472738077917	1.2728040138046	1.57058836788528	0.999348431101607
38	137.485595690812	0.446022844082692	1.24843774980447	1.54730914680757	0.44444365716937	1.26599196019147	1.55574106435792	0.999915153160712
39	133.673909564675	0.44581348052025	1.24832544978961	1.54744554270398	0.428514830985741	1.27348942773097	1.5715006725351	0.999825491116372
40	129.967899608112	0.444522088653238	1.24821922349605	1.54757071968843	0.41203960352632	1.28110849030796	1.58761746911981	0.999641078551388
41	126.364636027733	0.426298786620978	1.24829819287499	1.55088589216782	0.394904968278907	1.28875518919816	1.60400421328794	0.999261787689015
42	122.861270256497	0.407295672874638	1.24820227661805	1.5633823962816	0.376891741295096	1.29622993138513	1.62046822148287	0.998481680550981
43	126.268428370519	0.40756068357298	1.24809447555965	1.56322276333825	0.395182703220631	1.28971397511014	1.60520319677296	0.999999999999999
44	129.770072943861	0.407482420996366	1.24807793891962	1.5633551796469	0.411502761898928	1.28189649188191	1.5888727251806	1
45	126.172293960961	0.407376136715194	1.2480844506157	1.56344663897423	0.394721875000019	1.2899347167727	1.60566431902222	1
46	129.671272558825	0.407371524398855	1.24839514348469	1.56353298190076	0.411054368414622	1.28211127717234	1.58932140475964	0.999999999999999
47	133.267283958773	0.407504913522842	1.24827699450586	1.56372194746807	0.42694615477222	1.27449894098126	1.5734194787445	0.999999999999995
48	136.963019050278	0.407236944813601	1.24839485539335	1.56404520233841	0.442409125870671	1.26709201190071	1.55794664159151	1
49	133.165820973748	0.407511095233351	1.24773423635543	1.5641523277595	0.426509528066576	1.27470808985848	1.57385638403718	1
50	136.85874231905	0.407947897003672	1.24812657108624	1.5635729065439	0.441984280860908	1.26729551721916	1.55837175767105	1
51	140.654074838353	0.408072320556085	1.24858950688096	1.5510602612883	0.457041471401345	1.26008296138007	1.54330495998186	0.999999999999998
52	136.754544978863	NaN	NaN	NaN	NaN	NaN	NaN	NaN
\end{filecontents}
\begin{filecontents}{Stra_prob_ItimeRegret_Scen_1.dat}
TimeStep	Stock	ProbRW1	ProbRW2	ProbRW3	ProbRW1_TCZero	ProbRW2_TCZero	ProbRW3_TCZero	Replication
0	100	-2.4330711773617	1.95599169607703	3.54626147313572	-2.47627275049374	2.04176956269709	3.63522269018522	0.578500430131744
1	102.77317506722	-2.43154408371607	1.95636706539117	3.54571557488407	-2.34028544799372	2.05584448594649	3.60630074697561	0.632059406222836
2	105.623255133974	-2.40659451530073	1.95502649127673	3.546469078021	-2.20796051033055	2.06954666122941	3.5781662596947	0.684180218290966
3	108.552372910535	-2.26454457367498	1.95562744181244	3.54638565603481	-2.08023443121939	2.08185082961081	3.54976266310099	0.733866356004407
4	111.562720250965	-2.12377369046325	1.95576018683906	3.5466789334777	-1.95796297512018	2.09181485660288	3.52011736115241	0.780203657742144
5	114.656549793277	-1.9874813030085	1.95550831252664	3.54577791494331	-1.84186301252863	2.09863783224364	3.48839980429067	0.822418431604444
6	117.836176645079	-1.85375249639843	1.95524846990002	3.51182090779194	-1.73246255759652	2.1017099705899	3.45397136675484	0.8599273752731
7	121.103980115965	-1.72910554726037	1.95520273587808	3.47346606639441	-1.63006492337724	2.1006483932821	3.41642110856552	0.89237337939391
8	124.462405497951	-1.61093411921002	1.95390062460226	3.43081880576984	-1.53473171453491	2.09531407729673	3.37558270227872	0.919642494611933
9	127.913965895282	-1.5016330929401	1.95448228578024	3.38058538873779	-1.44628729752765	2.08580732683067	3.33152988443834	0.941859422218555
10	131.461244104982	-1.39856036676985	1.95692841177889	3.32635748685635	-1.36434465877896	2.07244185876136	3.28455052115591	0.959361617867859
11	127.81657865649	-1.40099460775056	1.95318142569363	3.3306774619933	-1.44465257332279	2.09013325859433	3.33680496913241	0.945313746175752
12	131.361156147565	-1.35303451076011	1.95109109653167	3.30534233485106	-1.36307470781292	2.07633039921541	3.28936260311754	0.962402069503111
13	127.719265563427	-0.964109761756396	1.82443113047935	2.80699104256901	-0.844956270738783	1.80815304675123	2.74386923674341	0.948884208504594
14	131.261144392068	-0.884600637191869	1.82298827933274	2.77900326354513	-0.779935571458874	1.80158369731406	2.71205103412658	0.965500898777745
15	134.901245721296	-0.809191558993544	1.8176714453789	2.74666431070225	-0.720600535036749	1.79126043168387	2.67716021654215	0.977738033461307
16	138.642293433008	-0.740667733835293	1.80682939459367	2.70884998968801	-0.666172882526493	1.77790940361258	2.63990455388292	0.986338648218023
17	134.798538720239	-0.73898033346969	1.80676384109388	2.70928546631472	-0.719358847662366	1.7944159803613	2.68099075851701	0.980273735316903
18	131.061349254813	-0.739777901392833	1.80648132206626	2.70859740796095	-0.776261855469101	1.80919278402575	2.72104807348456	0.971835427593907
19	127.42777059432	-0.740844816880642	1.80667298478552	2.70880434969879	-0.837678853279711	1.82149953388745	2.75935620362453	0.960265091579063
20	130.961565757155	-0.739961915584832	1.80745687463193	2.70868304039092	-0.774385828375003	1.81303874562293	2.72558880464725	0.97504338180831
21	127.330753520095	-0.740429938269705	1.80726237630957	2.70959154729155	-0.835055889633456	1.82614860434477	2.76471985486949	0.964257961994332
22	130.861858229617	-0.742330877921207	1.80735118078698	2.70817176091384	-0.772509776752108	1.8168862327004	2.73013159065576	0.97825237538626
23	134.490886654541	-0.73736809145893	1.80988753979709	2.70829995835971	-0.715876007093855	1.80364915864993	2.69225200405005	0.987644538022391
24	130.762226614359	-0.736862773893762	1.80865146292397	2.70950609015689	-0.770660718854014	1.82070822814817	2.73464941480343	0.981435390847391
25	134.38849208017	-0.704576068198722	1.80704297692889	2.70923192683064	-0.714878971312564	1.80656589531462	2.69584579355409	0.989939536586434
26	138.115320235749	-0.166106711161309	1.55055419689082	2.15827506724651	-0.110845413929469	1.52475963896363	2.10161567664586	0.995032332077011
27	141.945499860538	-0.124281245872506	1.54111792322581	2.13061069106606	-0.078229629936519	1.51324115202902	2.07453162233407	0.997807716660167
28	145.881897071711	-0.086925725433649	1.52872440148876	2.101085630853	-0.047843997501136	1.50068340853125	2.04682832420398	0.999158146679868
29	141.837429719849	-0.087186838730381	1.52868816235689	2.10047376312265	-0.078302868848433	1.51438050186502	2.07609863601178	0.998554342488213
30	137.90509222432	-0.087079588821087	1.52849272648421	2.10031876402596	-0.11002532737465	1.52807310696077	2.10580852694663	0.997538262683537
31	141.729441858311	-0.087280186960415	1.52838662825055	2.1002981161466	-0.078538530165363	1.51535835320329	2.07750447704153	0.999139170590482
32	137.800098246116	-0.086723891776043	1.52827464224959	2.10037979086325	-0.109926839008753	1.52941971106461	2.10759532478665	0.99848063656849
33	141.621536213282	-0.086121486009151	1.52884484870748	2.10059637705155	-0.07893885061667	1.51617247001844	2.07874690964618	0.999559964713575
34	137.69518420503	-0.086743835506546	1.52742677588813	2.10033170268169	-0.110062024369355	1.5305335924183	2.10914973527318	0.999189979708321
35	133.877687392851	-0.086705886225642	1.5281902939602	2.09987493706881	-0.142363576184727	1.54501331281585	2.14012858501663	0.998518537972785
36	130.166027847205	-0.087397843991282	1.52870628087348	2.10028126744234	-0.176103978175385	1.55938814212436	2.17147301320861	0.997310172639494
37	133.775759677454	-0.087154743189678	1.52727462091055	2.10112633505376	-0.142402954945773	1.54625959650759	2.14182830466896	0.999348431101607
38	137.485595690812	-0.088019217068929	1.52722432355463	2.10214956968946	-0.111027838821972	1.53206876722223	2.11156697555512	0.999915153160712
39	133.673909564675	0.386937838959465	1.29271940787289	1.60616530366353	0.428514830985741	1.27348942773097	1.5715006725351	0.999825491116372
40	129.967899608112	0.387045799947262	1.2927981113184	1.6060932496819	0.41203960352632	1.28110849030796	1.58761746911981	0.999641078551388
41	126.364636027733	0.386884890828339	1.29258277924826	1.60607855271768	0.394904968278907	1.28875518919816	1.60400421328794	0.999261787689015
42	122.861270256497	0.386867412925505	1.29270075077571	1.60623156451011	0.376891741295096	1.29622993138513	1.62046822148287	0.998481680550981
43	126.268428370519	0.386953901675216	1.29285539696563	1.60614060150068	0.395182703220631	1.28971397511014	1.60520319677296	0.999999999999999
44	129.770072943861	0.386855233855293	1.29280607958474	1.60614460036918	0.411502761898928	1.28189649188191	1.5888727251806	1
45	126.172293960961	0.386820235099097	1.29235743423886	1.60610484376489	0.394721875000019	1.2899347167727	1.60566431902222	1
46	129.671272558825	0.386882686748891	1.29249463014433	1.60618105681666	0.411054368414622	1.28211127717234	1.58932140475964	0.999999999999999
47	133.267283958773	0.38696659715322	1.28741887637293	1.5957920676264	0.42694615477222	1.27449894098126	1.5734194787445	0.999999999999995
48	136.963019050278	0.387115619479168	1.27891369227195	1.57677472335647	0.442409125870671	1.26709201190071	1.55794664159151	1
49	133.165820973748	0.38745279043211	1.27866202149289	1.57761059338514	0.426509528066576	1.27470808985848	1.57385638403718	1
50	136.85874231905	0.387102255445409	1.27638743429167	1.57086399310391	0.441984280860908	1.26729551721916	1.55837175767105	1
51	140.654074838353	0.387686942305639	1.26641503656289	1.5510602612883	0.457041471401345	1.26008296138007	1.54330495998186	0.999999999999998
52	136.754544978863	NaN	NaN	NaN	NaN	NaN	NaN	NaN
\end{filecontents}

\begin{filecontents}{Stra_prob_ItimeUtility_Scen_2.dat}
TimeStep	Stock	ProbRW1	ProbRW2	ProbRW3	ProbRW1_TCZero	ProbRW2_TCZero	ProbRW3_TCZero	Replication
0	100	-0.184336910465731	0.959710313793719	1.35855016890173	-0.185192865024627	0.94431771327308	1.34268099514511	0.578500430131744
1	97.2275742000573	-0.184088406868371	0.960088173879998	1.35855465958426	-0.263567974928649	0.898150382842574	1.30787291798713	0.52190188939706
2	99.9238650462358	-0.183802385544637	0.960126933920056	1.35864131113148	-0.187294527898079	0.943076657979361	1.34174346464032	0.57698064780995
3	102.6949287579	-0.18373273044273	0.960366985924267	1.35904618093356	-0.111978086741054	0.987891782160151	1.37580118205969	0.631674306218631
4	105.542838917513	-0.135513634035611	0.971963630136484	1.36034226518346	-0.038686751937219	1.03150483081949	1.40894709592984	0.684899332148722
5	102.616742021372	-0.135669219483772	0.971727785675938	1.36033571574115	-0.11289042219767	0.987817470148451	1.37602242971447	0.631328581556427
6	99.771768990511	-0.135498821936129	0.97167648763349	1.36025659141354	-0.191597807762652	0.940496560941113	1.33977111114972	0.573842459442595
7	97.0056707259589	-0.169793083315857	0.971902825858012	1.36000806022382	-0.27385525079248	0.890520573677176	1.30118036180174	0.513411400158064
8	94.3162604833449	-0.247625865749037	0.938447150822502	1.34875746994606	-0.35844985263939	0.839127927172142	1.26149760110803	0.451265557240783
9	91.7014121441635	-0.327339382651185	0.890525238931171	1.31139114612029	-0.443961756241511	0.787764726358347	1.22217819127398	0.388842534662722
10	94.2444528420338	-0.327275909708135	0.890250828776774	1.31174546527168	-0.36363568667336	0.834854563135729	1.25754605304202	0.446696669371422
11	91.6315953164264	-0.333047043051846	0.885951816136661	1.30837794337403	-0.450591306593088	0.782073664912895	1.21681812232017	0.382847522578388
12	89.0911773269747	-0.414948711381569	0.83586936782961	1.26935408262801	-0.536734531947078	0.731079646949975	1.17822075993582	0.320469647040844
13	86.6211905412889	-0.495759339147369	0.787257743245184	1.23104631995268	-0.620390993981948	0.683574665274725	1.14346592211695	0.261256199255764
14	89.0233478003088	-0.496880628422223	0.786807211588564	1.23039467862109	-0.545043285261561	0.72373687882679	1.17121868187649	0.312814022991362
15	86.5552415379203	-0.504805078305747	0.779331701776575	1.22370517139806	-0.629690720890263	0.675268468441857	1.13551013006036	0.252628225469318
16	88.9555699156216	-0.505521165227647	0.780059987080565	1.22414772667411	-0.553820762764161	0.715926122528234	1.16374887522365	0.304690172392308
17	86.4893427447949	-0.514439229330775	0.771605326709572	1.21680330576227	-0.639488752913068	0.666464723494501	1.12705705687306	0.243502458396989
18	88.8878436335957	-0.514477190949361	0.771860845481799	1.21674653497098	-0.563123157565594	0.707591185504368	1.15575514762513	0.29604190251212
19	86.4234941236851	-0.524619049263845	0.763689573487784	1.20924070633128	-0.649838790836821	0.657109730222972	1.11805300254879	0.23382519764229
20	84.0274668753881	-0.606617179701749	0.712873498112323	1.16839342081793	-0.731020689549923	0.61319522019867	1.08728220233131	0.177840809617859
21	81.6978677046965	-0.686079205283332	0.666125431361836	1.13275749924597	-0.80522372928181	0.577322241661281	1.06492772398788	0.129553782803079
22	79.4328549424485	-0.760653994136951	0.624654341943557	1.10353374419713	-0.87160450279066	0.550364503757158	1.05187396374152	0.089828012394257
23	81.6356670708932	-0.760390844180387	0.62493265247013	1.10283576148563	-0.816334175251362	0.567265198435167	1.05524220182768	0.119155571519698
24	83.8995670360617	-0.759460516640955	0.624960299568355	1.10298224897894	-0.754397074081518	0.591868011901846	1.06667771109947	0.155849929729741
25	86.226248910611	-0.759696554551771	0.624828492810314	1.10278696923016	-0.684868525188735	0.625069681062879	1.08706737539029	0.200816872013314
26	83.8356901394904	-0.759935283795152	0.62484256131119	1.10273902310038	-0.76703803683248	0.580252808412983	1.05542427919018	0.143902511143802
27	86.1606005958703	-0.759506310742276	0.624579728138867	1.10256197882603	-0.698148837483442	0.612787449504142	1.07513715356496	0.188211389667729
28	88.5499848893617	-0.720159489295466	0.624562465187616	1.10291432955584	-0.62033290626788	0.655229793950499	1.10510370682554	0.242110261972707
29	91.0056309923402	-0.639030867634614	0.638394922314681	1.10248445167449	-0.533207113081077	0.707936501859382	1.14567126346502	0.305964361067208
30	88.4825673993079	-0.639025468525136	0.638528229000467	1.10247984628916	-0.634741217979088	0.64179337125859	1.0920100563987	0.228359071267655
31	90.9363438972614	-0.638682337081871	0.638968249220515	1.10220486409418	-0.546719134024369	0.695370145014394	1.13343842970604	0.293091729736746
32	93.4581679132614	-0.563509514090099	0.684549750824842	1.1165781663032	-0.448464786766828	0.760108635457759	1.18635632548497	0.368685080610364
33	90.8671095538803	-0.563281988525855	0.684303967223303	1.11608206388237	-0.561778518275254	0.681257145391308	1.11965920729008	0.27867222226858
34	88.3478863649463	-0.563848434489178	0.683790923556381	1.11598602468933	-0.668025634644595	0.610454954729025	1.06135796810331	0.196390399035841
35	85.8985067696605	-0.639361458009327	0.660503003997136	1.10895866391818	-0.762358586323537	0.55257763407727	1.01633806118255	0.126706106272618
36	88.2806227425107	-0.639651067596027	0.660077264159601	1.10963668854668	-0.687508702433899	0.59194599918523	1.04319256892675	0.177565955437331
37	85.8331079812472	-0.661825540357151	0.638801231730614	1.08876387083169	-0.781913530457076	0.534024578814784	0.998138358731812	0.107828566873498
38	88.2134103312029	-0.662515502610574	0.638712376092806	1.08896668412607	-0.70943035823283	0.570999197836796	1.02258958571133	0.156303425654483
39	90.659722632452	-0.66167029176265	0.638977621314451	1.0885565511952	-0.620753115060218	0.625126031158774	1.06453095134263	0.221620182564452
40	88.1462490920334	-0.661651354166527	0.638553123653626	1.08892579693585	-0.734333464804234	0.54707168848639	0.999006156459157	0.132059947306811
41	85.7024597405241	-0.717307331500378	0.583466824475374	1.0340003435233	-0.826021478821561	0.49192269888052	0.956743992316125	0.065076976775346
42	88.0791389860423	-0.716480310478577	0.584220067773393	1.03458272079884	-0.762865958662773	0.519515535185276	0.971794345421104	0.104187584262292
43	85.6372102124259	-0.74666080678789	0.546958366941652	0.999524793777647	-0.850154899847582	0.468793458577352	0.9339689131302	0.041622510068038
44	83.2629821021454	-0.803388754476105	0.478884962200183	0.959391287551832	-0.90775396938047	0.448803950815138	0.927243795046668	0.009452303894112
45	80.9545777045438	-0.826447048794181	0.468558908288634	0.959254422136897	-0.943360235839394	0.451879675632973	0.943962141093938	3.95676440264654E-16
46	78.7101721060284	-0.833334094271643	0.468520199679385	0.959403468645458	-0.970259973676109	0.464764938702653	0.970879042144593	0
47	76.5279909873815	-0.833641035678753	0.468388850651959	0.95941961922478	-0.997926752424862	0.478017622599883	0.998563473513068	0
48	78.6502461528877	-0.833879540934552	0.469433162133218	0.959990273941165	-0.970999243501195	0.465119056881557	0.971618783656289	0
49	76.4697264368266	-0.833791699810895	0.468670913140617	0.960096507203087	-0.998687102388513	0.478381838391358	0.999324308613384	0
50	74.3496600119464	-0.833906613583246	0.467641467579862	0.9591486136419	-1.02716447530985	0.492022805595285	1.02781985134912	0
51	72.2883708556055	-0.833818153225245	0.466890518441357	0.959397281016669	-1.05645387510784	0.506052742386527	1.05712793907031	0
52	74.2930539326724	NaN	NaN	NaN	NaN	NaN	NaN	NaN
\end{filecontents}
\begin{filecontents}{Stra_prob_ItimeRegret_Scen_2.dat}
TimeStep	Stock	ProbRW1	ProbRW2	ProbRW3	ProbRW1_TCZero	ProbRW2_TCZero	ProbRW3_TCZero	Replication
0	100	-2.4330711773617	1.95599169607703	3.54626147313572	-2.47627275049374	2.04176956269709	3.63522269018522	0.578500430131744
1	97.2275742000573	-2.43429806139923	1.95575496853935	3.54621162680279	-2.61997756790578	2.02689586317911	3.66578600375735	0.52190188939706
2	99.9238650462358	-2.43479320174336	1.95569044455054	3.54599697720207	-2.48012005502216	2.04136468848759	3.63603191513143	0.57698064780995
3	102.6949287579	-2.43601914376118	1.95529730725422	3.54583825958793	-2.34293526562011	2.05654420998471	3.60818180958288	0.631674306218631
4	105.542838917513	-2.38566913263031	1.95551532328636	3.5449128537205	-2.20944500419504	2.0713213268318	3.5810903872732	0.684899332148722
5	102.616742021372	-2.38822913766649	1.95647049117564	3.54471201619339	-2.34554743345996	2.05728413592452	3.61010397418859	0.631328581556427
6	99.771768990511	-2.38727036196812	1.95790258231515	3.54405937078897	-2.4879186093784	2.04045886543667	3.6375570662711	0.573842459442595
7	97.0056707259589	-2.38775489834594	1.95613066115341	3.54448882018478	-2.63565520364411	2.02184809423451	3.66448724673275	0.513411400158064
8	94.3162604833449	-2.38718787179682	1.95810554785487	3.54573745570623	-2.78759608227991	2.00271503696621	3.69219373270979	0.451265557240783
9	91.7014121441635	-2.38720733838563	1.95480919710804	3.54444189991229	-2.94237462895421	1.98453130144522	3.72218516110778	0.388842534662722
10	94.2444528420338	-2.38614551684659	1.9572145623526	3.54307425449733	-2.79463275480771	1.99932824442865	3.69009420405384	0.446696669371422
11	91.6315953164264	-2.38709797550769	1.95634058012687	3.53900425840909	-2.95090779410752	1.97975209191641	3.71872992154553	0.382847522578388
12	89.0911773269747	-2.38787021211645	1.95886637894778	3.54148633476425	-3.10834706891085	1.96290964667737	3.75147409862076	0.320469647040844
13	86.6211905412889	-2.38732416451171	1.58953232599937	2.97265751683784	-2.38368538045738	1.52821159731265	2.90788536783931	0.261256199255764
14	89.0233478003088	-2.38703434158458	1.58910336399214	2.93782662353054	-2.26075790176741	1.54558259049765	2.88802799964676	0.312814022991362
15	86.5552415379203	-2.38620164513036	1.57652754292454	2.93816737872412	-2.39432861360943	1.52054895438694	2.90127393924244	0.252628225469318
16	88.9555699156216	-2.38566053012944	1.57693274889652	2.9285035657178	-2.2708426330771	1.53839802280009	2.88186628088633	0.304690172392308
17	86.4893427447949	-2.38720048268313	1.56462169473803	2.92700655688328	-2.40547117553318	1.51238925368953	2.89416625382521	0.243502458396989
18	88.8878436335957	-2.38534813280975	1.5640837158211	2.91590416595416	-2.28145327772102	1.53068975148886	2.87518163785115	0.29604190251212
19	86.4234941236851	-2.38615976202509	1.55048300856694	2.91499965958721	-2.41716676779504	1.50367879538433	2.8865086123618	0.23382519764229
20	84.0274668753881	-2.3865758049319	1.51677877922241	2.91517898778902	-2.54874368788549	1.48390404136029	2.9061649877582	0.177840809617859
21	81.6978677046965	-2.38701864092806	1.49770105341075	2.91567548025849	-2.67477875345158	1.47285915937768	2.93567560635747	0.129553782803079
22	79.4328549424485	-2.38607189756437	1.49509591891613	2.91525607231863	-2.79446953316049	1.47143748648296	2.97596586643603	0.089828012394257
23	81.6356670708932	-2.38563880137454	1.48155114747606	2.91545285054183	-2.68731366879349	1.4634844522661	2.92741546244363	0.119155571519698
24	83.8995670360617	-2.38681743712088	1.48373033987598	2.91691928845512	-2.57489108170404	1.46390417624605	2.88833327383892	0.155849929729741
25	86.226248910611	-2.38495060249246	1.48617177306882	2.8818554758187	-2.45623931959283	1.47357529916201	2.85956838214423	0.200816872013314
26	83.8356901394904	-1.80837020705568	1.10216077312565	2.01172844276474	-1.67797858480876	1.01660310568216	1.96694604723656	0.143902511143802
27	86.1606005958703	-1.71985682025667	1.10397406317577	2.01189860236608	-1.58450906463462	1.03736350934056	1.9620629174622	0.188211389667729
28	88.5499848893617	-1.62296364413658	1.10291709914025	2.01150510089647	-1.48277607450847	1.06834932592829	1.96809715167838	0.242110261972707
29	91.0056309923402	-1.51606734709114	1.10345771151977	2.01120276180212	-1.37237858722936	1.10990864265155	1.98537816586284	0.305964361067208
30	88.4825673993079	-1.51702097931866	1.10410975815335	2.01154595893148	-1.49784150722583	1.05522767124953	1.95566104152976	0.228359071267655
31	90.9363438972614	-1.51097820732131	1.10365072275051	2.01092523898663	-1.38652999778548	1.09764856029204	1.97378512967534	0.293091729736746
32	93.4581679132614	-1.38910597551466	1.1034315407726	2.01055486580065	-1.26561465414402	1.15153219030515	2.00402757035958	0.368685080610364
33	90.8671095538803	-1.38770072015559	1.10403948331672	2.01085125237294	-1.40222925881909	1.08384206851404	1.96064619231158	0.27867222226858
34	88.3478863649463	-1.38787859434517	1.09769768846352	1.97370537137004	-1.53244166832503	1.02451951042221	1.92632553717078	0.196390399035841
35	85.8985067696605	-1.38674977390514	1.02668418419541	1.93855679227142	-1.65142327891969	0.978449161881923	1.90597001609248	0.126706106272618
36	88.2806227425107	-1.38853825627965	1.02742772218291	1.93613154744009	-1.55258336030513	1.00632604293313	1.90881918241617	0.177565955437331
37	85.8331079812472	-1.38639248862572	0.991940105267568	1.91197140115926	-1.67165562778765	0.96022059075607	1.88844815059012	0.107828566873498
38	88.2134103312029	-1.38640632361129	0.990223926756746	1.91559108996665	-1.57516414212015	0.985694970019109	1.88887574576817	0.156303425654483
39	90.659722632452	-0.740535532224602	0.764266260055801	1.19465079306989	-0.620753115060218	0.625126031158774	1.06453095134263	0.221620182564452
40	88.1462490920334	-0.740330333951497	0.670536971708741	1.11026095285428	-0.734333464804234	0.54707168848639	0.999006156459157	0.132059947306811
41	85.7024597405241	-0.739245268693738	0.583466824475374	1.0340003435233	-0.826021478821561	0.49192269888052	0.956743992316125	0.065076976775346
42	88.0791389860423	-0.739441100269038	0.584220067773393	1.03458272079884	-0.762865958662773	0.519515535185276	0.971794345421104	0.104187584262292
43	85.6372102124259	-0.74666080678789	0.546958366941652	0.999524793777647	-0.850154899847582	0.468793458577352	0.9339689131302	0.041622510068038
44	83.2629821021454	-0.803388754476105	0.478884962200183	0.959391287551832	-0.90775396938047	0.448803950815138	0.927243795046668	0.009452303894112
45	80.9545777045438	-0.826447048794181	0.468558908288634	0.959254422136897	-0.943360235839394	0.451879675632973	0.943962141093938	3.95676440264654E-16
46	78.7101721060284	-0.833334094271643	0.468520199679385	0.959403468645458	-0.970259973676109	0.464764938702653	0.970879042144593	0
47	76.5279909873815	-0.833641035678753	0.468388850651959	0.95941961922478	-0.997926752424862	0.478017622599883	0.998563473513068	0
48	78.6502461528877	-0.833879540934552	0.469433162133218	0.959990273941165	-0.970999243501195	0.465119056881557	0.971618783656289	0
49	76.4697264368266	-0.833791699810895	0.468670913140617	0.960096507203087	-0.998687102388513	0.478381838391358	0.999324308613384	0
50	74.3496600119464	-0.833906613583246	0.467641467579862	0.9591486136419	-1.02716447530985	0.492022805595285	1.02781985134912	0
51	72.2883708556055	-0.833818153225245	0.466890518441357	0.959397281016669	-1.05645387510784	0.506052742386527	1.05712793907031	0
52	74.2930539326724	NaN	NaN	NaN	NaN	NaN	NaN	NaN
\end{filecontents}

\begin{tikzpicture}
	{\scriptsize
	\begin{axis} [name=plot1,height=\plotHeight,width=\plotWidth, area style, ylabel near ticks, ylabel = Frequency density, xlabel near ticks, grid = both]
		\addplot+[color1, ybar interval, mark=no, opacity=0.5] table [x=Edges1, y = ProbRW1] {Coor_prob_ItimeUtility.dat};
		\label{distA}\addlegendentry{$p=0.3$}
		\addplot+[color2, ybar interval, mark=no, opacity=0.5] table [x=Edges2, y = ProbRW2] {Coor_prob_ItimeUtility.dat};	
		\label{distB}\addlegendentry{$p=0.6$}
		\addplot+[color3, ybar interval, mark=no, opacity=0.5] table [x=Edges3, y = ProbRW3] {Coor_prob_ItimeUtility.dat};	
		\label{distC}\addlegendentry{$p=0.7$}	
	\end{axis}}
	\node[align=center,anchor=north] at ($(plot1.north)+(0,0.5\vSpace)$) {(a) P\&L, $\mathcal{I}=\{52\}$};
	
	{\scriptsize
	\begin{axis} [name=plot2, anchor=west,height=\plotHeight,width=\plotWidth, at={($(plot1.east)+(\hSpace,0)$)}, area style, xlabel near ticks, grid = both]
		\addplot+[color1, ybar interval, mark=no, opacity=0.5] table [x=Edges1, y = ProbRW1] {Coor_prob_ItimeRegret.dat};
		\addlegendentry{$p=0.3$}	
		\addplot+[color2, ybar interval, mark=no, opacity=0.5] table [x=Edges2, y = ProbRW2] {Coor_prob_ItimeRegret.dat};	
		\addlegendentry{$p=0.6$}
		\addplot+[color3, ybar interval, mark=no, opacity=0.5] table [x=Edges3, y = ProbRW3] {Coor_prob_ItimeRegret.dat};	
		\addlegendentry{$p=0.7$}
	\end{axis}
	}
	\node[align=center,anchor=north] at ($(plot2.north)+(0,0.5\vSpace)$) {(b) P\&L, $\mathcal{I}=\{0,13,\ldots,52\}$};
	
	{\scriptsize
	\begin{axis} [name=plot3, anchor=north, height=\plotHeight,width=\plotWidth, at={($(plot1.south)+(0,-\vSpace)$)}, axis y line*=left, xmin=0, xmax=52, xlabel near ticks, xlabel = Time step $t$, xtick={0, 13,26, 39, 52}, ylabel near ticks, ylabel = Stock position $\hat{y}^s_t$, grid = both]
		\addplot[color1] table [x=TimeStep, y = ProbRW1] {Stra_prob_ItimeUtility_Scen_1.dat};	
		\addplot[color2] table [x=TimeStep, y = ProbRW2] {Stra_prob_ItimeUtility_Scen_1.dat};	
		\addplot[color3] table [x=TimeStep, y = ProbRW3] {Stra_prob_ItimeUtility_Scen_1.dat};	
		\addplot[color1, dashed] table [x=TimeStep, y = ProbRW1_TCZero] {Stra_prob_ItimeUtility_Scen_1.dat};	
		\addplot[color2, dashed] table [x=TimeStep, y = ProbRW2_TCZero] {Stra_prob_ItimeUtility_Scen_1.dat};	
		\addplot[color3, dashed] table [x=TimeStep, y = ProbRW3_TCZero] {Stra_prob_ItimeUtility_Scen_1.dat};	
		\addplot[color4] table [x=TimeStep, y = Replication] {Stra_prob_ItimeUtility_Scen_1.dat};	
	\end{axis}
	\begin{axis} [anchor=north, height=\plotHeight,width=\plotWidth, at={($(plot1.south)+(0,-\vSpace)$),}, axis y line*=right, xmin=0, xmax=52, xtick=\empty]
		\addplot[color5, opacity=0.3] table [x=TimeStep, y = Stock] {Stra_prob_ItimeUtility_Scen_1.dat};	
	\end{axis}
	}
	\node[align=center,anchor=north] at ($(plot3.north)+(0,0.5\vSpace)$) {(c) Strategy, $\mathcal{I}=\{52\}$};
	
	{\scriptsize
	\begin{axis} [name=plot4, anchor=north, height=\plotHeight,width=\plotWidth, at={($(plot2.south)+(0,-\vSpace)$)}, axis y line*=left, xmin=0, xmax=52, xlabel near ticks, xlabel = Time step $t$, xtick={0, 13,26, 39, 52}, grid = both]
		\addplot[color1] table [x=TimeStep, y = ProbRW1] {Stra_prob_ItimeRegret_Scen_1.dat};	
		\addplot[color2] table [x=TimeStep, y = ProbRW2] {Stra_prob_ItimeRegret_Scen_1.dat};
		\addplot[color3] table [x=TimeStep, y = ProbRW3] {Stra_prob_ItimeRegret_Scen_1.dat};
		\addplot[color1, dashed] table [x=TimeStep, y = ProbRW1_TCZero] {Stra_prob_ItimeRegret_Scen_1.dat};
		\addplot[color2, dashed] table [x=TimeStep, y = ProbRW2_TCZero] {Stra_prob_ItimeRegret_Scen_1.dat};
		\addplot[color3, dashed] table [x=TimeStep, y = ProbRW3_TCZero] {Stra_prob_ItimeRegret_Scen_1.dat};
		\addplot[color4] table [x=TimeStep, y = Replication] {Stra_prob_ItimeRegret_Scen_1.dat};			
	\end{axis}
	\begin{axis} [anchor=north, height=\plotHeight,width=\plotWidth, at={($(plot2.south)+(0,-\vSpace)$),}, axis y line*=right, xmin=0, xmax=52, xtick=\empty, ylabel near ticks, ylabel = Stock price $S_t$]		
		\addplot[color5, opacity=0.3] table [x=TimeStep, y = Stock] {Stra_prob_ItimeRegret_Scen_1.dat};
	\end{axis}
	}
	\node[align=center,anchor=north] at ($(plot4.north)+(0,0.5\vSpace)$) {(d) Strategy, $\mathcal{I}=\{0,13,\ldots,52\}$};
	
	{\scriptsize
	\begin{axis} [name=plot5, anchor=north, height=\plotHeight,width=\plotWidth, at={($(plot3.south)+(0,-\vSpace)$)}, axis y line*=left, xmin=0, xmax=52, xlabel near ticks, xlabel = Time step $t$, xtick={0, 13,26, 39, 52}, ylabel near ticks, ylabel = Stock position $\hat{y}^s_t$, grid = both]
		\addplot[color1] table [x=TimeStep, y = ProbRW1] {Stra_prob_ItimeUtility_Scen_2.dat};	
		\addplot[color2] table [x=TimeStep, y = ProbRW2] {Stra_prob_ItimeUtility_Scen_2.dat};	
		\addplot[color3] table [x=TimeStep, y = ProbRW3] {Stra_prob_ItimeUtility_Scen_2.dat};	
		\addplot[color1, dashed] table [x=TimeStep, y = ProbRW1_TCZero] {Stra_prob_ItimeUtility_Scen_2.dat};	
		\addplot[color2, dashed] table [x=TimeStep, y = ProbRW2_TCZero] {Stra_prob_ItimeUtility_Scen_2.dat};	
		\addplot[color3, dashed] table [x=TimeStep, y = ProbRW3_TCZero] {Stra_prob_ItimeUtility_Scen_2.dat};	
		\addplot[color4] table [x=TimeStep, y = Replication] {Stra_prob_ItimeUtility_Scen_2.dat};	
	\end{axis}
	\begin{axis} [anchor=north, height=\plotHeight,width=\plotWidth, at={($(plot3.south)+(0,-\vSpace)$),}, axis y line*=right, xmin=0, xmax=52, xtick=\empty]
		\addplot[color5, opacity=0.3] table [x=TimeStep, y = Stock] {Stra_prob_ItimeUtility_Scen_2.dat};	
	\end{axis}
	}
	\node[align=center,anchor=north] at ($(plot5.north)+(0,0.5\vSpace)$) {(e) Strategy, $\mathcal{I}=\{52\}$};	
	
	{\scriptsize
	\begin{axis} [name=plot6, anchor=north, height=\plotHeight,width=\plotWidth, at={($(plot4.south)+(0,-\vSpace)$)}, axis y line*=left, xmin=0, xmax=52, xlabel near ticks, xlabel = Time step $t$, xtick={0, 13,26, 39, 52}, grid = both]
		\addplot[color1] table [x=TimeStep, y = ProbRW1] {Stra_prob_ItimeRegret_Scen_2.dat};\label{curveA}
		\addplot[color2] table [x=TimeStep, y = ProbRW2] {Stra_prob_ItimeRegret_Scen_2.dat};\label{curveB}
		\addplot[color3] table [x=TimeStep, y = ProbRW3] {Stra_prob_ItimeRegret_Scen_2.dat};\label{curveC}

		\addplot[color1, dashed] table [x=TimeStep, y = ProbRW1_TCZero] {Stra_prob_ItimeRegret_Scen_2.dat};\label{curveD}
		\addplot[color2, dashed] table [x=TimeStep, y = ProbRW2_TCZero] {Stra_prob_ItimeRegret_Scen_2.dat};\label{curveE}
		\addplot[color3, dashed] table [x=TimeStep, y = ProbRW3_TCZero] {Stra_prob_ItimeRegret_Scen_2.dat};\label{curveF}
		\addplot[color4] table [x=TimeStep, y = Replication] {Stra_prob_ItimeRegret_Scen_2.dat};\label{curveG}
	\end{axis}
	}
	\node[align=center,anchor=north] at ($(plot6.north)+(0,0.5\vSpace)$) {(f) Strategy, $\mathcal{I}=\{0,13,\ldots,52\}$};
	
	{\scriptsize
	\begin{axis} [anchor=north, height=\plotHeight,width=\plotWidth, at={($(plot4.south)+(0,-\vSpace)$),}, axis y line*=right, xmin=0, xmax=52, xtick=\empty, legend style={at={(-0.5\hSpace,-\vSpace)}, anchor=north, legend columns=4}, ylabel near ticks, ylabel = Stock price $S_t$]
		\addlegendimage{/pgfplots/refstyle=curveA}\addlegendentry{$p=0.3$, $k=0.5\%$}
		\addlegendimage{/pgfplots/refstyle=curveB}\addlegendentry{$p=0.6$, $k=0.5\%$}
		\addlegendimage{/pgfplots/refstyle=curveC}\addlegendentry{$p=0.7$, $k=0.5\%$}
		\addlegendimage{/pgfplots/refstyle=curveG}\addlegendentry{Replication}
		\addlegendimage{/pgfplots/refstyle=curveD}\addlegendentry{$p=0.3$, $k=0\phantom{.5\%}$}
		\addlegendimage{/pgfplots/refstyle=curveE}\addlegendentry{$p=0.6$, $k=0\phantom{.5\%}$}
		\addlegendimage{/pgfplots/refstyle=curveF}\addlegendentry{$p=0.7$, $k=0\phantom{.5\%}$}
		
		\addplot[color5, opacity=0.3] table [x=TimeStep, y = Stock] {Stra_prob_ItimeRegret_Scen_2.dat};
		\addlegendentry{Stock Price}
	\end{axis}
	}
\end{tikzpicture}
\end{center}
\caption{Optimal injection and trading strategies}
\label{fig:OptSol_Prob}
\end{figure}
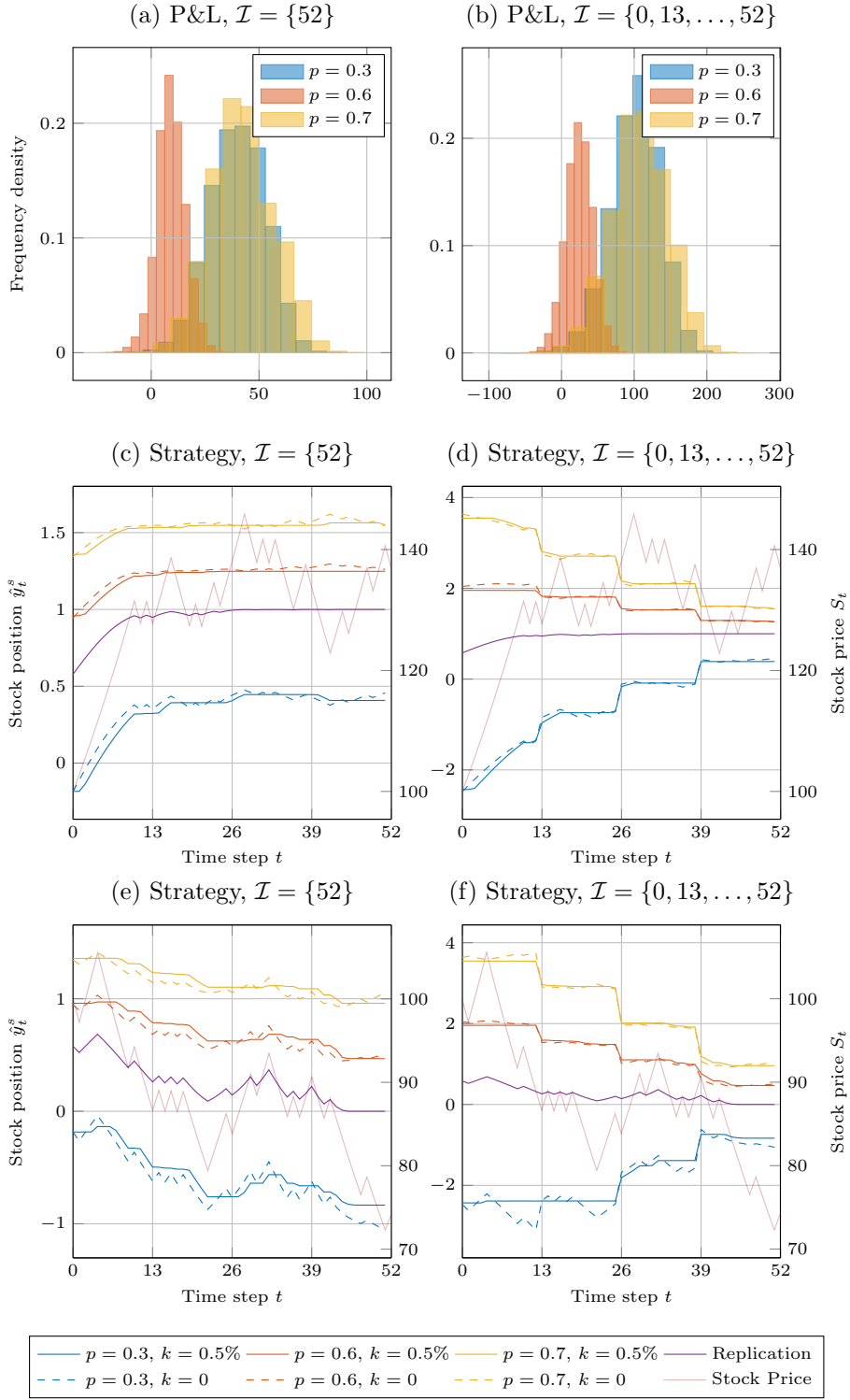
\end{example}

\citet[Section~5.5]{Xu2018Pricng} reported a large number of numerical examples illustrating the methods of this paper, for a selection of options with cash and physical delivery, and for a range of values of $r_e$ and $T$.

\appendix
\section{Generalised convex hull}
\label{sec:conv-hull}

\renewcommand\thesection{\Alph{section}}

The constructions in Section~\ref{sec:Solution-construction} involve a generalisation of the convex hull of convex functions. This section outlines the main properties used in this paper in an abstract setting.

For $k=1,\ldots,m$, let $f_k,g_k:\mathbb{R}\rightarrow\mathbb{R}\cup\{\infty\}$ be proper convex functions that are continuous on their effective domains $\dom f_k=[b_k,a_k]$ for some $b_k,a_k\in\mathbb{R}$ and $\dom g_k=[0,1]$, and 
\begin{equation} \label{eq:g0=0}
 g_k(0)=0.
\end{equation}
Define the \emph{generalised convex hull} $f:\mathbb{R}\rightarrow\mathbb{R}\cup\{\infty\}$ of $f_1,\ldots,f_m$ and $g_1,\ldots g_m$ as
\begin{multline} \label{eq:def:f}
 f(x) \coloneqq \inf \left\{\ttotal{k=1}{m} (q_kf_k(x_k) + g_k(q_k)):q_k\in[0,1],x_k\in[b_k,a_k] \text{ for all } k,\right.\\
 \left.
  \ttotal{k=1}{m} q_k = 1, \ttotal{k=1}{m} q_k x_k = x
 \right\}.
\end{multline}

\subsection{General properties}

The main aim of this section is to establish the key properties needed in Section~\ref{sec:Solution-construction}. Further detail on the arguments below, in a slightly more general setting, were presented by \citet[Chapter 4]{Xu2018Pricng}.

The first result establishes the convexity and boundedness of $f$, as well as the compactness of its effective domain.

\begin{proposition} \label{prop:f-convex}
The function $f$ in~\eqref{eq:def:f} is proper, convex, and
\begin{equation} \label{eq:dom f}
 \dom f = \conv \union{k=1}{m}[b_k,a_k] = \Big[\min_{k}b_k,\max_{k}a_k\Big].
\end{equation}
\end{proposition}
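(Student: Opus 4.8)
The plan is to establish the three claims---the domain identity, properness, and convexity---in that order, with the convexity argument being the only substantial step.

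\emph{Domain.} First I would identify $\dom f$. One inclusion is immediate: in any feasible configuration in~\eqref{eq:def:f} the point $x=\sum_k q_k x_k$ is a convex combination of the $x_k\in[b_k,a_k]$, hence lies in $[\min_k b_k,\max_k a_k]$, so $\dom f\subseteq[\min_k b_k,\max_k a_k]$. For the reverse inclusion I would fix indices $k_0,k_1$ with $b_{k_0}=\min_k b_k$ and $a_{k_1}=\max_k a_k$ and, for a given $x$ in $[\min_k b_k,\max_k a_k]$, realise $x$ as a convex combination using only these two intervals: if $x\le a_{k_0}$ take $q_{k_0}=1$, $x_{k_0}=x$; if $x\ge b_{k_1}$ take $q_{k_1}=1$, $x_{k_1}=x$; otherwise $a_{k_0}<x<b_{k_1}$ and $x$ is a proper convex combination of $a_{k_0}$ and $b_{k_1}$, with all other $q_j=0$ and $x_j\in[b_j,a_j]$ arbitrary. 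Since each $f_k$ is continuous on the compact interval $[b_k,a_k]$ and each $g_k$ is finite on $[0,1]$, the resulting objective value is a finite sum of finite numbers, so $f(x)<\infty$. The identity $\conv\bigcup_k[b_k,a_k]=[\min_k b_k,\max_k a_k]$ is routine, as the union of subintervals of $\mathbb{R}$ has convex hull equal to the smallest interval containing all of them.

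\emph{Properness.} It remains to rule out the value $-\infty$. Since each $f_k$ is bounded on $[b_k,a_k]$ and each $g_k$ bounded on $[0,1]$, there is a constant $M$ with $|f_k|\le M$ on $[b_k,a_k]$ and $|g_k|\le M$ on $[0,1]$ for all $k$; then in any feasible configuration $\sum_k q_k f_k(x_k)\ge -M\sum_k q_k=-M$ and $\sum_k g_k(q_k)\ge -mM$, so the objective is bounded below by $-(m+1)M$. Hence $f\ge -(m+1)M$ on its (nonempty) domain, and $f$ is proper.

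\emph{Convexity.} This is the heart of the matter. Fix $x,y\in\dom f$, $t\in(0,1)$, and $\varepsilon>0$, and choose $\varepsilon$-optimal feasible configurations $(q_k,x_k)_k$ for $x$ and $(r_k,y_k)_k$ for $y$. I would then mix them: set $s_k\coloneqq t q_k+(1-t)r_k\in[0,1]$, so $\sum_k s_k=1$; for $s_k>0$ set $z_k\coloneqq\tfrac{t q_k}{s_k}x_k+\tfrac{(1-t)r_k}{s_k}y_k\in[b_k,a_k]$ (a convex combination of $x_k$ and $y_k$), and for $s_k=0$ (which forces $q_k=r_k=0$) let $z_k$ be any point of $[b_k,a_k]$. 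Then $\sum_k s_k z_k=t\sum_k q_k x_k+(1-t)\sum_k r_k y_k=tx+(1-t)y$, so $(s_k,z_k)_k$ is feasible for $tx+(1-t)y$. Jensen's inequality for the convex $f_k$ gives $s_k f_k(z_k)\le t q_k f_k(x_k)+(1-t)r_k f_k(y_k)$ when $s_k>0$, and this also holds (with both sides zero) when $s_k=0$; convexity of $g_k$ gives $g_k(s_k)\le t g_k(q_k)+(1-t)g_k(r_k)$. Summing over $k$ and invoking $\varepsilon$-optimality yields $f(tx+(1-t)y)\le t f(x)+(1-t)f(y)+\varepsilon$, and letting $\varepsilon\downarrow 0$ finishes the argument; when $x$ or $y$ lies outside $\dom f$ the inequality is vacuous. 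The only point that needs care is the bookkeeping of the nodes with $s_k=0$, ensuring that Jensen's inequality is invoked only where the convex combination defining $z_k$ genuinely makes sense.
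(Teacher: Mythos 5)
Your proposal is correct and follows essentially the same route as the paper: the domain identification is elementary, properness comes from the uniform lower bound supplied by continuity of the $f_k$ and $g_k$ on their compact domains, and convexity uses exactly the same mixing construction ($s_k=tq_k+(1-t)r_k$ with $z_k$ the reweighted average, and the degenerate case $s_k=0$ handled separately). The only cosmetic difference is that you work with $\varepsilon$-optimal configurations and let $\varepsilon\downarrow0$, whereas the paper applies the mixing bound to arbitrary feasible configurations and then takes the infimum on the right-hand side; these are equivalent.
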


\begin{proof}
Much of the proof is straightforward, hence omitted. The compactness of $\dom f$ comes from \cite[Corollary~9.8.2]{rockafellar1997convex}. The properness of $f$ follows from the fact that continuous proper convex functions with compact domains are bounded from below. To show that $f$ is convex, fix any $y,z\in\dom f$ and $\lambda\in(0,1)$. By~\eqref{eq:dom f} there exists $(q^y_k, y_k)_{k=1}^m$ and $(q^z_k, z_k)_{k=1}^m$ such that $q^y_k,q^z_k\ge0$ and \mbox{$y_k,z_k\in[b_k,a_k]$} for all $k$ and 
 $
 \ttotal{k=1}{m} q^y_k = 1$, $\ttotal{k=1}{m} q^z_k = 1$, $\ttotal{k=1}{m} q^y_k y_k = y$ and $\ttotal{k=1}{m} q^z_k z_k = z$.
Define now
\begin{align*}
 q_k &\coloneqq \lambda q^y_k + (1-\lambda) q^z_k, & x_k \coloneqq
 \begin{cases}
  y_k & \text{if }q_k = 0,\\
  \tfrac{1}{q_k}\left(\lambda q^y_ky_k + (1-\lambda)q^z_kz_k\right) &\text{if } q_k>0
 \end{cases} 
\end{align*}
for all $k$; then
$
 q_k\ge0$ for all $k$, and $\ttotal{k=1}{m} q_k = 1$ and $\ttotal{k=1}{m} q_k x_k = \lambda y + (1-\lambda)z$.
It then follows from~\eqref{eq:def:f} and the convexity of the $f_k$'s and $g_k$'s that 
\begin{multline*}
 f(\lambda y + (1-\lambda)z) \\
 \begin{aligned}
 &\le \ttotal{k=1}{m} \left(q_kf_k(x_k) + g_k(q_k)\right) \\
 &\le \lambda\ttotal{k=1}{m} \left(q^y_kf_k(y_k) + g_k(q^y_k)\right) + (1-\lambda)\ttotal{k=1}{m} \left(q^z_kf_k(z_k) + g_k(q^z_k)\right),
 \end{aligned}
\end{multline*}
and convexity follows from taking the infimum in both terms on the right.
\end{proof}

The remainder of this section is devoted to establishing the closedness of the epigraph of $f$. This then allows us to establish the desired properties; see Proposition~\ref{prop:f-continuous-closed} at the end of the appendix. Define
\begin{align}
 A^g_k &\coloneqq \left\{(q,qx,qy+g_k(q)):q\in[0,1], (x,y)\in\epigraph f_k\right\} \text{ for all }k.\label{eq:Agk}
\end{align}
If $q=0$, then $(q,a,b)\in A^g_k$ if and only if $a=b=0$. This also implies that $A^g_k\neq\emptyset$. Moreover, if $(q,a,b)\in A^g_k$ satisfies $q>0$, then $(q,a,b)+U\subset A^g_k$, where
\[
 U \coloneqq \{(0,0,b)\in\mathbb{R}^3:b\ge0\}.
\]
The properties of $A^g_k$ in the next result will be used in Proposition~\ref{prop:Ef-closed}.

\begin{proposition} \label{prop:collect-Agk}
 The following holds true for the set $A^g_k$ in~\eqref{eq:Agk} for any $k$:
 \begin{enumerate}[(1)]
  \item\label{item:lemma4.6} The set $A^g_k$ is convex.
  \item\label{item:lemma4.7} The closure of $A^g_k$ is $\cl A^g_k = U\cup A^g_k$.
  \item\label{item:prop4.8part1} The recession cone of $\cl A^g_k$ is $0^+(\cl A^g_k) = U$.
 \end{enumerate}
\end{proposition}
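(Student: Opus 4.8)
The plan is to handle the three items in turn, reducing everything to elementary manipulations of the defining formula~\eqref{eq:Agk} together with the two structural facts recorded just before the statement: that the only element of $A^g_k$ with first coordinate $0$ is the origin (this uses $g_k(0)=0$), and that $A^g_k$ is stable under adding elements of $U$ at points whose first coordinate is strictly positive.

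For item~\ref{item:lemma4.6}, I would take two points $(q_i,q_ix_i,q_iy_i+g_k(q_i))\in A^g_k$ with $(x_i,y_i)\in\epigraph f_k$ for $i=1,2$, and $\lambda\in(0,1)$, and set $q\coloneqq\lambda q_1+(1-\lambda)q_2$. If $q=0$ then $q_1=q_2=0$, so both points and their convex combination equal the origin, which lies in $A^g_k$. If $q>0$, put $x\coloneqq\tfrac1q(\lambda q_1x_1+(1-\lambda)q_2x_2)$ and $y\coloneqq\tfrac1q(\lambda q_1y_1+(1-\lambda)q_2y_2)$; since for any index with $q_i=0$ the representative $(x_i,y_i)\in\epigraph f_k$ is arbitrary, $(x,y)$ is a genuine convex combination of two points of $\epigraph f_k$, hence $(x,y)\in\epigraph f_k$ by convexity of $f_k$, and $qx$ is the second coordinate of the convex combination. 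By convexity of $g_k$, the third coordinate of the convex combination equals $qy+\lambda g_k(q_1)+(1-\lambda)g_k(q_2)\ge qy+g_k(q)$, so the convex combination is obtained from $(q,qx,qy+g_k(q))\in A^g_k$ by adding an element of $U$; as $q>0$, it therefore lies in $A^g_k$.

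For item~\ref{item:lemma4.7}, the inclusion $A^g_k\subseteq\cl A^g_k$ is trivial, and $U\subseteq\cl A^g_k$ follows by fixing $\xi\in[b_k,a_k]$ and observing that for every $t\ge0$ and $q\in(0,1]$ the point $(q,q\xi,qf_k(\xi)+t+g_k(q))\in A^g_k$ (taking $(\xi,f_k(\xi)+t/q)\in\epigraph f_k$) converges to $(0,0,t)$ as $q\to0^+$, using $g_k(0)=0$ and continuity of $g_k$. For the reverse inclusion, take $(q_n,a_n,b_n)\in A^g_k$ converging to $(q,a,b)$, written as $a_n=q_nx_n$ and $b_n=q_ny_n+g_k(q_n)$ with $x_n\in[b_k,a_k]$ and $y_n\ge f_k(x_n)$. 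If $q=0$, then $a=\lim q_nx_n=0$ by boundedness of $(x_n)$, and, with $c\coloneqq\min_{[b_k,a_k]}f_k>-\infty$, we have $b_n\ge q_nc+g_k(q_n)\to0$, so $b\ge0$ and $(q,a,b)\in U$. If $q>0$, then $q_n>0$ eventually, so $x_n=a_n/q_n\to a/q\in[b_k,a_k]$ and $y_n=(b_n-g_k(q_n))/q_n\to(b-g_k(q))/q\eqqcolon y$ by continuity of $g_k$ at $q$; lower semicontinuity of $f_k$ yields $f_k(a/q)\le\liminf f_k(x_n)\le y$, hence $(a/q,y)\in\epigraph f_k$ and $(q,a,b)=(q,q\cdot a/q,qy+g_k(q))\in A^g_k$. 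This gives $\cl A^g_k=U\cup A^g_k$. For item~\ref{item:prop4.8part1}, $\cl A^g_k$ is a nonempty closed convex set (by item~\ref{item:lemma4.6}), so its recession cone can be computed from the single base point $0\in A^g_k$. Adding an element of $U$ maps $U$ into $U$ and each point of $A^g_k$ with positive first coordinate into $A^g_k$, so $\cl A^g_k+U\subseteq\cl A^g_k$, giving $U\subseteq 0^+(\cl A^g_k)$. Conversely, if $d=(d_1,d_2,d_3)\in 0^+(\cl A^g_k)$ then $td\in\cl A^g_k$ for all $t\ge0$; since every point of $\cl A^g_k$ has first coordinate in $[0,1]$, we get $d_1=0$; then $(0,td_2,td_3)\in U\cup A^g_k$ for all $t\ge0$, and as the only element of $A^g_k$ with first coordinate $0$ is the origin, $td_2\ne0$ is impossible, so $d_2=0$; applying the same reasoning to $(0,0,td_3)$, which must lie in $U$ for every $t>0$, gives $d_3\ge0$, so $d\in U$ and $0^+(\cl A^g_k)=U$.

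The main obstacle is the closedness argument in item~\ref{item:lemma4.7}, specifically the case in which the limit has first coordinate $0$: there the representatives $y_n$ may be unbounded, so one cannot pass to the limit coordinatewise, and the argument genuinely relies on the compactness of $\dom f_k$, the lower bound on $f_k$, and the normalisation $g_k(0)=0$. Once item~\ref{item:lemma4.7} is in place, items~\ref{item:lemma4.6} and~\ref{item:prop4.8part1} are routine.
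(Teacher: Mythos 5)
Your proof is correct. Items (1) and (3) follow the paper's own route almost exactly: the same case split on $q=0$ versus $q>0$ for convexity (you absorb the convexity defect $\lambda g_k(q_1)+(1-\lambda)g_k(q_2)-g_k(q)\ge 0$ by adding an element of $U$, the paper folds it into the $y$-coordinate of the epigraph representative — these are the same move), and the same two inclusions for the recession cone, with your version actually spelling out why $d_3\ge0$, which the paper leaves implicit. The genuine difference is in item (2): the paper factors $A^g_k$ through the auxiliary cone $A_k=\cone(\{1\}\times\epigraph f_k)$, quotes Rockafellar's Theorem~8.2 to get $\cl A_k=U\cup A_k$ from the compactness of $\dom f_k$, and then transfers the conclusion to $A^g_k$ using the continuity of $g_k$; you instead run a direct sequential argument on $A^g_k$ itself. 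Your version makes explicit the analytic content that the citation hides — in the $q=0$ case the representatives $y_n$ may blow up, so one bounds $b_n\ge q_n\min_{[b_k,a_k]}f_k+g_k(q_n)\to0$ rather than passing to a limit coordinatewise — at the cost of redoing by hand what the reference supplies. Both arguments use exactly the same hypotheses (compact domain, lower bound on $f_k$, $g_k(0)=0$ and continuity of $g_k$), so nothing is lost or gained in generality; yours is more self-contained, the paper's is shorter.
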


\begin{proof}
Item~\ref{item:lemma4.6}: Fix any $\lambda\in(0,1)$, $q_1,q_2\in[0,1]$ and $(x_1,y_1),(x_2,y_2)\in\epigraph f_k$ and define $q \coloneqq \lambda q_1 + (1-\lambda)q_2$ and
\begin{align*}
 z &\coloneqq \lambda(q_1, q_1x_1, q_1y_1 + g_k(q_1)) + (1-\lambda)(q_2,q_2x_2, q_2y_2 + g_k(q_2)). 
\end{align*}
If $q=0$, then $q_1=q_2=0$, after which $x_1=y_1=x_2=y_2=0$ by the observation above, so that $z=0\in A^g_k$. If $q>0$, then define $\varepsilon \coloneqq \lambda g_k(q_1) + (1-\lambda) g_k(q_2) - g_k(q)$ and
$
 (x,y) \coloneqq \tfrac{1}{q}(\lambda q_1(x_1,y_1) + (1-\lambda) q_2(x_2,y_2) + (0,\varepsilon)).
$
Then $\varepsilon\ge0$ because $g_k$ is convex and $(x,y)\in\epigraph f_k$ because $\epigraph f_k$ is convex and unbounded from above. Thus $z=(q,qx,qy+g_k(q))\in A^g_k$, so that $A^g_k$ is convex.

Item~\ref{item:lemma4.7}: Define
$A_k : = \cone(\{1\}\times \epigraph f_k)=\{\lambda(1, z) : \lambda\ge0, z\in \epigraph f_k\}$; then $\cl A_k = U\cup A_k$ due to the compactness of $\dom f_k$ \citep[Theorem~8.2]{rockafellar1997convex}. For every $(0,0,b)\in U\subset\cl A_k$ there exist $(q_n)_{n\ge1}$ in $[0,1]$ and $(x_n,y_n)_{n\ge1}$ in $\epigraph f_k$ such that
\begin{align*}
 (0,0,b) &= \lim_{n\rightarrow\infty} q_n(1,x_n,y_n) = \lim_{n\rightarrow\infty} q_n(1,x_n,y_n + g_k(q_n)),
\end{align*}
with the last equality due to~\eqref{eq:g0=0} and the continuity of $g_n$. Thus $(0,0,b)\in\cl A^g_k$. Combining this with $A^g_k \subseteq \cl A^g_k$ gives that $U\cup A^g_k \subseteq \cl A^g_k$.

To establish the opposite inclusion, suppose that $(q,a,b)\in\cl A^g_k$. Then there exist $(q_n)_{n\ge1}$ in $[0,1]$ and $(x_n,y_n)_{n\ge1}$ in $\epigraph f_k$ such that
\[
 (q,a,b) = \lim_{n\rightarrow\infty}(q_n,q_nx_n,q_ny_n+g_k(q_n)).
\]
Observe that $\lim_{n\rightarrow\infty}g_k(q_n)=g_k(q)$ by the continuity of $g_k$, so that
\[
 b-g_k(q) = \lim_{n\rightarrow\infty} q_ny_n.
\]
Moreover, since $q_n(1,x_n,y_n)\in A_k$ for all $n\in\mathbb{N}$ it follows that
\[
 (q,a,b-g_k(q)) = \lim_{n\rightarrow\infty}q_n(1,x_n,y_n) \in \cl A_k = U\cup A_k.
\]
There are now two possibilities. If $(q,a,b-g_k(q))\in U$, then $q=0$ and therefore \mbox{$(q,a,b)\in U$} by~\eqref{eq:g0=0}. If $(q,a,b-g_k(q))\in A_k$ then there exist $(x,y)\in\epigraph f_k$ such that \mbox{$(q,a,b-g_k(q)) = q(1,x,y)$}, in other words,
$(q,a,b) = (q,qx,qy+g_k(q))\in A^g_k$.

Item~\ref{item:prop4.8part1}: The comments just before this proposition together with item~\ref{item:lemma4.7} gives that $U\subseteq 0^+(\cl A^g_k)$. For the opposite inclusion, take any $(q,a,b)\in0^+(\cl A^g_k)$. Since $0\in\cl A^g_k$, this implies that
\[
 \lambda(q,a,b) = 0 + \lambda(q,a,b) \in \cl A^g_k = U\cup A^g_k \text{ for all } \lambda > 0.
\]
It then follows from~\eqref{eq:Agk} and the comments following it that $q=a=0$, whence $(q,a,b)\in U$.
\end{proof}

\begin{proposition} \label{prop:Ef-closed}
 The set 
\begin{align}
 E_f &\coloneqq \left\{(a,b):(1,a,b)\in\ttotal{k=1}{m} A^g_k\right\}\label{eq:prop4.10-part-1} \\
 &= \left\{ \ttotal{k=1}{m}(q_kx_k,q_ky_k + g_k(q_k)):q_k\in[0,1],(x_k,y_k)\in\epigraph f_k\ \forall k, \ttotal{k=1}{m}q_k=1 \right\}  \label{eq:def-Ef}
 \end{align} 
 is closed.
\end{proposition}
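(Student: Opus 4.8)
The plan is to derive the closedness of $E_f$ from the structural description of the sets $A^g_k$ in Proposition~\ref{prop:collect-Agk}, together with the elementary fact that a finite sum of nonempty closed convex sets is closed when the sets possess no opposing recession directions. First I would record, from Proposition~\ref{prop:collect-Agk}, that each $\cl A^g_k = U\cup A^g_k$ is a nonempty closed convex subset of $\mathbb{R}^3$ with recession cone $0^+(\cl A^g_k)=U=\{(0,0,b):b\ge0\}$.

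Next I would argue that $\ttotal{k=1}{m}\cl A^g_k$ is closed. If $z_k\in 0^+(\cl A^g_k)=U$ for every $k$ and $\ttotal{k=1}{m}z_k=0$, then $z_k=(0,0,b_k)$ with $b_k\ge0$ and $\ttotal{k=1}{m}b_k=0$, which forces $z_k=0$ for all $k$. Hence the ``no opposing recession directions'' criterion for closedness of sums of convex sets \citep[Corollary~9.1.1]{rockafellar1997convex} applies and shows that $\ttotal{k=1}{m}\cl A^g_k$ is closed (and in fact coincides with $\cl\,\ttotal{k=1}{m}A^g_k$).

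The remaining, and most delicate, step is to identify $E_f$ with a slice of this closed set, namely to show that
\[
 E_f = \Big\{(a,b)\in\mathbb{R}^2:(1,a,b)\in\ttotal{k=1}{m}\cl A^g_k\Big\}.
\]
The inclusion ``$\subseteq$'' is immediate from~\eqref{eq:prop4.10-part-1} and $A^g_k\subseteq\cl A^g_k$. For ``$\supseteq$'', write $(1,a,b)=\ttotal{k=1}{m}w_k$ with $w_k\in\cl A^g_k=U\cup A^g_k$; each $w_k$ has first coordinate $q_k\in[0,1]$ and $\ttotal{k=1}{m}q_k=1$, so $q_j>0$ for some index $j$, and then $w_j\in A^g_j$ (elements of $U$ having vanishing first coordinate). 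Every $w_k\in U\setminus A^g_k$ has the form $(0,0,c_k)$ with $c_k\ge0$. Replacing $w_j$ by $w_j+\ttotal{k:\,w_k\in U\setminus A^g_k}{}(0,0,c_k)$, which still lies in $A^g_j$ because $w+U\subseteq A^g_j$ for every $w\in A^g_j$ with positive first coordinate (the observation preceding Proposition~\ref{prop:collect-Agk}), and replacing each $w_k\in U\setminus A^g_k$ by $0\in A^g_k$, exhibits $(1,a,b)$ as a point of $\ttotal{k=1}{m}A^g_k$; hence $(a,b)\in E_f$ by~\eqref{eq:prop4.10-part-1}. Finally, $E_f$ is closed because it is the preimage of the closed set $\ttotal{k=1}{m}\cl A^g_k$ under the continuous affine map $(a,b)\mapsto(1,a,b)$.

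I expect the slice identification to be the main obstacle: one must check that passing to closures does not enlarge the relevant slice, and the absorption trick based on $w+U\subseteq A^g_j$ for $w\in A^g_j$ with positive first coordinate (which relies on the existence of at least one index $j$ with $q_j>0$, guaranteed by $\ttotal{k=1}{m}q_k=1$) is precisely what makes the argument close up.
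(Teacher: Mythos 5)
Your proof is correct and follows essentially the same route as the paper's: closedness of $\ttotal{k=1}{m}\cl A^g_k$ via the recession-cone criterion \citep[Corollary~9.1.1]{rockafellar1997convex}, followed by the identification of $E_f$ with the slice of that closed sum at first coordinate $1$, where you absorb the residual $U$-components into a single summand with positive first coordinate (the paper distributes them evenly over all summands lying in $A^g_k\setminus U$ instead, which amounts to the same thing). Your closing observation — that the slice identification together with the closedness of the sum already yields the result — is sound, and it streamlines away the paper's extra step of verifying $M\cap\relint\ttotal{k=1}{m}A^g_k\neq\emptyset$ in order to invoke \citet[Corollary~6.5.1]{rockafellar1997convex}.
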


\begin{proof}
We first show that
\begin{equation} \label{eq:prop4.10}
 \{1\}\times E_f = M\cap\ttotal{k=1}{m}\cl A^g_k,
\end{equation}
where
$
 M \coloneqq \{1\} \times \mathbb{R}^2.
$
Equation~\eqref{eq:prop4.10-part-1} gives $\{1\}\times E_f \subseteq M\cap\ttotal{k=1}{m}\cl A^g_k$. To establish the opposite inclusion, fix any $(q,a,b)\in M\cap\ttotal{k=1}{m}\cl A^g_k$; then $q=1$ and by Proposition~\ref{prop:collect-Agk}\ref{item:lemma4.7} there exist $(q_k,a_k,b_k)\in U\cup A^g_k$ for every $k$ such that
\begin{align*}
 (1,a,b) = \ttotal{k=1}{m}(q_k,a_k,b_k).
\end{align*}
Define
$ B \coloneqq \{k:(q_k,a_k,b_k)\in U\}$ and
 $C \coloneqq \{k:(q_k,a_k,b_k)\in A^g_k\setminus U\}.
$
For each $k\in B$, we have $q_k=a_k=0$ and $b_k\ge0$; select any $(x_k,y_k)\in\epigraph f_k$ and observe that
$
 (q_k, q_kx_k, q_ky_k + g_k(q_k)) = 0 = (q_k, a_k, b_k - b_k).
$
Noting that $C\neq\emptyset$ (because $q_k>0$ for at least one $k$), define
$c\coloneqq\tfrac{1}{\lvert C\rvert}\ttotal{k\in B}{}b_k\ge0.$
For each $k\in C$ there exists some $(x_k,y'_k)\in\epigraph f_k$ such that
$
 (q_k,a_k,b_k) = (q_k, q_kx_k, q_ky'_k + g_k(q_k)).
$
Define
\mbox{$
 y_k \coloneqq y'_k + \tfrac{c}{q_k} \ge y'_k;
$}
then $(x_k,y_k)\in\epigraph f_k$ and
\[
 (q_k, q_kx_k, q_ky_k + g_k(q_k)) = (q_k, a_k, b_k + c).
\]
Finally, rearrangement gives that
\begin{align*}
 (1,a,b) 
 &= \ttotal{k\in C}{}(q_k,a_k,b_k+c) = \ttotal{k=1}{m}(q_k, q_kx_k, q_ky_k + g_k(q_k)) \in M\cap\ttotal{k=1}{m}\cl A^g_k,
\end{align*}
which establishes~\eqref{eq:prop4.10}.

Note that $\ttotal{k=1}{m}A^g_k$ is convex \citep[Theorem~3.1]{rockafellar1997convex}. Furthermore, if $z_k\in 0^+(\cl A^g_k)=U$ for all $k$ satisfies $\ttotal{k=1}{m} z_k=0$, then $z_1=\cdots=z_m=0\in U\cap(-U)$; this means that
\begin{equation} \label{eq:prop4.8}
 \cl \ttotal{k=1}{m} A^g_k = \ttotal{k=1}{m}\cl A^g_k
\end{equation}
\citep[Corollary~9.1.1]{rockafellar1997convex}. It remains to show that 
\begin{equation} \label{eq:prop4.9-part-one}
 M \cap \relint \ttotal{k=1}{m} A^g_k \neq \emptyset,
\end{equation}
because then the closedness $E_f$ follows from~\eqref{eq:prop4.8},~\eqref{eq:prop4.10} and
\[
 M\cap\cl\ttotal{k=1}{m} A^g_k = \cl\left(M\cap\ttotal{k=1}{m} A^g_k\right)
\]
\citep[Corollary~6.5.1]{rockafellar1997convex}.

To establish~\eqref{eq:prop4.9-part-one}, observe that $\relint \ttotal{k=1}{m} A^g_k \neq \emptyset$ because $\ttotal{k=1}{m} A^g_k\neq\emptyset$. Thus there exist $q_k\in[0,1]$ and $(x_k,y_k)\in\epigraph f_k$ for all $k$ such that
\[
 (q,a,b) \coloneqq \ttotal{k=1}{m}(q_k,q_kx_k, q_ky_k + g_k(q_k)) \in \relint\ttotal{k=1}{m} A^g_k.
\]
This can now be used to construct a point $z\in M\cap\relint\ttotal{k=1}{m} A^g_k$. There are two possibilities, depending on the value of $q$. If $q\ge1$, define $z\coloneqq\tfrac{1}{q}(q,a,b)$. Then clearly $z\in M$ and moreover $z$ can be written as the convex combination
\[
 z = \tfrac{1}{q}(q,a,b) + \big(1-\tfrac{1}{q}\big)(0,0,0)\in\relint\ttotal{k=1}{m} A^g_k
\]
\citep[Theorem~6.1]{rockafellar1997convex}. If $q\in[0,1]$, define
$
 q'_k \coloneqq \tfrac{1}{m}(2-q)>0$ for all $k$ and
\[
 z' \coloneqq \ttotal{k=1}{m}(q'_k,q'_kx_k, q'_ky_k + g_k(q'_k))\in\ttotal{k=1}{m} A^g_k.
\]
Then
$
 z \coloneqq \tfrac{1}{2}(q,a,b) + \tfrac{1}{2}z' \in \relint\ttotal{k=1}{m} A^g_k
$
\citep[Theorem~6.1]{rockafellar1997convex} and $z\in M$ because
$
 \tfrac{1}{2}q + \tfrac{1}{2}\ttotal{k=1}{m}q'_k = 1.
$
\end{proof}

The following result concludes this section.

\begin{proposition} \label{prop:f-continuous-closed}
 The function $f$ in~\eqref{eq:def:f} is continuous on $\dom f$, and the infimum in~\eqref{eq:def:f} is attained for all $x\in\dom f$.
\end{proposition}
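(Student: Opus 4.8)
The plan is to show that the set $E_f$ appearing in Proposition~\ref{prop:Ef-closed} is exactly the epigraph of $f$; since $E_f$ has already been shown to be closed, this gives lower semicontinuity of $f$ for free, and continuity and attainment then follow with little extra effort. Write $G\coloneqq\{(x,b):x\in\dom f,\ b>f(x)\}$ for the strict epigraph of $f$. The crux is the sandwich
\[
 G\subseteq E_f\subseteq\epigraph f .
\]
The right-hand inclusion is read off directly from \eqref{eq:def-Ef} and \eqref{eq:def:f}: a typical element of $E_f$ is $\bigl(\sum_k q_kx_k,\ \sum_k(q_ky_k+g_k(q_k))\bigr)$ with $q_k\in[0,1]$, $\sum_kq_k=1$ and $(x_k,y_k)\in\epigraph f_k$, and since $q_k\ge0$ and $y_k\ge f_k(x_k)$ its second coordinate is at least $\sum_k(q_kf_k(x_k)+g_k(q_k))\ge f(\sum_kq_kx_k)$, the final inequality because $(q_k,x_k)_k$ is feasible in \eqref{eq:def:f} at the point $\sum_kq_kx_k$. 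For the left-hand inclusion, take $(x,b)$ with $x\in\dom f$ and $b>f(x)$, pick a representation $(q_k,x_k)_k$ admissible in \eqref{eq:def:f} with $s\coloneqq\sum_k(q_kf_k(x_k)+g_k(q_k))<b$, choose an index $k_0$ with $q_{k_0}>0$, and set $y_{k_0}\coloneqq f_{k_0}(x_{k_0})+(b-s)/q_{k_0}$ and $y_k\coloneqq f_k(x_k)$ for $k\neq k_0$; then $(x_k,y_k)\in\epigraph f_k$ for all $k$ and $\sum_k(q_ky_k+g_k(q_k))=b$, so $(x,b)\in E_f$ by \eqref{eq:def-Ef}.

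Next I would observe that $\epigraph f\subseteq\cl G$, since whenever $(x,b)\in\epigraph f$ one has $(x,b+\varepsilon)\in G$ for all $\varepsilon>0$ and $(x,b+\varepsilon)\to(x,b)$. Combining this with the sandwich above and the closedness of $E_f$ (Proposition~\ref{prop:Ef-closed}) yields
\[
 \epigraph f\subseteq\cl G\subseteq\cl E_f=E_f\subseteq\epigraph f ,
\]
so $\epigraph f=E_f$ is closed; that is, $f$ is lower semicontinuous. Recall from Proposition~\ref{prop:f-convex} that $f$ is moreover proper and convex with $\dom f=[\min_kb_k,\max_ka_k]\eqqcolon[c,d]$.

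Continuity of $f$ on $[c,d]$ now follows from convexity. On the open interval $(c,d)$ it is the standard fact that a finite convex function on an interval is continuous on its interior (the degenerate case $c=d$ being trivial). At the endpoint $c$, lower semicontinuity gives $\liminf_{x\downarrow c}f(x)\ge f(c)$, while writing $x=\lambda c+(1-\lambda)x_0$ for a fixed $x_0\in(c,d)$, so that $\lambda\uparrow1$ as $x\downarrow c$, and using convexity gives $f(x)\le\lambda f(c)+(1-\lambda)f(x_0)$, hence $\limsup_{x\downarrow c}f(x)\le f(c)$; together these give continuity at $c$, and symmetrically at $d$.

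Finally, for attainment: given $x\in\dom f$, the point $(x,f(x))$ lies in $\epigraph f=E_f$, so by \eqref{eq:def-Ef} there are $q_k\in[0,1]$ and $(x_k,y_k)\in\epigraph f_k$ with $\sum_kq_k=1$, $\sum_kq_kx_k=x$ and
\[
 f(x)=\sum_{k=1}^{m}\bigl(q_ky_k+g_k(q_k)\bigr)\ \ge\ \sum_{k=1}^{m}\bigl(q_kf_k(x_k)+g_k(q_k)\bigr)\ \ge\ f(x) .
\]
Equality throughout forces $f(x)=\sum_k(q_kf_k(x_k)+g_k(q_k))$ with each $x_k\in[b_k,a_k]=\dom f_k$, so $(q_k,x_k)_k$ attains the infimum in \eqref{eq:def:f}. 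I expect the real content to be the identification $E_f=\epigraph f$ via the two inclusions above (resting on the already-established Proposition~\ref{prop:Ef-closed}), with the endpoint continuity argument the only other point needing a little care; everything else is routine.
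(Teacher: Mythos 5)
Your proof is correct and follows essentially the same route as the paper: both identify $\epigraph f$ with the closed set $E_f$ of Proposition~\ref{prop:Ef-closed} and deduce lower semicontinuity, continuity on the compact interval $\dom f$, and attainment from that identification. The only (immaterial) differences are that you pass through the strict epigraph and its closure where the paper builds an explicit sequence in $E_f$ converging to a given point of $\epigraph f$, and that you spell out the endpoint continuity argument that the paper delegates to Rockafellar.
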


\begin{proof}
 It is sufficient to show that $\epigraph f = E_f$, for then $f$ is lower semicontinuous by Proposition~\ref{prop:Ef-closed}, hence continuous on $\dom f$ because it is a closed bounded interval \citep[Theorems~10.2,~20.5]{rockafellar1997convex}. The fact that the infimum in~\eqref{eq:def:f} is attained for all $x\in\dom f$ follows from the properties of $E_f$.
 
 Suppose that $(x,y)\in E_f$. Thus there exist $q_k\in[0,1]$ and $(x_k,y_k)\in\epigraph f_k$ for all $k$ such that

$\ttotal{k=1}{m}q_k =1$, $\ttotal{k=1}{m}q_kx_k =x$ and $\ttotal{k=1}{m}(q_ky_k + g_k(q_k)) =y$.
Then
\[
 y = \ttotal{k=1}{m}(q_ky_k + g_k(q_k)) \ge \ttotal{k=1}{m}(q_kf_k(x_k) + g_k(q_k)) \ge f(x),
\]
and so $(x,y)\in\epigraph f$.

Conversely, suppose that $(x,y)\in\epigraph f$. Then $f(x)<\infty$ and so by~\eqref{eq:def:f} there exists a sequence $(q_{1n},\ldots,x_{mn},x_{1n},\ldots,x_{mn})_{n\ge1}$ such that 
\[
 f(x) = \lim_{n\rightarrow\infty} \ttotal{k=1}{m}(q_{kn}f_k(x_{kn}) + g_k(q_{kn}))
\]
and for all $n\in\mathbb{N}$ we have 
$
 q_{kn}\in[0,1]$ and $x_{kn}\in[b_k,a_k]$ for all $k$, and $\ttotal{k=1}{m} q_{kn} = 1$ and $\ttotal{k=1}{m} q_{kn}x_{kn} = 1$. For each $n\in\mathbb{N}$ and $k=1,\ldots,m$ define
\[
 y_{kn} \coloneqq f_k(x_{kn}) + y-f(x) \ge f_k(x_{kn});
\]
then $(x_{kn},y_{kn})\in\epigraph f_k$. Define moreover for all $n\in\mathbb{N}$
\[
 y_n \coloneqq \ttotal{k=1}{m}(q_{kn}y_{kn} + g_k(q_{kn})) = \ttotal{k=1}{m}(q_{kn}f_k(x_{kn}) + g_k(q_{kn})) + y-f(x);
\]
then $(x,y_n)\in E_f$ and
$
 \lim_{n\rightarrow\infty} y_n = y.
$
This implies that $(x,y)\in\cl E_f = E_f$ by Proposition~\ref{prop:Ef-closed}, which concludes the proof that $\epigraph f = E_f$.
\end{proof}

\subsection{Numerical approximation}
\label{subsec:numerical-approximation}

Computer implementation of the generalised convex hull necessitates a numerical approximation in all but a few special cases. In this section we propose such a numerical approximation, together with error bounds, that will be suitable for use in the dynamic procedure proposed in Section~\ref{sec:Solution-construction}. It is based on approximation of $f_1,\ldots,f_m$ and $f$ by piecewise linear functions. We will refer to this as the \emph{upper approximation} as it approximates the generalised convex hull $f$ from above.

For every $k$, divide $\dom f_k=[b_k,a_k]$ into $n_k$ subintervals. If $b_k=a_k$, then define
$
 \hat{x}_{k0} \coloneqq \hat{x}_{k1} \coloneqq \cdots \coloneqq \hat{x}_{kn_k} \coloneqq a_k,
$
and if $b_k<a_k$, choose any $(\hat{x}_{kl})_{l=0}^{n_k}$ such that
$
 b_k \eqqcolon \hat{x}_{k0} < \cdots < \hat{x}_{kn_k} \coloneqq a_k.
$
Define $\hat{f}_k:\mathbb{R}\rightarrow\{\infty\}$ as
\begin{equation} \label{eq:def:fkhat}
 \hat{f}_k(x) \coloneqq
 \begin{cases}
  f(\hat{x}_{kl}) & \text{if }x=\hat{x}_{kl}\text{ for some }l,\\
  \frac{\hat{x}_{kl}-x}{\hat{x}_{kl}-\hat{x}_{k[l-1]}}\hat{f}_k(\hat{x}_{k[l-1]}) + \frac{x-\hat{x}_{k[l-1]}}{\hat{x}_{kl}-\hat{x}_{k[l-1]}}\hat{f}_k(\hat{x}_{kl}) & \text{if }x\in(\hat{x}_{k[l-1]},\hat{x}_{kl}) \text{ for any }l,\\
  \infty & \text{if }x\in\mathbb{R}\setminus\dom f_k.
 \end{cases}
\end{equation}
Observe that $\hat{f}_k\ge f_k$ by virtue of the convexity of $f_k$.

Let $\hat{g}$ be the generalised convex hull of $\hat{f}_1,\ldots,\hat{f}_m$ and $g_1,\ldots g_m$, in other words,
\begin{multline} \label{eq:def:ghat}
 \hat{g}(x) \coloneqq \inf \left\{\ttotal{k=1}{m} (q_k\hat{f}_k(x_k) + g_k(q_k)):q_k\in[0,1],x_k\in[b_k,a_k] \thinspace\forall k,\right.\\
 \left.\phantom{\hat{f}}
  \ttotal{k=1}{m} q_k = 1, \ttotal{k=1}{m} q_k x_k = x
 \right\}.
\end{multline}
Then $\hat{g}\ge f$ by definition, and it follows from the arguments in the previous subsection that $\hat{g}$ is convex and continuous on its effective domain $\dom \hat{g}=\dom f$, and that the infimum in~\eqref{eq:def:ghat} is attained for all $x\in\dom \hat{g}=\dom f$.

In practical applications, one often needs to approximate $f$ on some subinterval $[b,a]\subset\dom f$. Divide this interval into $n$ subintervals, as follows: if $b=a$, then define
$
 \hat{x}_0 \coloneqq \hat{x}_1 \coloneqq \cdots  \coloneqq \hat{x}_n \coloneqq a_k,
$
and if $b<a$, choose $(\hat{x}_l)_{l=0}^{n}$ such that
$
 b \eqqcolon \hat{x}_0 < \cdots < \hat{x}_n \coloneqq a.
$
Finally, define
\begin{equation} \label{eq:def:fhat}
 \hat{f}(x) \coloneqq
 \begin{cases}
  \hat{g}(\hat{x}_l) & \text{if }x=\hat{x}_l\text{ for some }l,\\
  \frac{\hat{x}_l-x}{\hat{x}_l-\hat{x}_{l-1}}\hat{g}(\hat{x}_{l-1}) + \frac{x-\hat{x}_{l-1}}{\hat{x}_l-\hat{x}_{l-1}}\hat{g}(\hat{x}_l) & \text{if }x\in(\hat{x}_{l-1},\hat{x}_l)\text{ for any }l,\\
  \infty & \text{if }x\in\mathbb{R}\setminus[b,a].
 \end{cases}
\end{equation}
Then $\hat{f}$ is piecewise linear on its effective domain, and moreover $\hat{f}\ge\hat{g}\ge f$. 

Define the mesh size of the approximation as
 \[
  \Delta \coloneqq \max\Big\{\max_{k,l}(\hat{x}_{kl}-\hat{x}_{k[l-1]}),\max_{l}(\hat{x}_l-\hat{x}_{l-1})\Big\}.
 \]
We now have the following result.

\begin{proposition} \label{prop:error-bound:upper-approximation}
 Let $f$ be defined by~\eqref{eq:def:f}, the function $\hat{f}_k$ by~\eqref{eq:def:fkhat} for all $k$, and $\hat{f}$ by~\eqref{eq:def:fhat}. If $[b,a]\subseteq\relint\dom f$ and there exists $c_k\ge0$ for each $k$ such that
 $
  \big\lvert\hat{f}_k(x) - f_k(x)\big\rvert \le c_k\Delta$ for all $x\in\dom f_k$,
 then there exists $c\ge0$ such that
 $
  \big\lvert\hat{f}(x) - f(x)\big\rvert \le c\Delta$ for all $x\in[a,b]$.
\end{proposition}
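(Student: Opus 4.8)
The plan is to control the error $\hat f - f$ in two stages, matching the two levels of approximation: first the replacement of each $f_k$ by $\hat f_k$, which induces an intermediate function $\hat g$, and then the piecewise-linear interpolation of $\hat g$ on $[b,a]$. For the first stage, I would show that $0 \le \hat g(x) - f(x) \le (\max_k c_k)\Delta$ for all $x \in \dom f$. The lower bound is immediate since $\hat f_k \ge f_k$ forces $\hat g \ge f$ by comparing the defining infima in \eqref{eq:def:f} and \eqref{eq:def:ghat}. For the upper bound, fix $x \in \dom f$ and take an optimal representation $(q_k, x_k)_{k=1}^m$ attaining the infimum in \eqref{eq:def:f} (which exists by Proposition~\ref{prop:f-continuous-closed}); plugging the same $(q_k, x_k)$ into \eqref{eq:def:ghat} gives
\[
 \hat g(x) \le \ttotal{k=1}{m}\big(q_k \hat f_k(x_k) + g_k(q_k)\big) \le \ttotal{k=1}{m}\big(q_k f_k(x_k) + g_k(q_k)\big) + \big(\max_k c_k\big)\Delta\ttotal{k=1}{m}q_k = f(x) + \big(\max_k c_k\big)\Delta,
\]
using $\hat f_k(x_k) \le f_k(x_k) + c_k\Delta$ and $\ttotal{k=1}{m}q_k = 1$. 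So $\|\hat g - f\|_{\infty,\dom f} \le (\max_k c_k)\Delta$.

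For the second stage, I would bound $\hat f - \hat g$ on $[b,a]$. Since $\hat f$ is the piecewise-linear interpolant of $\hat g$ at the nodes $\hat x_0 < \cdots < \hat x_n$ and $\hat g$ is convex, we have $\hat f \ge \hat g$ on $[b,a]$, and the interpolation error on each subinterval $[\hat x_{l-1},\hat x_l]$ is controlled by the modulus of continuity of $\hat g$: for $x$ in this subinterval, $0 \le \hat f(x) - \hat g(x) \le \omega_{\hat g}(\hat x_l - \hat x_{l-1}) \le \omega_{\hat g}(\Delta)$, where $\omega_{\hat g}$ is the modulus of continuity of $\hat g$ on $[b,a]$. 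The key point is that $\hat g$ is convex and finite on the compact interval $\dom f \supseteq [b,a]$, and $[b,a] \subseteq \relint\dom f$, so $\hat g$ is Lipschitz on $[b,a]$ with some constant $L$ — this is where the hypothesis $[b,a] \subseteq \relint\dom f$ is essential, since a finite convex function can fail to be Lipschitz near the endpoints of its domain. A subtlety is that $L$ a priori depends on $\hat g$, hence on $\Delta$; I would handle this by using the uniform bound $\|\hat g - f\|_\infty \le (\max_k c_k)\Delta$ together with boundedness of $f$ to get a bound on $\|\hat g\|_\infty$ that is uniform in $\Delta$ (for $\Delta$ bounded), and then invoke the standard fact that a family of convex functions uniformly bounded on a neighbourhood of the compact set $[b,a]$ is uniformly Lipschitz on $[b,a]$ \citep[cf.][Theorem~10.6]{rockafellar1997convex}. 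This yields $L$ independent of $\Delta$ and hence $\|\hat f - \hat g\|_{\infty,[b,a]} \le L\Delta$.

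Combining the two stages, $\big|\hat f(x) - f(x)\big| \le \big|\hat f(x) - \hat g(x)\big| + \big|\hat g(x) - f(x)\big| \le L\Delta + (\max_k c_k)\Delta$ for all $x \in [b,a]$, so the claim holds with $c \coloneqq L + \max_k c_k$. The main obstacle is the uniform Lipschitz estimate for $\hat g$ on $[b,a]$: one must be careful that the Lipschitz constant does not degrade as $\Delta \to 0$ (it does not, because $\hat g \to f$ uniformly and all these functions are convex and uniformly bounded near $[b,a]$), and that the argument genuinely uses $[b,a] \subseteq \relint\dom f$ rather than just $[b,a] \subseteq \dom f$. Everything else is a routine comparison of infima and the elementary error estimate for linear interpolation of a convex function.
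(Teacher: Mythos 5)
Your proof is correct, and its first stage is exactly the paper's: compare the infima defining $f$ and $\hat{g}$ term by term using $0\le\hat{f}_k-f_k\le c_k\Delta$ and $\sum_k q_k=1$ to get $0\le\hat{g}-f\le(\max_kc_k)\Delta$ (the paper states this only at the mesh points $\hat{x}_l$, where $\hat{f}=\hat{g}$, but the argument is the same). The second stage differs in a small but meaningful way. You bound the interpolation error $\hat{f}-\hat{g}$ by the modulus of continuity of $\hat{g}$, which forces you to produce a Lipschitz constant for $\hat{g}$ that does not degrade as the mesh changes; you correctly flag this and repair it with the equi-Lipschitz property of a uniformly bounded family of convex functions near the compact set $[b,a]\subseteq\relint\dom f$. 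The paper avoids this uniformity issue entirely: for $x\in(\hat{x}_{l-1},\hat{x}_l)$ it picks the endpoint $\hat{x}_{l^\ast}$ maximising $\hat{f}$, uses $0\le\hat{f}(x)-f(x)\le\hat{f}(\hat{x}_{l^\ast})-f(x)\le\lvert\hat{f}(\hat{x}_{l^\ast})-f(\hat{x}_{l^\ast})\rvert+\lvert f(\hat{x}_{l^\ast})-f(x)\rvert$, and then invokes the Lipschitz continuity of the \emph{fixed} function $f$ on $[b,a]$ (Rockafellar, Theorem~10.4), which is where $[b,a]\subseteq\relint\dom f$ enters and which is manifestly independent of $\Delta$. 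Both routes are valid and yield $c=(\text{Lipschitz constant})+\max_kc_k$; the paper's is marginally cleaner because the constant comes from $f$ rather than from the family of approximants.
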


\begin{proof}
 For any $l=0,\ldots,n$ we have
 \begin{align}
  0 \le \hat{f}(\hat{x}_l) - f(\hat{x}_l)
  &\le \sup\left\{\ttotal{k=1}{m} q_k\big(\hat{f}_k(x_k) - f_k(x_k)\big):q_k\in[0,1],x_k\in[b_k,a_k] \thinspace\forall k, \right. \nonumber\\
  & \qquad \qquad \left.\phantom{\hat{f}}
  \ttotal{k=1}{m} q_k = 1, \ttotal{k=1}{m} q_k x_k = \hat{x}_l
 \right\} \nonumber\\
  & \le \Delta\sup\left\{\ttotal{k=1}{m} q_kc_k:q_k\in[0,1] \thinspace\forall k,\ttotal{k=1}{m} q_k = 1\right\} \nonumber\\
  & = \Delta \max_k c_k. \label{eq:approx:at-mesh-points}
 \end{align}
 
The function $f$ is Lipschitz on $[b,a]$ \citep[Theorem 10.4]{rockafellar1997convex}, and so there exists some $d\ge0$ such that
 \begin{equation} \label{eq:approx:Lipschitz}
   \lvert f(x) - f(y)\rvert \le d\lvert x-y\rvert \text{ for all } x,y\in[a,b].
 \end{equation}
 For any $x\in[b,a]$ such that $\hat{x}_{l-1} < x < \hat{x}_l$ for some $l>0$, choose $l^\ast\in\{l-1,l\}$ such that
 $
  \hat{f}(\hat{x}_{l^\ast}) = \max\big\{\hat{f}(\hat{x}_{l-1}),\hat{f}(\hat{x}_l)\big\}.
 $
 Then
 \begin{align*}
  \lvert\hat{f}(x) - f(x)\rvert \le \lvert\hat{f}(\hat{x}_{l^\ast}) - f(x)\rvert \le \lvert\hat{f}(\hat{x}_{l^\ast}) - f(\hat{x}_{l^\ast})\rvert + \lvert f(\hat{x}_{l^\ast}) - f(x)\rvert
 \end{align*}
 by~\eqref{eq:def:fhat} and the triangle inequality. Combining this with~\eqref{eq:approx:at-mesh-points} and~\eqref{eq:approx:Lipschitz} then gives the desired result after taking $c\coloneqq d + \max_k c_k$. 
\end{proof}

The upper approximation $\hat{f}$ depends on $\hat{g}$ only via the values $\hat{g}(\hat{x}_0),\ldots,\hat{g}(\hat{x}_n)$. It is possible to calculate these values explicitly in the case where $g_k(q)=q\ln\frac{q}{p_k}$ by using standard techniques from calculus \citep[Section 4.3]{Xu2018Pricng}.

The theoretical error bound in Proposition~\ref{prop:error-bound:upper-approximation} ensures that the upper approximation $\hat{f}$ will converge uniformly to $f$ on $[b,a]$ if the mesh size converges to zero. However, it relies on the Lipschitz coefficient of $f$, which is typically unknown in situations that require approximation (and could well be large). We now present a \emph{lower approximation}, which, while slightly less computationally efficient than the upper approximation, can be used in practical applications to estimate the error of the upper approximation.

For each $k$, let $\check{f}_k$ be any convex piecewise linear function with $\dom \check{f}_k=[b_k,a_k]$ and such that $\check{f}_k\le f_k$. Then let $\check{g}$ be the generalised convex hull of $\check{f}_1,\ldots,\check{f}_m$ and $g_1,\ldots g_m$, in other words,
\begin{multline} \label{eq:def:gcheck}
 \check{g}(x) \coloneqq \inf \left\{\ttotal{k=1}{m} (q_k\check{f}_k(x_k) + g_k(q_k)):q_k\in[0,1],x_k\in[b_k,a_k] \thinspace\forall k,\right.\\
 \left.
  \phantom{\check{f}}\ttotal{k=1}{m} q_k = 1, \ttotal{k=1}{m} q_k x_k = x
 \right\}.
\end{multline}
Then $\check{g}$ is clearly convex and continuous on $\dom \check{g} = \dom f$, and the infimum in\eqref{eq:def:gcheck} is attained for all $x\in\dom \check{g}$. Furthermore, $\check{g}\le f\le \hat{g}$.

If $b=a$, then define
\[
 \check{f}(x) \coloneqq
 \begin{cases}
  \check{g}(x) &\text{if }x=a,\\
  \infty &\text{otherwise};
 \end{cases}
\]
then clearly $\check{f}(a)\le f(a) \le \hat{f}(a)$. Assume for the remainder that $b<a$; this implies that $[b,a]\subset\interior\dom f$. Similar to the upper approximation, divide $[b,a]$ into $n-1$ subintervals by choosing $(\breve{x}_l)_{l=1}^n$ such that
$
 b \eqqcolon \breve{x}_1 < \cdots < \breve{x}_n \coloneqq a.
$
Also choose any $\breve{x}_0\in(\min\dom f,b)$ and $\breve{x}_{n+1}\in(\max\dom f,a)$, and consider the function $\breve{f}$ defined by
\begin{equation} \label{eq:def:fbreve}
 \breve{f}(x) \coloneqq
 \begin{cases}
  \check{g}(\breve{x}_l) & \text{if }x=\breve{x}_l\text{ for some }l,\\
  \frac{\breve{x}_l-x}{\breve{x}_l-\breve{x}_{l-1}}\check{g}(\breve{x}_{l-1}) + \frac{x-\breve{x}_{l-1}}{\breve{x}_l-\breve{x}_{l-1}}\check{g}(\breve{x}_l) & \text{if }x\in(\breve{x}_{l-1},\breve{x}_l)\text{ for any }l>0,\\
  \infty & \text{if }x\in\mathbb{R}\setminus[\breve{x}_0,\breve{x}_{n+1}].
 \end{cases}
\end{equation}
It is convex, piecewise linear and $\check{g}(x)\le\breve{f}(x)$ for all $x\in[\breve{x}_0,\breve{x}_{n+1}]$. The graph of $\breve{f}$ consists of $n+1$ line pieces; the $l^\text{th}$ line piece (where $l=0,\ldots,n$) connects the points $(\breve{x}_l,\check{g}(\breve{x}_l))$ and $(\breve{x}_{l+1},\check{g}(\breve{x}_{l+1}))$, and has slope $m_l\coloneqq\frac{\check{g}(\breve{x}_{l+1})-\check{g}(\breve{x}_l)}{\breve{x}_{l+1}-\breve{x}_l}$. These line pieces are now used to determine the lower approximation $\check{f}$ on $[a,b]$. For $l=1,\ldots,n-1$, determine the point $(\check{x}_l,\check{y}_l)$ by extending the $(l-1)^\text{th}$ and $(l+1)^\text{th}$ line pieces and finding their intersection, in other words,
\begin{align*}
 \check{x}_l &\coloneqq \begin{cases}
    \frac{m_{l+1}\breve{x}_{l+1}-m_{l-1}\breve{x}_l+\check{g}(\breve{x}_l)-\check{g}(\breve{x}_{l+1})}{m_{l+1}-m_{l-1}} & \text{if } m_{l-1} < m_{l+1},\\
    \frac{1}{2}(\breve{x}_l + \breve{x}_{l+1}) & \text{if } m_{l-1} = m_{l+1},
                        \end{cases}
 \\ \check{y}_l &\coloneqq m_{l-1} (\check{x}_l-\breve{x}_l) + \check{g}(\breve{x}_l).
\end{align*}
Finally define $\check{x}_0 \coloneqq \breve{x}_1 = b$,  $\check{y}_0 \coloneqq \check{g}(b)$, $\check{x}_n \coloneqq \breve{x}_n = a$ and $\check{y}_n \coloneqq \check{g}(a),$
after which the lower approximation is defined as
\begin{equation} \label{eq:def:fcheck}
 \check{f}(x) \coloneqq
 \begin{cases}
  \check{y}_l & \text{if }x=\check{x}_l\text{ for some }l,\\
  \frac{\check{x}_l-x}{\check{x}_l-\check{x}_{l-1}}\check{y}_{l-1} + \frac{x-\check{x}_{l-1}}{\check{x}_l-\check{x}_{l-1}}\check{y}_l & \text{if }x\in(\check{x}_{l-1},\check{x}_l)\text{ for any }l>0,\\
  \infty & \text{if }x\in\mathbb{R}\setminus[b,a].
 \end{cases}
\end{equation}
The lower approximation $\check{f}$ is piecewise linear. It is also convex due to the convexity of $\breve{f}$. The fact that $\check{f}\le \check{g}$ (whence $\check{f}\le f$) follows from a simple geometric observation: on every interval $[\breve{x}_l,\breve{x}_{l+1}]$, the graph of $\check{f}$ falls below the extensions of both the $(l-1)^\text{th}$ and $(l+1)^\text{th}$ line pieces of $\breve{f}$, and these extended line pieces in turn fall below the graph of $\check{g}$, due to the convexity of $\check{g}$. \citet[Section~5.4]{Xu2018Pricng} contains the full details.

\section{Proofs}
\label{app:proofs}

\begin{proof}[Proof of Proposition \ref{prop:ask-price-rep}]
 A trading strategy $y\in\mathcal{N}^{2\prime}$ superhedges $c$ if and only if $y_T=0$ and the trading strategy $w\in\mathcal{N}^{2\prime}$ defined as $w_{-1} \coloneqq y_{-1}$ and $w_t \coloneqq y_t + \ttotal{s=0}{t}c_s$ for all $t\ge0$
 satisfies $-\Delta w_t \in \mathcal{K}_t$ for all $t$. The result then follows from Theorem~4.4 of \citet{roux2016american} and \eqref{eq:buyer's-superhedging-price}.
\end{proof}

\begin{proof}[Proof of Theorem \ref{th:solution-exists}] The main argument is analogous to existing results \citep[for example,][Theorem~5.1]{pennanen2014optimal} and is therefore presented in outline only. Observe first from~\eqref{eq:V-presentation01-intermediate} that
\[
 V(u) = \inf_{x\in\mathcal{N},y\in\mathcal{N}^{2'}} \mathbb{E}[f(x,y,u)],
\]
where $f:\Omega\times\mathbb{R}^{T+1}\times\mathbb{R}^{2(T+2)}\times\mathbb{R}^{2(T+1)}\rightarrow\mathbb{R}\cup\{\infty\}$ is defined as
\[
 f^\omega(x,y,u) \coloneqq \begin{cases}
                     \ttotal{t=0}{T} v_t(x_t) & \text{if }(x,y,u)\in \mathcal{B}^\omega,\\
                     \infty & \text{if }(x,y,u)\notin \mathcal{B}^\omega,
                    \end{cases}
\]
where $\processdef{x}{t}{0}{T}$, $\processdef{y}{t}{-1}{T}$, $\processdef{u}{t}{0}{T}$ and
\begin{align*}
 \mathcal{B}^\omega &\coloneqq \big\{(x,y,u)\in\mathbb{R}^{T+1}\times\mathbb{R}^{2(T+2)}\times\mathbb{R}^{2(T+1)}\\
 &\qquad: y_{-1}=y_T=0,-\Delta y_t - u_t + (x_t,0) \in\mathcal{K}^\omega_t\ \forall t \big\}, \\
 \mathcal{K}_t^\omega &\coloneqq \big\{z^\omega\in\mathbb{R}^2: z\in\mathcal{K}_t\big\} = \big\{(z^b,z^s)\in\mathbb{R}^2: x^b+x^s S^{b\omega}_t \ge0,x^b+x^s S^{a\omega}_t \ge0\big\}.
\end{align*}

For each $\omega\in\Omega$ the set $\mathcal{B}^\omega$ is a closed convex cone containing the origin $(0,0,0)$. The regret functions $(v_t)_{t=0}^T$ are convex, lower semicontinuous and bounded from below, and so is $(x,y,u)\mapsto f^\omega(x,y,u)$ \citep[Theorems~5.2,~9.3]{rockafellar1997convex}. In particular, $f$ is a normal integrand \citep[Def.~14.27]{rockafellar2009variational} satisfying $f(0,0,0)=0$.

The convexity of $V$ follows from the convexity of $(x,y,u)\mapsto \mathbb{E}(f(x,y,u))$ \citep[Theorem~1]{rockafellar1974conjugate}. Theorem~2 of \citet{pennanen2012stochastic} then establishes the rest of the claim, provided that
\[
 \mathcal{M} \coloneqq \big\{ (x,y)\in\mathcal{N}\times\mathcal{N}^{2'}:f^{\omega\infty} (x^\omega,y^\omega,0) \le 0\ \forall\omega\in\Omega\big\}
\]
is a linear space. For every $\omega\in\Omega$ the recession function $f^{\omega\infty}$ of $f^\omega$ is
\[
 f^{\omega\infty}(x,y,u) = \lim_{\lambda\downarrow0} f(\lambda x, \lambda y, \lambda u) =
 \begin{cases}
  0 &\text{if } (x,y,u)\in B^\omega, x_t \le 0 \ \forall t,\\
  \infty & \text{otherwise}
 \end{cases}
\]
\citep[Corollary~8.5.2]{rockafellar1997convex}, and therefore
\[
 \mathcal{M} = \left\{ (x,y)\in\mathcal{N}\times(\Phi\cap\Psi):-\Delta y_t + (x_t,0) \in\mathcal{K}_t, x_t\le 0\ \forall t\right\}.
\]
The robust no-arbitrage condition implies that $\Phi\cap\Psi$ is linear \citep[Lemma~2.6]{schachermayer2004fundamental}, and so it suffices to show that if $(x,y)\in\mathcal{M}$, then $x_t=0$ for all $t$. To this end, assume by contradiction that 
$\{x_{t^\ast} < 0 \} \neq \emptyset$ for some $t^\ast$ and define $z\in\mathcal{N}^{2'}$ as $z_{-1} \coloneqq 0$, $z_t \coloneqq y_t - \ttotal{s=0}{t}(x_s,0)$ for all $t\ge0$.
Then
$
 \Delta z_t = \Delta y_t - (x_t,0) \in -\mathcal{K}_t$ for all $t\ge0$,
so that $z\in\Phi$. It further follows from $y_T=0$ that
$
 z_T = -\ttotal{t=0}{T}(x_t,0)\neq0,
$
and hence $z$ violates~\eqref{eq:NA}. This contradiction completes the proof.
\end{proof}

\begin{proof}[Proof of Theorem \ref{thm:StrongDuality}]
 For any $x=x\in\mathcal{N}$, there are two possibilities for the second term in the Lagrangian $L_u$. If $x\in\mathcal{A}_u$, then the coefficient of $\lambda$ must be nonpositive, and by taking $\lambda=0$ we obtain
 \[
  \sup_{\lambda\geq0,(\mathbb{Q},S)\in\bar{\mathcal{P}}}L_{u}(x,\lambda,(\mathbb{Q},S)) = \ttotal{t=0}{T}\mathbb{E}[v_t(x_t)].
 \]
 If $x\notin\mathcal{A}_u$, then there exists some $(\mathbb{Q},S)\in\bar{\mathcal{P}}$ for which the second term is positive whenever $\lambda>0$, and by taking $\lambda$ arbitrarily large we obtain
 \[
  \sup_{\lambda\geq0,(\mathbb{Q},S)\in\bar{\mathcal{P}}}L_{u}(x,\lambda,(\mathbb{Q},S)) = \infty.
 \]
 Combining this with \eqref{eq:V-presentation01} gives
 \[
  \inf_{x\in\mathcal{N}}\sup_{\lambda\geq0,(\mathbb{Q},S)\in\bar{\mathcal{P}}}L_{u}(x,\lambda,(\mathbb{Q},S)) = \inf_{x\in\mathcal{A}_u} \ttotal{t=0}{T}\mathbb{E}[v_t(x_t)] = V(u).
\]

Since the function $V$ is lower semicontinuous and convex on $\mathcal{N}^{2}$, it follows that
\begin{equation}
V(u)=\sup_{z\in\mathcal{N}^{2}}\left\{ \ttotal{t=0}{T}\mathbb{E}[u_t\cdot z_t]-V^{\ast}(z)\right\} \text{ for all }u\in\mathcal{N}^2 \label{eq:V-duality}
\end{equation}
\citep[Theorem 5]{rockafellar1974conjugate}, where the conjugate function $V^\ast$ of $V$ is defined as
\[
V^{\ast}(z)\coloneqq\sup_{u\in\mathcal{N}^{2}}\left\{ \ttotal{t=0}{T}\mathbb{E}[u_t\cdot z_t]-V(u)\right\} \text{ for all }z\in\mathcal{N}^{2}.
\]
For every $z\in\mathcal{N}^{2}$, it follows from~\eqref{eq:V-presentation01-intermediate} that
\begin{multline}
 V^{\ast}(z)=\sup\left\{\ttotal{t=0}{T}\mathbb{E}[z_t\cdot u_t-v_t(x_t)]\right.\\
 \left.\phantom{\ttotal{t=0}{T}}:(x,y,u)\in\mathcal{N}\times\Psi\times\mathcal{N}^{2},\Delta y_t+u_t-(x_t,0)\in-\mathcal{K}_t\thinspace\forall t\right\}.
\end{multline}
This optimization problem can be decoupled into three optimization problems over $x$, $y$ and the transformed process $w\in\mathcal{N}^2$ given by
$w_t \coloneqq \Delta y_t + u_t - (x_t,0)$ for all~$t$. Observing that
\begin{align*}
z_t\cdot u_t-v_t(x_t) & =z_t\cdot(w_t-\Delta y_t+(x_t,0))-v_t(x_t) \\
&=z_t\cdot w_t-z_t\cdot\Delta y_t+z_t^bx_t-v_t(x_t)
\end{align*}
for all $t$, it follows that
\begin{multline}
 V^\ast(z) = \sup_{w\in\mathcal{N}^2,w_t\in-\mathcal{K}_t\thinspace\forall t}\ttotal{t=0}{T}\mathbb{E}[z_t\cdot w_t] - \inf_{y\in\Psi}\ttotal{t=0}{T}\mathbb{E}[z_t\cdot \Delta y_t] \\ + \sup_{x\in\mathcal{N}}\ttotal{t=0}{T}\mathbb{E}\big[z^bx_t-v_t(x_t)\big]. \label{eq:Vstar-definition}
\end{multline}
For the first term on the right hand side of~\eqref{eq:Vstar-definition}, define the positive polar of the solvency cone $\mathcal{K}_t$ as
$
\mathcal{K}_t^{+}\coloneqq\left\{ y\in\mathcal{L}_t^{2}:y\cdot x\geq0\text{ for all }x\in\mathcal{K}_t\right\}$.
Then
\begin{equation}
\sup_{w\in\mathcal{N}^2,w_t\in-\mathcal{K}_t\thinspace\forall t}\ttotal{t=0}{T}\mathbb{E}[z_t\cdot w_t] =\begin{cases}
0 & \text{if }z_t\in\mathcal{K}_t^{+}\thinspace\forall t,\\
\infty & \text{otherwise}
\end{cases}\label{eq:Vstar-definition-part1}
\end{equation}
because it holds for all $t$ that
\[
\sup_{w_t\in-\mathcal{K}_t}\mathbb{E}[z_t\cdot w_t]=\begin{cases}
0 & \text{if }z_t\in\mathcal{K}_t^{+},\\
\infty & \text{otherwise}
\end{cases}
\]
For the second term, the property $y_{-1}=y_T=0$ and rearrangement leads to
\[
 \ttotal{t=0}{T} z_t\cdot \Delta y_t = -\ttotal{t=0}{T-1} \Delta z_{t+1}\cdot y_t \text{ for all }y=y\in\Psi.
\]
Moreover, for all $t<T$, the tower property gives
\begin{align*}
\sup_{y_t\in\mathcal{L}_t^{2}}\mathbb{E}[\Delta z_{t+1}\cdot y_t] & =\sup_{y_t\in\mathcal{L}_t^{2}}\mathbb{E}[\mathbb{E}[\Delta z_{t+1}|\mathcal{F}_t]\cdot y_t]=\begin{cases}
0 & \text{if }\mathbb{E}[\Delta z_{t+1}|\mathcal{F}_t]=0,\\
\infty & \text{otherwise},
\end{cases}
\end{align*}
which implies that
\begin{equation}
\inf_{y\in\Psi}\ttotal{t=0}{T}\mathbb{E}[z_t\cdot\Delta y_t]
=
-\ttotal{t=0}{T-1}\sup_{y_t\in\mathcal{L}_t^{2}}\mathbb{E}[\Delta z_{t+1}\cdot y_t] =
\begin{cases}
0 & \text{if }z\text{ is a martingale},\\
-\infty & \text{otherwise}.
\end{cases} \label{eq:VStar-definition-part2}
\end{equation}
Combining~\eqref{eq:Vstar-definition},~\eqref{eq:Vstar-definition-part1} and~\eqref{eq:VStar-definition-part2}, we obtain
\begin{equation}\label{eq:Vstar-simplified}
V^{\ast}(z)=\begin{cases}
\displaystyle\sup_{x\in\mathcal{N}}\ttotal{t=0}{T}\mathbb{E}\big[z^bx_t-v_t(x_t)\big] & \text{if }z\in\bar{\mathcal{C}},\\
\infty & \text{otherwise},
\end{cases}
\end{equation}
where 
\begin{align}
 \bar{\mathcal{C}} 
 &\coloneqq \left\{ z\in\mathcal{N}^{2}:z\text{ a martingale}, z_t\in\mathcal{K}_t^{+}\ \forall t\right\} \nonumber\\
 &= \big\{ \big(\lambda(1,S_t)\Lambda_t^{\mathbb{Q}}\big)_{t=0}^{T}:\lambda\geq0,\thinspace(\mathbb{Q},S)\in\bar{\mathcal{P}}\big\}, \label{eq:Cbar-and-Pbar}
\end{align}
and where the final equality follows by straightforward adaptation of the arguments of \citet[pp.\ 24-25]{schachermayer2004fundamental}. Substituting~\eqref{eq:Vstar-simplified} into~\eqref{eq:V-duality} gives
\begin{align*}
 V(u)
 &=\sup_{z\in\bar{\mathcal{C}}}\left\{ \ttotal{t=0}{T}\mathbb{E}[u_t\cdot z_t] -  \sup_{x\in\mathcal{N}}\ttotal{t=0}{T}\mathbb{E}\big[z^bx_t-v_t(x_t)\big]\right\} \\
 &= \sup_{z\in\bar{\mathcal{C}}}\inf_{x\in\mathcal{N}}\ttotal{t=0}{T}\mathbb{E}\big[v_t(x_t) + u_t\cdot z_t - z_t^bx_t\big]
\end{align*}
for all $u\in\mathcal{N}^2$. The representation~\eqref{eq:Cbar-and-Pbar} then leads to
\begin{align*}
 V(u) &= \sup_{\lambda\geq0,(\mathbb{Q},S)\in\bar{\mathcal{P}}}\inf_{x\in\mathcal{N}}\ttotal{t=0}{T}\mathbb{E}\big[ v_t(x_t) + \lambda \big( u^b_t + u^s_tS_t - x_t\big)\Lambda^\mathbb{Q}_t\big] \\
 &= \sup_{\lambda\geq0,(\mathbb{Q},S)\in\bar{\mathcal{P}}}\inf_{x\in\mathcal{N}}\ttotal{t=0}{T}\big(\mathbb{E}[v_t(x_t)] + \lambda\mathbb{E}_\mathbb{Q}\big[u^b_t + u^s_tS_T - x_t\big] \big) \\
 &= \sup_{\lambda\geq0,(\mathbb{Q},S)\in\bar{\mathcal{P}}}\inf_{x\in\mathcal{N}}L_{u}(x,\lambda,(\mathbb{Q},S)),
\end{align*}
by the tower property of conditional expectation in conjunction with~\eqref{eq:Lambda^Q} and the martingale property of $S$.
\end{proof}

\begin{proof}[Proof of Proposition \ref{prop:infL_dual}]
Fix any $\lambda\geq0$ and $(\mathbb{Q},S)\in\bar{\mathcal{P}}$, and observe from~\eqref{eq:Lagrangian (scalar)}, the definition of $\mathcal{N}$ and the finiteness of $\Omega$ that
\begin{align}
\inf_{x\in\mathcal{N}}L_{u}(x,\lambda,(\mathbb{Q},S))
&=-\sup_{x\in\mathcal{N}}\ttotal{t=0}{T}\mathbb{E}\big[\lambda \Lambda^\mathbb{Q}_t x_t - v_t(x_t)\big] + \lambda\ttotal{t=0}{T}\mathbb{E}_{\mathbb{Q}}\big[u^b_t + u^s_tS_T\big] \nonumber\\
&=-\ttotal{t=0}{T}\sup_{x_t\in\mathcal{L}_t}\mathbb{E}\big[\lambda \Lambda^\mathbb{Q}_t x_t - v_t(x_t)\big] + \lambda\ttotal{t=0}{T}\mathbb{E}_{\mathbb{Q}}\big[u^b_t + u^s_tS_T\big] \nonumber\\
&=-\ttotal{t=0}{T}\mathbb{E}\big[v^\ast_t\big(\lambda \Lambda^\mathbb{Q}_t\big)\big] + \lambda\ttotal{t=0}{T}\mathbb{E}_{\mathbb{Q}}\big[u^b_t + u^s_tS_T\big]. \label{eq:prop:infL_dual}
\end{align}
Here
\begin{equation}
 v_t^\ast(z) 
 \coloneqq \sup_{y\in\mathbb{R}}\{zy-v_t(y)\} =
 \begin{cases}
  \tfrac{z}{\alpha_t}\ln\tfrac{z}{\alpha_t}-\tfrac{z}{\alpha_t}+1 & \text{if }t\in\mathcal{I}, z\ge0,\\
  0 & \text{if } t\notin \mathcal{I}, z\ge0, \\
  \infty & \text{if }z<0
 \end{cases} \label{eq:vstar}
\end{equation}
is the convex conjugate of $v_t$ for all $t$. Note finally that, for each $t\in\mathcal{I}$,
\begin{align*}
\mathbb{E}\big[v^\ast_t\big(\lambda \Lambda^\mathbb{Q}_t\big)\big]
&= \tfrac{\lambda}{\alpha_t}\mathbb{E}_{\mathbb{Q}}\big[\ln\Lambda^\mathbb{Q}_t\big] + \tfrac{\lambda}{\alpha_t}\big(\ln\tfrac{\lambda}{\alpha_t}-1\big) + 1.
\end{align*}
\end{proof}

\begin{proof}[Proof of Theorem \ref{thm:min_disutility}]
The function
\[
 f(\lambda) \coloneqq \lambda K\big(-\ttotal{t=0}{T}u_t\big) + \ttotal{t\in\mathcal{I}}{}\tfrac{\lambda}{\alpha_t}\big(\ln\tfrac{\lambda}{\alpha_t}-1\big) \text{ for all }\lambda\ge0.
\]
is convex and twice continuously differentiable, and attains its unique minimum at the point $\hat{\lambda}_u$ in~\eqref{eq:def_lambda()}. Substituting~\eqref{eq:def_lambda()} into~\eqref{eq:V-ito-lambda} leads to the formula~\eqref{eq:V-ito-hat-lambda}.
\end{proof}

\begin{proof}[Proof of Theorem \ref{thm:indifferenceprices_expdisutility}]
Observe first that~\eqref{eq:formula:piFbi} follows directly from~\eqref{eq:pibF=-piaF} and~\eqref{eq:formula:piFai}. Define 
\[
\hat{\pi}\coloneqq K\left(\ttotal{t=0}{T}w_t\right)-K\left(\ttotal{t=0}{T}(w_t-c_t)\right).
\]
As $\hat{\pi}$ is deterministic, we have
\[
 K\left((\hat{\pi},0)+\ttotal{t=0}{T}(w_t-c_t)\right) = \hat{\pi} + K\left(\ttotal{t=0}{T}(w_t-c_t)\right) = K\left(\ttotal{t=0}{T}w_t\right)
\]
by~\eqref{eq:def_HI} and~\eqref{eq:K_I(X)}. It then follows from~\eqref{eq:def_lambda()} that
\[
\hat{\lambda}_{c-\hat{\pi}\mathbbm{1}-w}=\exp\left\{\left(\ttotal{t\in\mathcal{I}}{}\tfrac{\ln\alpha_t}{\alpha_t}-K\left((\hat{\pi},0)+\ttotal{t=0}{T}(w_t-c_t)\right)\right)\middle/\ttotal{t\in\mathcal{I}}{}\tfrac{1}{\alpha_t}\right\} = \hat{\lambda}_{-w},
\]
and from~\eqref{eq:V-ito-hat-lambda} that
\[
V(c-\hat{\pi}\mathbbm{1}-w)=\hat{\lambda}_{c-\hat{\pi}\mathbbm{1}-w}\ttotal{t\in\mathcal{I}}{}\tfrac{1}{\alpha_t}-\lvert\mathcal{I}\rvert=\hat{\lambda}_{-w}\ttotal{t\in\mathcal{I}}{}\tfrac{1}{\alpha_t}-\lvert\mathcal{I}\rvert=V(-w).
\]
Thus $\pi^{ai}(c;w)\le \hat{\pi}$.

In order to establish~\eqref{eq:formula:piFai}, it suffices to show that $V(c-\pi\mathbbm{1}-w) > V(c-\hat{\pi}\mathbbm{1}-w)$ for any $\pi < \hat{\pi}$. By Theorem \ref{th:solution-exists} there exists for every $\pi < \hat{\pi}$ a process $x^{\pi}\in \mathcal{A}_{c-\pi\mathbbm{1}-w}$ such that
$
V(c-\pi\mathbbm{1}-w) = \ttotal{t=0}{T}\mathbb{E}[v_t(x^{\pi}_t)].
$
Define a new process $x^{\hat{\pi}}\in\mathcal{N}$ as
\[
 x^{\hat{\pi}}_t \coloneqq \begin{cases}
                x^{\pi}_t + \tfrac{1}{\lvert\mathcal{I}\rvert}(\pi-\hat{\pi}) & \text{if } t\in\mathcal{I},\\
                x^{\pi}_t & \text{otherwise}.
               \end{cases}
\]
Then
\begin{align*}
 \ttotal{t=0}{T}(c_t - \hat{\pi}\mathbbm{1}_t - w_t - (x^{\hat{\pi}}_t,0)) 
 &= \ttotal{t=0}{T}(c_t - w_t - (x^{\pi}_t,0)) - (\pi,0) \\
 &= \ttotal{t=0}{T}(c_t - \pi\mathbbm{1}_t - w_t - (x^{\pi}_t,0)),
\end{align*}
and so it follows from~\eqref{eq:def:Au} that $x^{\hat{\pi}}\in \mathcal{A}_{c-\hat{\pi}\mathbbm{1}-w}$. Furthermore, for every $t\in\mathcal{I}$ we have $v_t(x^\pi_t)>v_t(x^{\hat{\pi}}_t)$ so that
\[
 V(c-\pi\mathbbm{1}-w) = \ttotal{t=0}{T}\mathbb{E}[v_t(x^{\pi}_t)] > \ttotal{t=0}{T}\mathbb{E}[v_t(x^{\hat{\pi}}_t)] \geq V(c-\hat{\pi}\mathbbm{1}-w)
\]
by~\eqref{eq:V-presentation01}, as required.
\end{proof}

\begin{proof}[Proof of Theorem \ref{th:bidaskspread}]
  We first show that 
  \begin{equation} \label{eq:ask-price-bound}
   \pi^{ai}(c;w) \le \pi^a(c) \text{ for all }c,w\in\mathcal{N}^2.
  \end{equation}
  Note first that $c-\pi^a(c)\mathbbm{1}\in\mathcal{Z}$ from~\eqref{eq:pi-a-dual} and~\eqref{eq:def-of-Z-extra}. Furthermore, for any $x\in\mathcal{A}_{-w}$, we have $-w - (x_t,0)_{t=0}^T\in\mathcal{Z}$, and since $\mathcal{Z}$ is a convex cone, it follows that $c-\pi^a(c)\mathbbm{1}-w - (x_t,0)_{t=0}^T\in\mathcal{Z}$, so that finally $x\in\mathcal{A}_{c-\pi^a(c)\mathbbm{1}-w}$. Thus $\mathcal{A}_{-w}\subseteq\mathcal{A}_{c-\pi^a(c)\mathbbm{1}-w}$, so that $V(c-\pi^a(c)\mathbbm{1}-w)\le V(-w)$ by~\eqref{eq:V-presentation01}. This in turn implies that $\pi^{ai}(c;w) \le \pi^a(c)$ by~\eqref{eq:seller_indifferenceprice}. 
  
  Combining~\eqref{eq:ask-price-bound} with~\eqref{eq:pibF=-piaF} and~\eqref{eq:buyer's-superhedging-price} immediately gives for all $c,w\in\mathcal{N}^2$ that
  \[
   \pi^{bi}(c;w) = -\pi^{ai}(-c;w) \ge -\pi^a(-c) = \pi^b(c).
  \]
  The remainder of the proof is devoted to showing the convexity of $u\mapsto\pi^{ai}(u;w)$. Once established, it immediately gives that $u\mapsto\pi^{bi}(u;w)$ is concave by~\eqref{eq:pibF=-piaF}. Moreover, combining the convexity with~\eqref{eq:formula:piFai} gives for all $c,w\in\mathcal{N}^2$ that
  \[
    0 = \pi^{ai}(0;w) \le \tfrac{1}{2}\pi^{ai}(c;w) + \tfrac{1}{2}\pi^{ai}(-c;w),
  \]
  whence
  $
   \pi^{bi}(c;w) = -\pi^{ai}(-c;w) \le \pi^{ai}(c;w).
  $ To establish the convexity, fix $w\in\mathcal{N}^2$ and note that
  \[
   C \coloneqq \{x\in\mathcal{N}^2: V(x-w) \le V(-w)\}
  \]
  is convex because, for all $x,y\in C$ and $\lambda\in[0,1]$ we have
  \[
   V(\lambda x + (1-\lambda)y-w) \le \lambda V(x-w) + (1-\lambda)V(y-w) \le V(w)
  \]
  by the convexity of $V$ (Theorem~\ref{eq:formula:piFai}). For any $c,d\in\mathcal{N}^2$ and $\lambda\in[0,1]$ we have
  \begin{align*}
   \lambda\pi^{ai}(c;w) + (1-\lambda)\pi^{ai}(d;w)
   &= \lambda\inf\{\gamma:c-\gamma\mathbbm{1}\in C\} + (1-\lambda)\inf\{\delta:d-\delta\mathbbm{1}\in C\} \\
   &= \inf\{\lambda\gamma + (1-\lambda)\delta:c-\gamma\mathbbm{1}\in C,d-\delta\mathbbm{1}\in C\}.
  \end{align*}
  By the convexity of $C$, the conditions $c-\gamma\mathbbm{1}\in C,d-\delta\mathbbm{1}\in C$ imply that
  \[
   \lambda c + (1-\lambda)d - (\lambda\gamma + (1-\lambda)\delta)\mathbbm{1} = \lambda(c-\gamma\mathbbm{1}) + (1-\lambda)(d-\delta\mathbbm{1}) \in C,
  \]
  whence
  \begin{align*}
   \lambda\pi^{ai}(c;w) + (1-\lambda)\pi^{ai}(d;w)
   &\ge \inf\{\varepsilon:\lambda c + (1-\lambda)d - \varepsilon\mathbbm{1}\in C\} \\
   &= \pi^{ai}(\lambda c + (1-\lambda)d;w).
  \end{align*}
  This establishes the convexity of $u\mapsto\pi^{ai}(u;w)$ and completes the proof.  
  \end{proof}
    
\begin{proof}[Proof of Proposition \ref{prop:H_formula}]
 Observe from~\eqref{eq:Lambda^Q-and-transition-probabilities} that
 \[
  \ttotal{\nu\in\mu^+}{} q^\nu_t\ln\Lambda^{\mathbb{Q}\nu}_t = \ln\Lambda^{\mathbb{Q}\mu}_{t-1} + \ttotal{\nu\in\mu^+}{} q^\nu_t\ln\tfrac{q^\nu_t}{p^\nu_t} \text{ for all }t>0,\mu\in\Omega^\mathbb{Q}_{t-1},\nu\in\mu^+.
 \]
 Using the nodes in $\Omega_{t-1}$ to partition $\Omega$, and noting that $\mathbb{Q}$ and $\Lambda^{\mathbb{Q}}_t$ are nonzero only on the nodes in $\Omega_{t-1}^{\mathbb{Q}}$, leads to
 \begin{align*}
  \mathbb{E}_{\mathbb{Q}}\big[\ln\Lambda^\mathbb{Q}_t\big]
  &= \ttotal{\mu\in\Omega_{t-1}^{\mathbb{Q}}}{} \mathbb{Q}(\mu)\ttotal{\nu\in\mu^+}{} q^\nu_t\ln\Lambda^{\mathbb{Q}\nu}_t\\
  &=\ttotal{\mu\in\Omega_{t-1}^{\mathbb{Q}}}{} \mathbb{Q}(\mu)\ln\Lambda^{\mathbb{Q}\mu}_{t-1} + \ttotal{\mu\in\Omega_{t-1}^{\mathbb{Q}}}{} \mathbb{Q}(\mu)\ttotal{\nu\in\mu^+}{} q^\nu_t\ln\tfrac{q^\nu_t}{p^\nu_t} \\
  &= \mathbb{E}_{\mathbb{Q}}\big[\ln\Lambda_{t-1}^{\mathbb{Q}}\big] + \ttotal{\mu\in\Omega_{t-1}^{\mathbb{Q}}}{} \mathbb{Q}(\mu)\ttotal{\nu\in\mu^+}{} q^\nu_t\ln\tfrac{q^\nu_t}{p^\nu_t}.
 \end{align*}
 Observing that $\mathbb{E}_\mathbb{Q}\big[\ln\Lambda_0^{\mathbb{Q}}\big]=0$, and introducing a telescoping sum, leads to
 \begin{align*}
  \mathbb{E}_{\mathbb{Q}}\big[\ln\Lambda^\mathbb{Q}_t\big] 
  &= \ttotal{k=1}{t}\ttotal{\mu\in\Omega_{k-1}^{\mathbb{Q}}}{} \mathbb{Q}(\mu)\ttotal{\nu\in\mu^+}{} q^\nu_k\ln\tfrac{q^\nu_k}{p^\nu_k}.
 \end{align*}
 Then, after collecting like terms, it follows that
 \begin{align*}
  \ttotal{t\in\mathcal{I}}{}\tfrac{1}{\alpha_t}\mathbb{E}_{\mathbb{Q}}\big[\ln\Lambda^\mathbb{Q}_t\big]
  &= \ttotal{t\in\mathcal{I}\backslash\{0\}}{}\tfrac{1}{\alpha_t}\mathbb{E}_{\mathbb{Q}}\big[\ln\Lambda^\mathbb{Q}_t\big] \\
  &= \ttotal{t=0}{T-1}a_{t+1}\ttotal{\mu\in\Omega_t^{\mathbb{Q}}}{}\mathbb{Q}(\mu)\ttotal{\nu\in\mu^{+}}{}q_{t+1}^\nu\ln\tfrac{q_{t+1}^\nu}{p_{t+1}^\nu}.
 \end{align*}
 The result follows from~\eqref{eq:def_HI} after using the nodes in $\Omega_{T-1}$ to partition $\Omega$ and observing that
 \[
  \mathbb{E}_{\mathbb{Q}}\big[X^b+X^sS_T\big] = \ttotal{\mu\in\Omega_{T-1}^{\mathbb{Q}}}{}\mathbb{Q}(\mu)\ttotal{\nu\in\mu^{+}}{}q_T^\nu\big(X^{b\nu} + X^{s\nu}S_T^\nu\big).
 \]
\end{proof}

\begin{proof}[Proof of Proposition \ref{prop:J-minimizes-H}]
The properties of the $J_t$'s are proved by backward induction. The convexity, continuity and boundedness properties of $J_T^\nu$ is self-evident from~\eqref{eq:def_JT}. For every $t<T$, suppose that $J^\nu_t$ is convex, bounded from below and continuous on its effective domain $\dom J^\nu_t\subseteq[S_t^{b\nu},S_t^{a\nu}]$ for all $\nu\in\Omega_{t+1}$. Define
\[
 g^\nu(q) \coloneqq 
 \begin{cases}
    a_{t+1} q\ln\tfrac{q}{p_{t+1}^\nu} & \text{if } q\in[0,1],\\
    \infty & \text{otherwise}
 \end{cases}
\]
for all $\nu\in\Omega_{t+1}$; then $g^\nu$ is convex, bounded from below and continuous on its effective domain $\dom g^\nu=[0,1]$. Propositions~\ref{prop:f-convex} and~\ref{prop:f-continuous-closed} then give that $f_t^\mu$ is convex, bounded from below and continuous on its effective domain for every \mbox{$\mu\in\Omega_t$}, and that the infimum in~\eqref{eq:def_ft} is attained for all $x\in\dom f_t^\mu$. It is then clear from~\eqref{eq:def_Jt} that $J_t^\mu$ has the properties claimed. This concludes the inductive step.

To establish~\eqref{eq:J0S0-formula}, fix any $(\mathbb{Q},S)\in\bar{\mathcal{P}}$. We show first by backward induction that
\begin{multline}\label{eq:backward_induction}
\inf_{(\bar{\mathbb{Q}},\bar{S})\in\bar{\mathcal{P}}_{t+1}(\mathbb{Q},S)}H((\bar{\mathbb{Q}},\bar{S});X)
 = \ttotal{k=0}{t}a_{k+1}\ttotal{\mu\in\Omega_k^{\mathbb{Q}}}{} \mathbb{Q}(\mu)\ttotal{\nu\in\mu^{+}}{}q_{k+1}^\nu\ln\tfrac{q_{k+1}^\nu}{p_{k+1}^\nu} \\
 + \ttotal{\mu\in\Omega_t^{\mathbb{Q}}}{}\mathbb{Q}(\mu)\ttotal{\nu\in\mu^{+}}{}q_{t+1}^\nu J^\nu_{t+1}(S^\nu_{t+1})
\end{multline}
for all $t<T$, where
\begin{equation} \label{eq:def-PtQS}
\bar{\mathcal{P}}_t(\mathbb{Q},S)\coloneqq\{ (\bar{\mathbb{Q}},\bar{S})\in\bar{\mathcal{P}}:\bar{\mathbb{Q}}=\mathbb{Q}\text{ on }\mathcal{F}_t,\bar{S}_k=S_k\thinspace\forall k\le t\} 
\end{equation}
is the collection of martingale pairs that coincide with $(\mathbb{Q},S)$ up to time $t$. When $t=T-1$, we have $\bar{\mathcal{P}}_T(\mathbb{Q},S)=\{(\mathbb{Q},S)\}$, so that~\eqref{eq:backward_induction} follows from~\eqref{eq:H_formula} and~\eqref{eq:def_JT}. Assume now that~\eqref{eq:backward_induction} holds for some $t=1,\ldots,T-1$. Rearrangement gives
\begin{multline*}
\inf_{(\bar{\mathbb{Q}},\bar{S})\in\bar{\mathcal{P}}_{t+1}(\mathbb{Q},S)}H((\bar{\mathbb{Q}},\bar{S});X)
 = \ttotal{k=0}{t-1}a_{k+1}\ttotal{\mu\in\Omega_k^{\mathbb{Q}}}{}\mathbb{Q}(\mu)\ttotal{\nu\in\mu^{+}}{}q_{k+1}^\nu\ln\tfrac{q_{k+1}^\nu}{p_{k+1}^\nu} \\
 + \ttotal{\mu\in\Omega_t^{\mathbb{Q}}}{}\mathbb{Q}(\mu)\ttotal{\nu\in\mu^{+}}{}q_{t+1}^\nu\left(a_{t+1} \ln\tfrac{q_{t+1}^\nu}{p_{t+1}^\nu}+J_{t+1}^\nu(S_{t+1}^\nu)\right),
\end{multline*}
after which we obtain from~\eqref{eq:def_Pbar},~\eqref{eq:def-PtQS} and~\eqref{eq:def_Jt} that
\begin{multline*}
\inf_{(\bar{\mathbb{Q}},\bar{S})\in\bar{\mathcal{P}}_t(\mathbb{Q},S)}H((\bar{\mathbb{Q}},\bar{S});X)\\
\begin{aligned}
&= \ttotal{k=0}{t-1}a_{k+1}\ttotal{\mu\in\Omega_k^{\mathbb{Q}}}{}\mathbb{Q}(\mu)\ttotal{\nu\in\mu^{+}}{}q_{k+1}^\nu\ln\tfrac{q_{k+1}^\nu}{p_{k+1}^\nu} + \ttotal{\mu\in\Omega_t^{\mathbb{Q}}}{}\mathbb{Q}(\mu)J^\mu_t(S^\mu_t)\\
&= \ttotal{k=0}{t-1}a_{k+1}\ttotal{\mu\in\Omega_k^{\mathbb{Q}}}{}\mathbb{Q}(\mu)\ttotal{\nu\in\mu^{+}}{}q_{k+1}^\nu\ln\tfrac{q_{k+1}^\nu}{p_{k+1}^\nu} + \ttotal{\mu\in\Omega_{t-1}^{\mathbb{Q}}}{}\mathbb{Q}(\mu)\ttotal{\nu\in\mu^{+}}{}q_t^\nu J^\mu_t(S^\mu_t).
\end{aligned}
\end{multline*}
This concludes the inductive step. 

Finally, when $t=0$, the equation~\eqref{eq:backward_induction} reduces to
\begin{align*}
 \inf_{(\bar{\mathbb{Q}},\bar{S})\in\bar{\mathcal{P}}_1(\mathbb{Q},S)}H((\bar{\mathbb{Q}},\bar{S});X)
 = a_1\ttotal{\nu\in\Omega_1}{}q_1^\nu\ln\tfrac{q_1^\nu}{p_1^\nu} + \ttotal{\nu\in\Omega_1}{}q_1^\nu J^\nu_1(S^\nu_1),
\end{align*}
and again combining~\eqref{eq:def_Pbar},~\eqref{eq:def-PtQS} and~\eqref{eq:def_Jt} yields
\[
 \inf_{(\bar{\mathbb{Q}},\bar{S})\in\bar{\mathcal{P}},\bar{S}_0=S_0}H((\bar{\mathbb{Q}},\bar{S});X)
 = \inf_{(\bar{\mathbb{Q}},\bar{S})\in\bar{\mathcal{P}}_0(\mathbb{Q},S)}H((\bar{\mathbb{Q}},\bar{S});X)
 = J_0(S_0).
\]
This completes the proof.
\end{proof}

\begin{proof}[Proof of Theorem \ref{thm:opt_(QS)=opt(qx)}]
Standard arguments \citep[Theorem 5.25]{cutland2012derivative} can be used to show that $\hat{\mathbb{Q}}$ is a probability measure. The process $\hat{S}$ is a martingale under $\hat{\mathbb{Q}}$ by~\eqref{eq:Shat-martingale}, whence $(\hat{\mathbb{Q}},\hat{S})\in\bar{\mathcal{P}}$. Furthermore, recursive expansion of~\eqref{eq:QShat-and-Js} gives
\begin{multline*}
 J_0(\hat{S}_0)
 = \ttotal{t=0}{T-1}a_{t+1}\ttotal{\mu\in\Omega_t^{\hat{\mathbb{Q}}}}{}\hat{\mathbb{Q}}(\mu)\ttotal{\nu\in\mu^{+}}{}\hat{q}_{t+1}^\nu\ln\tfrac{\hat{q}_{t+1}^\nu}{p_{t+1}^\nu} \\
 + \ttotal{\mu\in\Omega_{T-1}^{\hat{\mathbb{Q}}}}{}\hat{\mathbb{Q}}(\mu)\ttotal{\nu\in\mu^{+}}{}\hat{q}_T^\nu J^\nu_T(\hat{S}_T^\nu) = H((\hat{\mathbb{Q}},\hat{S});X)
\end{multline*}
from~\eqref{eq:def_HI} and~\eqref{eq:def_JT}. Then~\eqref{eq:hatS-minimizes-J0}, Proposition~\ref{prop:J-minimizes-H} and~\eqref{eq:K_I(X)} combine to give
\begin{align*}
 J_0(\hat{S}_0) &= \min_{(\mathbb{Q},S)\in\bar{\mathcal{P}}}H((\mathbb{Q},S);X) = K(X).
\end{align*}

We now show that $(\hat{\mathbb{Q}},\hat{S})\in\mathcal{P}$. Suppose
by contradiction that $(\hat{\mathbb{Q}},\hat{S})\in\bar{\mathcal{P}}\backslash\mathcal{P}$, in other words, $\Lambda_t^{\hat{\mathbb{Q}}}(\omega)=0$ for some $t=0,\ldots,T$ and $\omega\in\Omega$. Fix any $(\mathbb{Q},S)\in\mathcal{P}$, and define
\[
 \epsilon \coloneqq \tfrac{1}{2}\exp\left\{\big(H((\hat{\mathbb{Q}},\hat{S});X) - H((\mathbb{Q},S);X)\big)\middle/\ttotal{t\in\mathcal{I}}{}\tfrac{1}{\alpha_t}\mathbb{Q}\big(\Lambda_t^{\hat{\mathbb{Q}}}=0\big)\right\}.
\]
Observe that $\epsilon\in[0,1)$ because
$
 H((\hat{\mathbb{Q}},\hat{S});X) = J_0(\hat{S}_0) \le J_0(S_0) \le H((\mathbb{Q},S);X).
$
Define a new probability measure $\bar{\mathbb{Q}}:\mathcal{F}\rightarrow[0,1]$ and stochastic process $\bar{S}\in\mathcal{N}$ as
\begin{align}
\bar{\mathbb{Q}} & \coloneqq\epsilon\mathbb{Q} + (1-\epsilon)\hat{\mathbb{Q}}, \label{eq:Qepsilon}\\
\bar{S}_t & \coloneqq\epsilon S_t\mathbb{E}\left[\tfrac{d\mathbb{Q}}{d\bar{\mathbb{Q}}}\middle|\mathcal{F}_t\right] + (1-\epsilon)\hat{S}_t\mathbb{E}\left[\tfrac{d\hat{\mathbb{Q}}}{d\bar{\mathbb{Q}}}\middle|\mathcal{F}_t\right]\text{ for all }t. \label{eq:Sepsilon}
\end{align}
Then $(\bar{\mathbb{Q}},\bar{S})\in\mathcal{P}$ \citep[Lemma 7.2]{roux2008options}, after which~\eqref{eq:def_HI} gives
\begin{multline}
 H((\bar{\mathbb{Q}},\bar{S});X) - H((\hat{\mathbb{Q}},\hat{S});X)
 = \ttotal{t\in\mathcal{I}}{}\tfrac{1}{\alpha_t}\mathbb{E}\big[\Lambda_t^{\bar{\mathbb{Q}}}\ln\Lambda_t^{\bar{\mathbb{Q}}}-\Lambda_t^{\hat{\mathbb{Q}}}\ln\Lambda_t^{\hat{\mathbb{Q}}}\big] 
 \\+ \epsilon\big(\mathbb{E}_{\mathbb{Q}}\big[X^b+X^sS_T\big] - \mathbb{E}_{\hat{\mathbb{Q}}}\big[X^b+X^s\hat{S}_T\big]\big). \label{eq:HIe}
\end{multline}
The mapping $x\mapsto x\ln x$ is convex on $[0,\infty)$, and so
\begin{align} \label{eq:convexity-Lambda}
\Lambda_t^{\bar{\mathbb{Q}}}\ln\Lambda_t^{\bar{\mathbb{Q}}} - \Lambda_t^{\hat{\mathbb{Q}}}\ln\Lambda_t^{\hat{\mathbb{Q}}}
& \le \epsilon \big(\Lambda_t^{\mathbb{Q}}\ln\Lambda_t^{\mathbb{Q}} - \Lambda_t^{\hat{\mathbb{Q}}}\ln\Lambda_t^{\hat{\mathbb{Q}}}\big)\text{ for all }t.
\end{align}
Furthermore, on the set $\big\{\Lambda_t^{\hat{\mathbb{Q}}}=0\big\}$, and recalling the convention $0\ln0=0$, we have
\begin{align*}
 \Lambda_t^{\bar{\mathbb{Q}}}\ln\Lambda_t^{\bar{\mathbb{Q}}} - \Lambda_t^{\hat{\mathbb{Q}}}\ln\Lambda_t^{\hat{\mathbb{Q}}}
 = \epsilon\Lambda_t^{\mathbb{Q}}\ln\epsilon\Lambda_t^{\mathbb{Q}} 
 &= \epsilon \big(\Lambda_t^{\mathbb{Q}}\ln\Lambda_t^{\mathbb{Q}} - \Lambda_t^{\hat{\mathbb{Q}}}\ln\Lambda_t^{\hat{\mathbb{Q}}}\big) + \epsilon\Lambda_t^{\mathbb{Q}}\ln\epsilon.
\end{align*}
Substituting this into~\eqref{eq:HIe} gives
\begin{multline*}
 H((\bar{\mathbb{Q}},\bar{S});X) - H((\hat{\mathbb{Q}},\hat{S});X) \\
 \le \epsilon \left(H((\mathbb{Q},S);X) - H((\hat{\mathbb{Q}},\hat{S});X) + \ln\epsilon\ttotal{t\in\mathcal{I}}{}\tfrac{1}{\alpha_t}\mathbb{Q}\big(\Lambda_t^{\hat{\mathbb{Q}}}=0\big)\right).
\end{multline*}
The choice of $\epsilon$ implies that $H((\bar{\mathbb{Q}},\bar{S});X) < H((\hat{\mathbb{Q}},\hat{S});X)$, which is a contradiction. Hence $\hat{\mathbb{Q}}(\omega)>0$ for all $\omega\in\Omega$, so that $(\hat{\mathbb{Q}},\hat{S})\in\mathcal{P}$.

The proof is complete upon establishing the uniqueness of $\hat{\mathbb{Q}}$ on the nodes in $\mathcal{I}$. To this end, suppose by contradiction that there exists another pair $(\mathbb{Q},S)\in\mathcal{P}$ such that $H((\hat{\mathbb{Q}},\hat{S});X)=H((\mathbb{Q},S);X)$ and $\hat{\mathbb{Q}}(\nu')\neq\mathbb{Q}(\nu')$ for some $t'\in\mathcal{I}$ and $\nu'\in\Omega_{t'}$. The argument now proceeds along similar lines as above: take any $\epsilon\in(0,1)$, and use \eqref{eq:Qepsilon}--\eqref{eq:Sepsilon} to define a new pair $(\bar{\mathbb{Q}},\bar{S})\in\mathcal{P}$. This immediately leads to~\eqref{eq:HIe} and~\eqref{eq:convexity-Lambda}, noting in \eqref{eq:convexity-Lambda} that $\Lambda^{\hat{\mathbb{Q}}}_t(\nu')\neq\Lambda^{\mathbb{Q}}_t(\nu')$ gives
\[
\Lambda_{t'}^{\bar{\mathbb{Q}}}\ln\Lambda_{t'}^{\bar{\mathbb{Q}}} - \Lambda_{t'}^{\hat{\mathbb{Q}}}\ln\Lambda_{t'}^{\hat{\mathbb{Q}}}
< \epsilon \big(\Lambda_{t'}^{\mathbb{Q}}\ln\Lambda_{t'}^{\mathbb{Q}} - \Lambda_{t'}^{\hat{\mathbb{Q}}}\ln\Lambda_{t'}^{\hat{\mathbb{Q}}}\big) \text{ on }\nu'.
\]
Substituting into~\eqref{eq:def_HI}, it follows that
\begin{multline*}
 H((\bar{\mathbb{Q}},\bar{S});X) - H((\hat{\mathbb{Q}},\hat{S});X)
 \\
 \begin{aligned}
 &< \epsilon\ttotal{t\in\mathcal{I}}{}\tfrac{1}{\alpha_t}\mathbb{E}\big[\Lambda_t^{\mathbb{Q}}\ln\Lambda_t^{\mathbb{Q}}-\Lambda_t^{\hat{\mathbb{Q}}}\ln\Lambda_t^{\hat{\mathbb{Q}}}\big] + \epsilon\left(\mathbb{E}_{\mathbb{Q}}\big[X^b+X^sS_T\big] - \mathbb{E}_{\hat{\mathbb{Q}}}\big[X^b+X^s\hat{S}_T\big]\right) \\
 &= \epsilon(H((\mathbb{Q},S);X) - H((\hat{\mathbb{Q}},\hat{S});X)) = 0,
 \end{aligned}
\end{multline*}
in other words,
$
 H((\bar{\mathbb{Q}},\bar{S});X) < H((\hat{\mathbb{Q}},\hat{S});X).
$
This contradicts the assumption that $(\hat{\mathbb{Q}},\hat{S})$ is a solution to the optimization problem~\eqref{eq:K_I(X)}.
\end{proof}

\begin{proof}[Proof of Proposition \ref{prop:Optimal_Investment}]
The partial uniqueness property of $\hat{\mathbb{Q}}$ in Theorem~\ref{thm:opt_(QS)=opt(qx)} ensures that $\hat{x}$ is well defined and unique, irrespective of the minimiser $(\hat{\mathbb{Q}},\hat{S})$ chosen. Straightforward calculation and~\eqref{eq:V-ito-hat-lambda} also gives that
\[
 \ttotal{t=0}{T}\mathbb{E}\left[v_{t}(\hat{x}_t)\right] = \hat{\lambda}_u\ttotal{t\in\mathcal{I}}{}\tfrac{1}{\alpha_t}-\lvert\mathcal{I}\rvert = V(u).
\]
It then remains only to show that $\hat{x}\in\mathcal{A}_u$, and that $\hat{x}$ is the unique minimiser in~\eqref{eq:Problem 1'}. To this end, it suffices to show that any minimiser $\bar{x}\in\mathcal{A}_u$ in~\eqref{eq:Problem 1'} satisfies
\begin{equation}
v_{t}^{\ast}(\hat{\lambda}_{u}\Lambda_{t}^{\hat{\mathbb{Q}}})=\hat{\lambda}_{u}\Lambda_{t}^{\hat{\mathbb{Q}}}\bar{x}_{t}-v_{t}(\bar{x}_{t})\text{ for all }t,\label{eq:dual solution}
\end{equation}
where $v^\ast_t$ is the convex conjugate of $v_t$; see~\eqref{eq:vstar}. This system of equations has a unique solution in $\mathcal{N}$, namely $\hat{x}$. This means that $\bar{x}=\hat{x}$, which concludes the proof.

Let $\bar{x}\in\mathcal{A}_u$ be any minimiser in~\eqref{eq:Problem 1'}; its existence is guaranteed by Theorem~\ref{th:solution-exists}. Observing from \eqref{eq:def-of-Z-extra} that
$
 \ttotal{t=0}{T}\mathbb{E}_{\mathbb{Q}}[u_{t}^{b}+u_{t}^{s}S_{T}-\bar{x}_{t}]\leq0$  for all $(\mathbb{Q},S)\in\bar{\mathcal{P}},
$
it then follows from \eqref{eq:Lagrangian (scalar)} that
\[
L_{u}(\bar{x},\hat{\lambda}_u,(\mathbb{Q},S)) \le \sup_{\lambda\geq0,(\mathbb{Q},S)\in\bar{\mathcal{P}}}L_{u}(\bar{x},\lambda,(\mathbb{Q},S))=\ttotal{t=0}{T}\mathbb{E}\left[v_{t}(\bar{x}_t)\right]=V(u).
\]
Furthermore, as $(\hat{\lambda}_{u},(\hat{\mathbb{Q}},\hat{S}))$ maximises
\eqref{eq:problem 1' dual}, we have 
\[
L_{u}(\bar{x},\hat{\lambda}_u,(\mathbb{Q},S)) \ge \inf_{x\in\mathcal{N}}L_{u}(x,\hat{\lambda}_{u},(\hat{\mathbb{Q}},\hat{S}))=V(u).
\]
Taken together with \eqref{eq:prop:infL_dual}, this gives
\begin{align*}
L_{u}(\bar{x},\hat{\lambda}_{u},(\hat{\mathbb{Q}},\hat{S})) 
&= \inf_{x\in\mathcal{N}}L_{u}(x,\hat{\lambda}_{u},(\hat{\mathbb{Q}},\hat{S})) \\
&= \ttotal{t=0}{T}\big(-\mathbb{E}\big[v^\ast_t\big(\hat{\lambda}_u \Lambda^{\hat{\mathbb{Q}}}_t\big)\big] + \hat{\lambda}_u\mathbb{E}_{\hat{\mathbb{Q}}}\big[u^b_t + u^s_t\hat{S}_T\big]\big).
\end{align*}
Combining with \eqref{eq:Lagrangian (scalar)} and rearranging, we obtain
 \[
\ttotal{t=0}{T}\mathbb{E}\big[v_{t}^{\ast}(\hat{\lambda}_{u}\Lambda_{t}^{\hat{\mathbb{Q}}})+v_{t}(\bar{x}_{t})-\hat{\lambda}_u\Lambda_{t}^{\hat{\mathbb{Q}}}\bar{x}_{t}\big]=0.
\]
This is the sum of expectations of nonnegative random variables, and the conclusion is \eqref{eq:dual solution}.
\end{proof}

\begin{proof}[Proof of Proposition \ref{prop:shadow-price-properties}]
Item \ref{prop:shadow-price-properties:1}: Suppose that $\hat{y}\in\Psi$ solves \eqref{eq:Problem 1} in the friction-free model with price process $\hat{S}$ and it satisfies \eqref{eq:TradeAtSpread}. Then \eqref{eq:TradeAtSpread}
gives  
\[
\ttotal{t=0}{T}\mathbb{E}\big[v_{t}(\phi_{t}(\Delta\hat{y}_{t}+u_{t}))\big]=\ttotal{t=0}{T}\mathbb{E}\big[v_{t}(\Delta\hat{y}^b_{t}+u^b_{t} + (\Delta\hat{y}^s_{t}+u^s_{t})\hat{S}_t)\big]=V(u)
\]
by~\eqref{eq:EqualDisutilitty_ShadowPrice}. Thus $\hat{y}$ solves~\eqref{eq:Problem 1} in the market model
with bid-ask spread~$[S^{b},S^{a}]$.

Item \ref{prop:shadow-price-properties:2}: Suppose that $\hat{y}\in\Psi$ solves \eqref{eq:Problem 1} in the model with bid-ask spread $[S^b,S^a]$. Proposition \ref{prop:Optimal_Investment} guarantees that the optimisation problem \eqref{eq:Problem 1'} has a unique solution $\hat{x}\in\mathcal{N}$ with $\hat{x}_t=0$ for all $t\notin\mathcal{I}$, and
$
 \phi_t(\Delta\hat{y}_{t}+u_{t}) = \hat{x}_t \text{ for all }t.
$
It then follows from \eqref{eq:EqualDisutilitty_ShadowPrice} that
\begin{align}
 \ttotal{t=0}{T}\mathbb{E}\big[v_{t}(\phi_{t}(\Delta\hat{y}_{t}+u_{t}))\big]
 &=\inf_{y\in\Psi}\ttotal{t=0}{T}\mathbb{E}\big[v_{t}(\Delta y^b_{t}+u^b_{t} + (\Delta y^s_{t}+u^s_{t})\hat{S}_t)\big] \nonumber \\
 &=\ttotal{t=0}{T}\mathbb{E}\big[v_{t}(\Delta\hat{y}^b_{t}+u^b_{t} + (\Delta\hat{y}^s_{t}+u^s_{t})\hat{S}_t)\big], \label{eq:equality-y-shadow}
\end{align}
where the last equality comes from the fact that \eqref{def:phi} and $S^b_t\le \hat{S}_t\le S^a_t$ gives
\[
 \hat{x}_t = \phi_t(\Delta\hat{y}_{t}+u_{t}) \ge \Delta\hat{y}^b_{t}+u^b_{t} + (\Delta\hat{y}^s_{t}+u^s_{t})\hat{S}_t \text{ for all }t.
\]
This means that $\hat{y}$ solves~\eqref{eq:Problem 1} in the model with stock price process $\hat{S}$. 

Lack of arbitrage in the friction-free model with stock price process $\hat{S}$ implies that the results in this paper apply directly to that model. In particular, Proposition~\ref{prop:Optimal_Investment} guarantees that the optimisation problem \eqref{eq:Problem 1'} has a unique solution $\bar{x}\in\mathcal{N}$ with $\bar{x}_t=0$ for all $t\notin\mathcal{I}$. This means that
$
 \Delta\hat{y}^b_{t}+u^b_{t} + (\Delta\hat{y}^s_{t}+u^s_{t})\hat{S}_t = \bar{x}_t \text{ for all }t.
$
It immediately follows that
\[
 \phi_t(\Delta\hat{y}_{t}+u_{t}) = \hat{x}_t = \bar{x}_t = \Delta\hat{y}^b_{t}+u^b_{t} + (\Delta\hat{y}^s_{t}+u^s_{t})\hat{S}_t \text{ for all }t\notin\mathcal{I}.
\]
Suppose by contradiction that there exists some $t\in\mathcal{I}$ and $\nu\in\Omega_t$ such that $\hat{x}^\nu_t>\bar{x}^\nu_t$. Then $v_t(\hat{x}^\nu_t) > v_t(\bar{x}^\nu_t)$, so that
\[ \ttotal{t=0}{T}\mathbb{E}\big[v_{t}(\phi_{t}(\Delta\hat{y}_{t}+u_{t}))\big] > \ttotal{t=0}{T}\mathbb{E}\big[v_{t}(\Delta\hat{y}^b_{t}+u^b_{t} + (\Delta\hat{y}^s_{t}+u^s_{t})\hat{S}_t)\big]. \]
This contradicts \eqref{eq:equality-y-shadow}, and hence $\hat{y}$ satisfies \eqref{eq:TradeAtSpread}.
\end{proof}

\begin{proof}[Proof of Proposition \ref{prop:Constr-optimal-hedging}]
  Let $\process{J}{t}{0}{T}$ be the sequence of functions from Construction~\ref{constr:Jt} with $X=-\ttotal{t=0}{T}u_t$, and let $(\hat{\mathbb{Q}},\hat{S})$ be the pair from Construction~\ref{constr:optQS}. Recursive expansion of~\eqref{eq:QShat-and-Js} gives
\begin{align}
 J_t(\hat{S}_t)
 &= \mathbb{E}_{\hat{\mathbb{Q}}}\left[-\ttotal{s=0}{T}(u^b_s+u^s_s\hat{S}_T) + \ttotal{s=t}{T-1} a_{s+1}\ln\tfrac{\hat{q}_{s+1}}{p_{s+1}} \middle|\mathcal{F}_t\right] \text{ for all }t<T.\label{eq:recursive-J}
\end{align}
Let $\hat{x}$ be defined by \eqref{eq:xhat}. It follows from Remark \ref{remark:xhat} that
  \begin{equation} \label{eq:sum-xhat}
   \ttotal{t=0}{T} \hat{x}_t = \ttotal{t=0}{T-1} a_{t+1}\ln\tfrac{\hat{q}_{t+1}}{p_{t+1}} + \ttotal{t\in\mathcal{I}}{} \tfrac{1}{\alpha_{t}}\ln\tfrac{\hat{\lambda}_{u}}{\alpha_{t}} = \ttotal{t=0}{T-1} a_{t+1}\ln\tfrac{\hat{q}_{t+1}}{p_{t+1}} - J_0(\hat{S}_0).
  \end{equation}
  
  The first step in the proof is to show that the collection $\mathcal{W}_T$ in Construction \ref{alg:OptStockPosition} is non-empty. Theorem~\ref{th:solution-exists} guarantees the existence of a minimiser $\hat{y}\in\Psi$ for \eqref{eq:Problem 1}, and by Proposition \ref{prop:shadow-price-properties}\ref{prop:shadow-price-properties:2} it is also a minimiser in the friction-free model with stock price process $\hat{S}$. Combining this further with the uniqueness of $\hat{x}$, it follows that $\hat{y}$ satisfies \eqref{eq:TradeAtSpread} and
\begin{align}
 y_{-1}&=y_T=0, & \Delta y^b_t + u^b_t + (\Delta y^s_t + u^s_t)\hat{S}_t &= \hat{x}_{t}\text{ for all }t\ge0. \label{eq:yhat-minimiser}
\end{align}
The trading strategy $w\in\mathcal{N}^{2\prime}$ defined by
\begin{align} \label{eq:FlozStra_PaymentStra}
 w_{-1}&=0, & w_t &\coloneqq y_t + \ttotal{s=0}{t}(u^b_s-\hat{x}_s,u^s_s)\text{ for all }t=0,\ldots,T
\end{align}
satisfies 
\begin{align}
 (\Delta w^s_t)_+S^a_t - (\Delta w^s_t)_-S^b_t = \Delta w^s_t\hat{S}_t\text{ for all }t, & & w^s_T&=\ttotal{t=0}{T}u^s_t \label{eq:what-minimiser-trading}
\end{align}
by definition and by \eqref{eq:sum-xhat}
\begin{align}
 w^b_T = \ttotal{t=0}{T}u^b_t - \ttotal{t=0}{T-1} a_{t+1}\ln\tfrac{\hat{q}_{t+1}}{p_{t+1}} + J_0(\hat{S}_0). \label{eq:what-minimiser-final}
\end{align}
Moreover \eqref{eq:yhat-minimiser} gives the self-financing condition
\begin{equation} \label{eq:replicatePayoff_2}
 \Delta w^b_t + \Delta w^s_t \hat{S}_t = 0\text{ for all }t\ge0.
\end{equation}
Combining \eqref{eq:replicatePayoff_2} with the fact that $\hat{S}$ is a martingale under $\hat{\mathbb{Q}}$, it follows from standard arguments \cite[cf.][Th.~5.40]{cutland2012derivative} that
\begin{align} \label{eq:LinearEqForHedging}
   w^b_t+w^s_t\hat{S}_{t+1} &= \mathbb{E}_{\hat{\mathbb{Q}}}\big[w^b_T+w^s_T\hat{S}_T\big|\mathcal{F}_{t+1}\big]\text{ for all }t<T.
\end{align}
For every $t<T$, substituting \eqref{eq:what-minimiser-trading}, \eqref{eq:what-minimiser-final} and \eqref{eq:recursive-J} leads to
\begin{align*}
 w^b_t+w^s_t\hat{S}_{t+1} 
 &= \mathbb{E}_{\hat{\mathbb{Q}}}\left[\ttotal{s=0}{T}(u^b_s+u^s_s\hat{S}_T) - \ttotal{s=0}{T-1} a_{s+1}\ln\tfrac{\hat{q}_{t+1}}{p_{s+1}} + J_0(\hat{S}_0)\middle|\mathcal{F}_{t+1}\right] \\
 &= -J_{t+1}(\hat{S}_{t+1}) - \ttotal{s=0}{t} a_{s+1}\ln\tfrac{\hat{q}_{s+1}}{p_{s+1}} + J_0(\hat{S}_0).
 \end{align*}
After defining the stochastic process $\process{x^b}{t}{-1}{T}$ as
\[
 z^b_t \coloneqq 
 \begin{cases}
 0 & \text{if }t=-1,\\
 w^b_0 - J_0(\hat{S}_0) & \text{if }t=0,\\
 w^b_t + \ttotal{s=0}{t-1} a_{s+1}\ln\tfrac{\hat{q}_{s+1}}{p_{s+1}} - J_0(\hat{S}_0), &\text{if }t>0,
 \end{cases}
\]
this can be rewritten as
\begin{align*}
 z^b_t+w^s_t\hat{S}_{t+1} 
 &= -J_{t+1}(\hat{S}_{t+1}) - a_{t+1}\ln\tfrac{\hat{q}_{t+1}}{p_{t+1}}.
 \end{align*}
When combined with \eqref{eq:what-minimiser-trading}--\eqref{eq:what-minimiser-final}, this means that $(z^b_t,w^s_t)_{t=-1}^T\in\mathcal{W}_T$ and hence $\mathcal{W}_T\neq\emptyset$.

Now let $\mathcal{W}_T$ and $\mathcal{Y}$ be the collections of processes from Construction~\ref{alg:OptStockPosition}. By Proposition \ref{prop:shadow-price-properties}\ref{prop:shadow-price-properties:1} it suffices to show that every $\hat{y}\in\mathcal{Y}$ satisfies \eqref{eq:yhat-minimiser} and \eqref{eq:TradeAtSpread}, in other words, it minimises \eqref{eq:Problem 1} in the friction-free model with stock price process $\hat{S}$ and trades only at the spread. As $\hat{y}\in\mathcal{Y}$, there exists some $w\in\mathcal{W}_T$ satisfying \eqref{eq:constr:yhatb:Delta}--\eqref{eq:constr:yhats:Delta}. Taking the sum over all $t$ in \eqref{eq:constr:yhatb:Delta}--\eqref{eq:constr:yhats:Delta} and substituting \eqref{eq:sum-xhat} gives that $\hat{y}_T=0$. 
Turning to the properties of $w$, it satisfies \eqref{eq:what-minimiser-trading} by construction, which immediately gives \eqref{eq:TradeAtSpread}. Moreover,
\begin{equation} \label{eq:hatw-current}
 w^b_t + w^s_t\hat{S}_t=-J_t(\hat{S}_t) \text{ for all }t.
\end{equation}
For $t=T$ this comes from \eqref{eq:def_JT} and \eqref{eq:OpStockConditions-T}. For $t<T$ it is obtained by taking conditional expectation in \eqref{eq:OpStockEquation} with respect to $\hat{\mathbb{Q}}$ and $\mathcal{F}_t$, and substituting \eqref{eq:QShat-and-Js}. Combining \eqref{eq:hatw-current} with \eqref{eq:OpStockEquation} furthermore gives
\begin{equation}
 \Delta w^b_t + \Delta w^s_t\hat{S}_t = a_t\ln\tfrac{\hat{q}_t}{p_t} \text{ for all }t>0. \label{eq:Delta-w}
\end{equation}
The equalities \eqref{eq:hatw-current} for $t=0$ (recall $w_{-1}=0$) and \eqref{eq:Delta-w} for $t>0$ now combine with \eqref{eq:constr:yhatb:Delta}--\eqref{eq:constr:yhats:Delta} to give \eqref{eq:yhat-minimiser}, as required.
\end{proof}

\bibliographystyle{agsm}
\bibliography{regret_minimisation}

\end{document}